\newcounter{myctr}
\def\myitem{\refstepcounter{myctr}\bibfont\noindent\ifnum\themyctr>9\else\phantom{0}\fi\hangindent17pt\themyctr.\enskip}

\documentclass[a4paper, halfparskip,english, draft]{article}
\usepackage[english]{babel}
\usepackage{amsmath, amsthm, amssymb, bbm}
\usepackage{nicefrac}
\usepackage{varioref}

\usepackage[T2A]{fontenc}
\usepackage{hyperref}
\usepackage[super,sort,compress]{cite}
\usepackage{tikz}
\usepackage[utf8]{inputenc}
\usetikzlibrary{arrows}
\usetikzlibrary{plothandlers}
\usetikzlibrary{decorations,decorations.pathmorphing,decorations.markings}
\newtheorem{thm}{Theorem}
\newtheorem{pro}[thm]{Proposition}
\newtheorem{lem}[thm]{Lemma}
\newtheorem{cor}[thm]{Corollary}
\theoremstyle{definition}
	\newtheorem{definition}{Definition}
\theoremstyle{remark}

\begin{document}
\renewcommand{\a}{\alpha}
\renewcommand{\b}{\beta}
\newcommand{\g}{\gamma}
\renewcommand{\d}{\delta}
\newcommand{\e}{\varepsilon}
\newcommand\z{\zeta}
\renewcommand{\th}{\vartheta}
\renewcommand{\k}{\kappa}
\newcommand{\n}{\nu}
\newcommand{\no}{\omega}
\def\oo{\omega}
\renewcommand{\r}{\rho}
\newcommand{\s}{\sigma} 
\newcommand{\ph}{\varphi}
\renewcommand{\t}{\tau}
\newcommand{\Oo}{\Omega}
\newcommand{\Om}{\Omega}

\newcommand\mutilde{\widetilde\mu}
\def\mtilde{\widetilde m}
\newcommand\nutilde{\widetilde\nu}
\newcommand\Mtilde{\widetilde M}
\newcommand\Ntilde{\widetilde N}
\def\Mctilde{\widetilde{\cal M}}
\def\Deltatilde{\widetilde{\Delta}}
\def\Gammatilde{\widetilde{\Gamma}}
\def\Sigmatilde{\widetilde{\Sigma}}
\def\Deltahat{\widehat\Delta}
\def\Mhat{\widehat M}
\def\Mchat{\widehat{\cal M}}
\def\Nhat{\widehat N}
\def\Sigmahat{\widehat\Sigma}
\def\Gammahat{\widehat\Gamma}
\def\muhat{\widehat\mu}
\def\nuhat{\widehat\nu}
\def\Rbar{{\overline\RR}}
\def\toexp{\mathop{\exp}^{\longrightarrow}}
\def\Rp{{\RR\setminus\{0\}}}
\def\intRn{\int_{\RR\setminus\{0\}}}
\def\Qphi{Q_\ph}
\def\eiox{e^{i\oo x}}
\def\ant#1#2{\{#1[\one],#2\}}
\def\Pfi#1{(#1-\ph(#1)\cdot\one)}
\def\ket#1{|#1\rangle}
\def\bra#1{\langle#1|}
\def\supp{{\rm supp}}

\def\A{{\cal A}}
\def\B{{\cal B}}
\def\C{{\cal C}}
\def\D{{\cal D}}
\def\F{{\cal F}}
\def\G{{\cal G}}
\def\H{{\cal H}}
\def\K{{\cal K}}
\def\L{{\cal L}}
\def\M{{\cal M}}

\def\Cbar{\overline{\cal C}}
\def\Cb{C_b}
\def\Ltilde{\tilde{L}}
\def\CP{{\rm CP}}
\def\norm#1{\|#1\|}
\def\Norm#1{\left\|#1\right\|}
\def\norms#1{\|#1\|^2}
\def\Norms#1{\left\|#1\right\|^2}
\def\normfour#1{\|#1\|_4}
\def\trinorm#1{|||#1|||}

\def\sod#1{\sum_{#1=1}^d}
\def\sok#1{\sum_{#1=1}^k}
\def\sokn#1{\sum_{#1=1}^{k_n}}
\def\sol#1{\sum_{#1=1}^l}
\def\som#1{\sum_{#1=1}^m}
\def\son#1{\sum_{#1=1}^n}
\def\sn#1{\sum_{#1=0}^{n-1}}
\def\sm#1{\sum_{#1=0}^{m-1}}
\def\szi#1{\sum_{#1=0}^\infty}
\def\soi#1{\sum_{#1=1}^\infty}
\def\li#1{\lim_{#1\to\infty}}
\def\lid#1{\lim_{#1\downarrow0}}
\def\szn#1{\sum_{#1=0}^n}
\def\intR{\int_{-\infty}^\infty}
\def\intRp{\int_{\RR\setminus\{0\}}}
\def\wlim{\qopname\relax m{w{-}lim}}
\def\ddt{\frac d {dt}}
\def\Mdop{\A^{\rm op}}
\def\dth{{\textstyle\frac 1 d}}
\def\Ld{L^{\textnormal{diff}}}
\def\Lj{L^{\textnormal{jump}}}
\def\Lz{L^{\textnormal{zero}}}
\def\half{{\textstyle{\frac 12}}}
\def\third{{\textstyle{\frac 13}}}
\def\twothirds{{\textstyle{\frac 23}}}
\def\quarter{{\textstyle{\frac 14}}}

\def\Dom{{\rm Dom}}
\def\id{{\rm Id}}
\def\Ad{{\rm Ad}}
\def\Re{\mathop{\rm{Re}}}
\def\Im{\mathop{\rm{Im}}}
\def\tr{{\mathrm{tr}}}

\def\EE{{\mathbb E}}
\def\NN{{\mathbb N}}
\def\ZZ{{\mathbb Z}}
\def\RR{{\mathbb R}}
\def\CC{{\mathbb C}}
\def\tuple#1#2{#1_1,#1_2,\ldots,#1_{#2}}
\def\tupple#1#2{#1_1,\ldots,#1_{#2}}
\def\one{{\mathchoice {\rm 1\mskip-4mu l} {\rm 1\mskip-4mu l}{\rm 1\mskip-4.5mu l} {\rm 1\mskip-5mu l}}}
\def\to{\rightarrow}
\def\tto{\longrightarrow}
\def\ten{\otimes}
\def\inp#1#2{\langle#1,#2\rangle}
\def\Inp#1#2{\left\langle#1,#2\right\rangle}
\def\implies{\Longrightarrow}  
\def\and{\quad\hbox{and}\quad} 
\def\an#1{\quad\hbox{and #1}\quad} 
\def\twovector#1#2{\begin{pmatrix}#1\\#2\end{pmatrix}}
\def\twomatrix#1#2#3#4{\begin{pmatrix}#1&#2\\#3&#4\end{pmatrix}}
\def\twosvector#1#2{\left(\begin{smallmatrix}#1\\#2\end{smallmatrix}\right)}
\def\twosmatrix#1#2#3#4{\left(\begin{smallmatrix}#1&#2\\#3&#4\end{smallmatrix}\right)}
\def\set#1#2{\{#1|\;#2\;\}}


\title{CONTINUOUS OBSERVATION OF QUANTUM SYSTEMS}

\author{HANS MAASSEN}


\maketitle


\begin{abstract}
In a series of papers in the 1980's \cite{Hol1}\cite{Hol2}\cite{Hol3} Alexander Holevo proved a classification theorem for continuous quantum measurement processes,
or, as they would today be called, stationary quantum trajectories in continuous time.
His main tools were functional analytic in character: starting from a Bochner-type inequality he employed dilation techniques for positive definite
kernels. Here we give an alternative, more probabilistic proof: we use weak convergence of measures and employ L\'evy's Continuity Theorem.
We clarify the boundedness conditions in Holevo's theorem, and supply a simple example from quantum optics. 
\end{abstract}



\markboth{Hans Maassen}{Continuous Observation of Quantum Systems}

\section*{Introduction}	
The mathematical formulation of measurement in quantum mechanics has developed considerably since the first formulations of the theory
in the 1920's. 
The first step was the formulation by Max Born of the probability rule.
The state $\psi_t$ of a quantum system, a unit vector in its Hilbert space $\H$, normally develops under a group of unitary operators: $\psi_t=u_t\psi_0$. 
Propositions on a quantum system are described by orthogonal projection operators $p$ on $\H$,
and measurements by n-tuples $\tuple p n$ satisfying $p_1+p_2+\ldots+p_n=\one$.
When a measurement is made at time $t$, a random outcome is obtained according to the distribution
$(\norms{p_1\psi_t}, \norms{p_2\psi_t}, \ldots, \norms{p_n\psi_t})$.
The second step was taken by von Neumann: if we want to know what happens to the system after the measurement, 
we must start its evolution afresh in the state $p_i\psi_t/\norm{p_i\psi_t}$, the normalized projection of $\psi_t$ unto the subspace $p_i\H$ associated to the outcome $i$. 
These two rules, the Born rule and the von Neumann projection postulate,
for several decades dominated the description of quantum measurement.
But in the 1970's it was realized\cite{Hol85}\cite{DavLew70}\cite{Kraus71}\cite{Dav76}
that these rules are too narrow to describe what naturally goes on in laboratories.
Repeated measurements are already problematic, but the scheme fails miserably for measurement in continuous time,
where it leads to the "Zeno Paradox": the stroboscopic action of increasingly frequent von Neumann projections blocks the motion of the quantum system
under observation completely\cite{MisraSudarshan}.

\noindent
In order to make any progress, the measurement had to be "softened": it was proposed by, among others, Ludwig\cite{Lud54} and Holevo\cite{Hol11}
to replace the projections by arbitrary positive operators $\tuple m n$ on $\H$ having sum $\one$,
or even a positive operator-valued measure (POVM) on an infinite space of measurement outcomes.
This is the most general situation if we do not care what happens to the quantum system after the measurement,
but if we wish to preserve the system for further investigation,
then for every possible outcome a prescription is required how the state is to be changed if that outcome occurs.
This is done by a probability measure on the outcome space, taking values in the space of all operations on the quantum system.
Such measures were introduced by Davies\cite{Dav76} under the name of {\it instruments}.
Repeated quantum measurement is described by the iteration of instruments, yielding a sequence of outcomes.
The conditional expectations of propositions $p$, given the observed outcomes until time $t$, constitute a conditional state $\Theta_t$,
which, together with the observations themselves, form a Markov chain, known as a {\it quantum trajectory}. 
This approach was highly succesful, and a whole branch of literature\cite{Gisin84}\cite{Moel93}\cite{Car93} on quantum trajectories,
gave accurate descriptions of many physical experiments.
Their ergodic behaviour has been studied extensively\cite{Cresser} \cite{KMErg} \cite{Benoist} \cite{Benoist2} \cite{Benoist3}.

\noindent
Performing the measurements given by a pair of instruments, and {\it adding} their outcomes, leads to a new instrument, the {\it convolution} of the pair.
Finally, measurement in continuous time is described by a {\it convolution semigroup} of instruments\cite{BarLP83},
the object of study of this paper.
Examples of observation records described in this way are the counting of particles\cite{SrinDav81} \cite{KimbleMandel} by a detector,
or accumulative field measurement\cite{BarGreg09}.
It is these semigroups that are classified by Holevo's Theorem \cite{Hol1}\cite{Hol2}\cite{Hol3}(here Theorem \ref{ThmHolevo}),
which was presented in its final form around the turn of the century\cite{Hol4}\cite{Hol5}.
Outcomes were supposed to lie in $\RR^n$, later even in general Lie groups.\cite{BarHolLup93}
Here, for simplicity we shall consider outputs in $\RR$ only, but the extension to $\RR^n$ is straightforward.

\noindent
(Weakly continuous) convolution semigroups of instruments on the real line correspond to stochastic processes, which,
not unlike the needle of a seismograph, follow the events that are going on inside the quantum system.
The needle may make sudden jumps, indicating the occurrence of a quantum jump,
or jiggle in an irregular fashion like a Brownian particle, reflecting a diffusive motion of the quantum state.
It is the merit of Holevo's Theorem to indicate how every observation in continuous time
can thus be decomposed into jumps and jiggles.

\noindent
In preparing a treatment of Holevo's Theorem for our textbook\cite{Buch} with B. K\"ummerer, we developed the present presentation and proof. 

\medskip\noindent
This paper is structured as follows.
In Section \ref{QuantMeas} the above discussion is specified in more technical terms,
embedding quantum measurement into the category {\bf QP} of C*-algebras with completely positive unital maps\cite{Buch} \cite{KMThesis}, \cite{KMGreifswald}.
In Section \ref{WeakConv} a natural topology on instruments  is introduced: the topology of weak convergence of measures.
Section \ref{LevyCont} describes convolution of  instruments and Paul L\'evy's Continuity Theorem in the present context,
the main work horse of our approach,
leading to the characteristic exponents of convolution semigroups in Section \ref{ConvSem}.
A central example is treated in Section \ref{BasicDavies}, applied to the phenomenon of Resonance Fluorescence in Quantum optics\cite{KimbleMandel}.
In Section \ref{InfDivInstQuantDeFinetti} we give an analogue of De Finetti's characterization of characteristic exponents of convolution semigroups of probability measures on the line,
and connect these generalized exponents to stochastic measurement processes (the "needle") in Section \ref{Needle}.
After a short discussion of linear algebra on "superoperators" in Section \ref{LinAlg} we proceed to formulate and prove Holevo's Theorem in Section \ref{Holevo}.
The theorem provides a generalization, from the L\'evy-Khinchin formula for characteristic exponents of 
classical stochastic processes with stationary independent increments, to Markovian observation processes on the real line.

\noindent
In this new derivation we hit upon an improvement of the theorem in the following sense:
In the L\'evy-Khinchin formula the subtle part is the integrability condition on the jump sizes when an infinity of jumps occurs in a finite time.
In the quantum context {\it two} integrability conditions are required, as is made explicit in Holevo's later review papers\cite{Hol4}\cite{Hol5}.
Holevo's first condition concerns jump sizes in outcome space, like in the L\'evy-Khinchin formula.
We have replaced his, somewhat opaque, second condition by an integrability condition on jump sizes in state space.
In Section \ref{InfDivInstIntCond} we prove that our new condition is equivalent to Holevo's second integrability condition.

\section{Quantum Measurement}\label{QuantMeas}
{\bf The system:}
Let $d\ge1$, and
let $\A$ denote  be the C*-algebra $M_d$ of all complex $d\times d$-matrices, operators on the finite dimensional Hilbert space $\H=\CC^d$,
to be interpreted as the {\it algebra of observables} of a quantum system.
States on the quantum system are described by positive linear functionals $\r:\A\to\CC$ with $\r(\one)=1$.
{\it Events} or {\it propositions} concerning the quantum system are described by orthogonal projections $p=p^*=p^2\in\A$.
Interpretation: if the system is in de state $\r$, the probability for $p$ to hold is $\r(p)$.

\noindent{\bf The outcomes:}
Let $X$ be a compact metric space, the space of possible outcomes of measurements.
($X$ will be specified in Section \ref{Addition}.) 
By $C(X)$ we denote the unital C*-algebra of continuous complex functions on $X$.
Propositions concerning the outcome $x\in X$ are modelled by Borel sets, forming a $\s$-algebra $\B(X)$.
A probability measure $\mu$ on $X$ associates to every Borel set $B\in\B(X)$ its probability $\mu(B)$.
It defines a {\it state} $\mutilde$ on $C(X)$ by
   $$\mutilde(f):=\int_Xf(x)\mu(dx)\;.$$
In fact, by the Riesz Representation Theorem there is a one-to-one correspondence between Radon measures $\mu$ on $X$ and continuous linear
functionals $\mutilde$ on $C(X)$.
This theorem allows an abuse of notation, which we announce already at this early stage: for a Borel subset $B$ of $X$, 
the indicator function $1_B$ will usually not lie in $C(X)$;
nevertheless we will sometimes write $\mu(B)$ as $\mutilde(1_B)$.

\subsection{Destructive quantum measurement: POVM's}\label{POVM}
Let $\A_{+}$ denote the cone of positive elements of $\A$.
By a {\it destructive measurement on the system $\A$ yielding values in $X$} we mean a measure $m:\B(X)\to\A_{+}$ with $m(X)=\one$,
often called {\it positive operator valued measure} (POVM)\cite{Kraus71}.
Interpretation: if the quantum system is in the state $\r$, then an outcome $x\in B$ is obtained with probability $\r(m(B))$.
After the measurement the system need no longer be available.
In particle-counting experiments the particles being counted undergo such a measurement.

\bigskip
\begin{tikzpicture}
\draw (-2,0) node {\phantom{Initial}};
\draw (-0.3,0) node {$\mtilde(f)=\int_X f(x)m(dx) \leftarrow\phantom{xxxxxxxxxxxxxxxxxxxx}$};
\draw (1,0.3) node {$\A$};
\draw[thick,decorate,decoration={snake}] (0,0) -- (2,0);
\draw[thick] (2,-0.5) rectangle (3,0.5);
\draw (2.5,0) node {$\mtilde$};
\draw[thick] (3,0) -- (5,0) node[right] {$\leftarrow f$};
\draw (4,0.3) node {$C(X)$};
\end{tikzpicture}

\noindent
A POVM is called a {\it von Neumann measurement} if all the values $m(B)$ with $B\in\B(X)$ are projection operators.

\noindent
Note that $\A$ and $C(X)$ are unital C*-algebras, and $\mtilde$ is a completely positive map.
It is thus a {\it morphism} in the category {\bf QP} of C*-algebras with unital completely positive maps. 
Diagrams like the above are known as {\it graphical diagrams} in Category Theory\cite{Selinger}.
When read from right to left, such a diagram acts on observables, in the "Heisenberg picture".
When read from left to right, in the "Schr\"odinger picture", which is the more "physical" direction,
it acts on a state $\r$, to yield a probability distribution $\mtilde^*(\r): B\mapsto\r\bigl(m(B)\bigr)$ on $X$ as its output.

\subsection{Non-destructive quantum measurement: instruments}\label{NDQuantMeas}
Let $\L(\A)$ denote the space of all linear maps $\A\to\A$ ("superoperators"),
which we consider as a C*-algebra of operators on the Hilbert space $(\A,\inp\cdot\cdot)$,
where the inner product is given by $\inp a b:=\frac1d\tr(a^*b)$. (Cf. Section \ref{LinAlg}.)
Let $CP(\A)\subset\L(\A)$ be the cone of completely positive operators on $\A$.
\begin{definition}[Instrument]\label{DefInstr}
By a {\it completely positive measure} on $X$ we mean a $\s$-additive function $M$ from $\B(X)$ to $\CP(\A)$.
Such a measure is called an {\it instrument} if $M(X)[\one]=\one$.
\end{definition}

\noindent
Interpretation: after the measurement, described by the instrument $M$, is performed on the quantum system in the state $\r$,
the probability for the outcome to lie in $B\in\B(X)$ and the event $p\in\A$ to occur is $\r\bigl(M(B)[p]\bigr)$.

\noindent
We state without proof a simple extension of the Riesz Representation Theorem.
Since we wish to apply it to $\A$-valued measures as well as to $\L(\A)$-valued measures,
we formulate it in terms of a general finite dimensional normed vector space $E$ over $\CC$.
We shall denote general $E$-valued measures by $\mu$, $\nu$, often switching to capital letters $M$, $N$ if they are $\L(\A)$-valued, or even $\CP(\A)$-valued.

\begin{pro}[Riesz Representation Theorem]\label{Riesz}
Let $X$ be a compact metric space, and $E$ a normed vector space of finite dimension.
There is a one-to-one correspondence between continuous functionals $\mutilde:C(X)\to E$ and $E$-valued measures $\mu$ on $X$, given by
   $$\mutilde(f)=\int_X f(x)\mu(dx)\;.$$
\end{pro}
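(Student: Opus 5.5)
The plan is to reduce the statement to the scalar Riesz Representation Theorem, which relates positive (or bounded) linear functionals on $C(X)$ to finite (complex) Radon measures on $X$. Since $X$ is a compact metric space, every finite Borel measure on $X$ is automatically regular, so "measure" and "Radon measure" coincide and no regularity issues arise. We also read "continuous functional $\mutilde:C(X)\to E$" as a bounded \emph{linear} map, and "$E$-valued measure" as a $\s$-additive map $\mu:\B(X)\to E$. In finite dimensions $\s$-additivity may be taken with respect to any norm, since all norms on $E$ are equivalent.

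First fix a basis $e_1,\ldots,e_n$ of $E$ with dual basis $e_1^*,\ldots,e_n^*$; these coordinate functionals are automatically continuous because $\dim E<\infty$.

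\textbf{From functionals to measures.} Given a bounded linear $\mutilde:C(X)\to E$, each composition $e_j^*\circ\mutilde$ is a bounded linear functional $C(X)\to\CC$. The scalar Riesz theorem gives a unique complex Borel measure $\mu_j$ on $X$ with $e_j^*\bigl(\mutilde(f)\bigr)=\int_X f\,d\mu_j$. We then define
$$\mu(B):=\sum_{j=1}^n \mu_j(B)\,e_j\;.$$
This is $\s$-additive in $E$ because each coordinate is. It is bounded because complex measures have finite total variation. Integrating coordinatewise gives $\int_X f\,d\mu=\mutilde(f)$.

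\textbf{From measures to functionals.} Given an $E$-valued measure $\mu$, its coordinates $\mu_j=e_j^*\circ\mu$ are complex measures and hence of finite total variation. Define $\int_X f\,d\mu:=\sum_j\bigl(\int_X f\,d\mu_j\bigr)e_j$. This map is linear in $f$ and bounded by $C\,\norm f_\infty$ with $C=\sum_j|\mu_j|(X)\,\norm{e_j}$. It is also basis independent: for any continuous functional $\phi$ on $E$, the function $\phi\circ\mu$ is a complex measure whose integral equals $\phi$ applied to the vector integral.

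\textbf{Bijectivity.} Injectivity of $\mu\mapsto\mutilde$ reduces coordinatewise to the uniqueness part of the scalar theorem, using the regularity noted above. Surjectivity is exactly the first construction.

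The only step requiring care is the scalar theorem for \emph{complex}, not merely positive, functionals. If only the positive version is assumed, we decompose a bounded functional into real and imaginary parts, and each real part into a difference of positive functionals via the Jordan decomposition of bounded functionals on $C(X)$. Apart from this, the argument is bookkeeping, since finite dimensionality makes every topology on $E$ coincide.
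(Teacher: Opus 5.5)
The paper states this proposition without proof, calling it a ``simple extension'' of the scalar Riesz theorem, and your coordinatewise reduction through a basis $e_1,\ldots,e_n$ of $E$ and its dual basis is exactly that extension; it is correct. You also correctly identify and handle the three points that need checking: finite Borel measures on a compact metric space are automatically regular, complex measures have finite total variation, and the vector-valued integral is independent of the choice of basis.
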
\noindent
In the case that $E=\L(\A)$ and $\mu=:M$ is an instrument,
the map $\Mtilde$ can be viewed as a completely positive map $C(X)\ten\A\to\A:f\ten a\mapsto\Mtilde(f)[a]$.
The normalization of $M$ says that $\Mtilde$ is unital: $\Mtilde(1\ten\one)=\one$.
So again $\Mtilde$ is a morphism in the category {\bf QP} mentioned above.
The positive operator $M(B)[p]$ alluded to above is equal to $\Mtilde(1_B\ten p)$ (here we ignore  the fact that $1_B\notin C(X)$).
In a diagram:

\bigskip
\begin{tikzpicture}
\draw (-2,0) node {\phantom{Initial}};
\draw (-0.3,0) node {$M(B)[p]\leftarrow\phantom{xxxxxxxxx}$};
\draw[thick,decorate,decoration={snake}] (0,0) -- (2,0);
\draw[thick] (2,-0.5) rectangle (3,0.5);
\draw (2.5,0) node {$\Mtilde$};
\draw[thick,decorate,decoration={snake}] (3,.25) -- (5,.25) node[right,black] {$\leftarrow p$};
\draw[thick] (3,-0.25) -- (5,-0.25) node[right,black] {$\leftarrow 1_B$};
\end{tikzpicture}

\noindent
Read in the Schr\"odinger picture, this diagram takes a state $\r$ as input, and produces a state $\Mtilde^*(\r)$ on $C(X)\ten\A$ as output:
   $$\Mtilde^*(\r)(1_B\ten p)=\r\bigl(M(B)[p]\bigr)\;.$$
Straight lines represent commutative C*-algbras, wiggly lines non-commutative ones.
Parallel lines stand for tensor products. 

\section{Weak Convergence}\label{WeakConv}
The main tool of our proof of Holevo's representation theorem will be weak convergence of $\L(\A)$-valued and $\A$-valued measures.

\noindent
By the {\it weak* topology} on the space $\Mctilde$ of continuous functionals $C(X)\to E$ we shall mean the weakest topology that makes
the functional $\Mtilde\mapsto\Mtilde(f)$ continuous for all $f\in C(X)$.
Note that we deviate here slightly from the usual convention, where the weak* topology is defined on the dual of a Banach space,
the space of bounded linear functionals $C(X)\to\CC$.
Here we consider the "$E$-dual" of $C(X)$.
Using the Riesz Representation Theorem we may carry over this topology to the space of $E$-valued measures on $X$,
where probabilists usually call it "the topology of weak convergence".

\begin{definition}[Weak Convergence on Compact $X$]\label{DefWeakConvX}
A sequence $(\mu_n)$ of $E$-valued  measures on $X$ {\it converges weakly} to an $E$-valued measure $\mu$ if for all $f\in C(X)$ we have $\mutilde_n(f)\tto\mutilde(f)$, i.e.:
   $$\li n \int_X f(x)\mu_n(dx)=\int_X f(x)\mu(dx)\;.$$  
\end{definition}

\noindent
We state without proof a simple extension of the sequential Banach-Alaoglu Theorem:

\begin{pro}[Banach-Alaoglu]\label{BanachAlaoglu}
Let $X$ be a compact metric space, and $E$ a finite dimensional normed space.
Then the unit ball in the space of continuous functionals $C(X)\to E$ is sequentially weak* compact.

\noindent
In particular, via the Riesz Representation in Proposition \ref{Riesz},
every bounded sequence $(\mu_n)_{n\in\NN}$ of $E$-valued measures on a compact metric space $X$ has a weakly convergent subsequence.
\end{pro}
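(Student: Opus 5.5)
The plan is to reduce to the scalar case by choosing a basis, and then use separability of $C(X)$ together with a diagonal argument.

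\emph{Reduction to scalars.} Fix a basis $e_1,\ldots,e_k$ of $E$ ($k=\dim E$) with dual coordinate functionals $\phi_1,\ldots,\phi_k$. All norms on $E$ are equivalent, so there is a constant $c$ with $|\phi_j(v)|\le c\norm v$. For a continuous functional $\mutilde:C(X)\to E$ the maps $\phi_j\circ\mutilde$ are then bounded scalar functionals with norm at most $c\norm{\mutilde}$. Convergence $\mutilde_n(f)\to\mutilde(f)$ in $E$ is equivalent to convergence of all $k$ coordinates. Hence it suffices to handle a bounded sequence of scalar functionals $\lambda_n$ on $C(X)$ (applied coordinatewise, or to the $\CC^k$-valued tuple all at once) and show that it has a weak* convergent subsequence whose limit stays in the ball.

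\emph{Diagonal argument.} Since $X$ is a compact metric space, $C(X)$ is separable. One way to see this is Stone--Weierstrass applied to the countable algebra over $\mathbb Q+i\mathbb Q$ generated by $\one$ and the functions $x\mapsto d(x,x_i)$, where $(x_i)$ is a countable dense subset of $X$. Let $(f_m)$ be a countable dense subset. For each $m$ the sequence $\mutilde_n(f_m)$ is bounded in the finite-dimensional space $E$, so Bolzano--Weierstrass gives a convergent subsequence. Taking nested subsequences and passing to the diagonal yields a subsequence $(\mutilde_{n_l})$ such that $\mutilde_{n_l}(f_m)$ converges for every $m$.

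\emph{Extension to all of $C(X)$.} Suppose $\norm{\mutilde_n}\le1$. For arbitrary $f$ and $\e>0$, pick $f_m$ with $\norm{f-f_m}_\infty<\e$. Then $\norm{\mutilde_{n_l}(f)-\mutilde_{n_l}(f_m)}<\e$ uniformly in $l$, so $\mutilde_{n_l}(f)$ is Cauchy and converges to a limit, which we call $\mutilde(f)$. The limit $\mutilde$ is linear, and it satisfies $\norm{\mutilde(f)}\le\norm f_\infty$, so it is continuous and lies in the unit ball. This proves sequential weak* compactness of the ball.

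\emph{Measure-theoretic form.} By the Riesz correspondence of Proposition \ref{Riesz}, a bounded sequence of $E$-valued measures gives a bounded sequence of functionals; after rescaling it lies in the unit ball. The limit functional corresponds to an $E$-valued measure $\mu$, and the convergence obtained is exactly weak convergence in the sense of Definition \ref{DefWeakConvX}.

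The only point needing care is the norm bookkeeping. One must fix a norm on functionals $C(X)\to E$, namely the operator norm with respect to the given norm on $E$, and interpret ``bounded'' for measures through the Riesz correspondence, for example by total variation, which is equivalent. Finite dimensionality of $E$ makes all of this harmless: every estimate holds up to constants that depend only on $k$.
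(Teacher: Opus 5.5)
Your argument is correct. The paper gives no proof here; it only calls the statement a simple extension of the sequential Banach--Alaoglu theorem. What you supply is exactly the standard proof of that theorem:
\begin{itemize}
\item separability of $C(X)$ via Stone--Weierstrass;
\item diagonal extraction on a countable dense set;
\item extension to all of $C(X)$ using the uniform bound $\norm{\mutilde_n}\le1$.
\end{itemize}
Bolzano--Weierstrass in the finite-dimensional space $E$ replaces compactness of discs in $\CC$, so your preliminary reduction to scalar coordinates is not actually needed.

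The one point worth making explicit is the choice of norm. You show that the operator-norm ball is sequentially closed. If ``unit ball'' is instead meant for the total variation of Definition~\ref{DefTotVar} (read on $X$), the limit still lies in it. This is because $|\mu|(X)$ is a supremum, over continuous $f=\sum_j f_je_j^*:X\to E^*$ with $\norm{f(x)}_{E^*}\le1$, of the weak*-continuous maps $\mutilde\mapsto\Re\sum_j\inp{e_j^*}{\mutilde(f_j)}$. Hence it is weak* lower semicontinuous. For the ``in particular'' clause either norm works, by the equivalence you note.
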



\subsection{Addition of outcomes}\label{Addition}
In classical probability infinite sequences of --- for example --- coin tosses are described by sequences of independent identically distributed
0-1-valued random variables, described by a product measure on $\{0,1\}^\NN$.
However, the usual transition to continuous time is {\it not} obtained by considering product measures on $\{0,1\}^\RR$.
This approach would run into problems of measurability.
Instead, as is well-known, the sequences of 0's and 1's from the discrete case are first added up, and only then a proper continuum limit is taken.
This leads to the rich theory of stochastic processes such as Brownian motion, Poisson processes, and more generally, L\'evy processes with stationary independent increments.

\noindent
So in order to make progress on quantum measurement in continuous time we must have {\it addition} of measurement outcomes
to our disposal.
In fact a version of Holevo's theory, where $X$ is an arbitrary Lie group, has been developed\cite{BarHolLup93}.
Here we shall take for $X$ the real numbers.
Simple as this case may seem, we {\it do} meet a difficulty: 
the compactness of $X$, leading to the separability of $C(X)$, was an essential condition for the existence of convergent
subsequences in Proposition \ref{BanachAlaoglu} that we are going to rely on.
But $\RR$ is not compact.
The solution, usually taken in the probability literature, is to manage {\it tightness}.
We may formulate this as follows:
We take for $X$ the one-point compactification $\Rbar:=\RR\cup\{\infty\}$ of $\RR$.
We work on $C(\Rbar)$.
But in practice we are not interested in the outcome $\infty$, so we shall take care that our measures stay away from this point.
We shall have $\mu(\{\infty\})=0$.\ 
Let $C_b(\RR)$ denote the space of all bounded continuous functions $\RR\to\CC$.
In addition to Definition \ref{DefWeakConvX} we shall be using:

\begin{definition}[Weak Convergence on $\RR$]\label{WeakConvR}
A sequence $(\mu_n)$ of measures on $\RR$ {\it converges weakly} to a measure $\mu$ {\it on $\RR$} if for all $f\in C_b(\RR)$ we have
   $$\li n \intR f(x)\mu_n(dx)=\intR f(x)\mu(dx)\;.$$  
\end{definition}

\noindent
{\bf Remark:}
For measures $\mu$ with $\mu(\{\infty\})=0$, weak convergence on $\RR$ is {\it stronger} than weak convergence on $\Rbar$. 
Without explicit mention of the domain ($\RR$ or $\Rbar$), we shall mean weak convergence on $\RR$ as in Defintion \ref{WeakConvR}.

\noindent
In order to bridge the gap between $C(\Rbar)$ and $C_b(\RR)$, the following notion is introduced:

\subsection{Tightness}\label{tightness}
\begin{definition}[Total Variation]\label{DefTotVar}
By the {\it total variation measure} of  an $E$-valued measure $\mu$ on $\RR$ we mean the positive Radon measure $|\mu|$ given by
   $$|\mu|(B):=\sup_f\Re\int_B\inp{f(x)}{\mu(dx)}\;,$$
for all $B\in\B(\RR)$,
where the supremum is to be taken over all continuous functions $f:\RR\to E^*$ with $\norm{f(x)}_{E^*}\le1$ for all $x\in\RR$.
\end{definition}\noindent

\noindent
For the case $E=\L(\A)$, and a completely positive measure $M$, the total variation takes a particularly nice form.
Let $\b$ be the "maximally entangled vector" in $\CC^d\ten\CC^d$:
   $$\b:=\frac1{\sqrt d}\sod i e_i\ten e_i\;,$$
$(e_i)$ being the canonical orthonormal basis of $\CC^d$.
Then the {\it Choi-Jamio\l kowski isomorphism}\cite{Choi75} provides a one-to-one correspondence between linear maps $A:\A\to \A$ and complex functionals 
$\oo_A:\A\ten \A\to\CC$, given by
    $$\oo_A(x\ten y):=\inp{\b}{(Ax\ten y)\b}\;.$$
An element $\a$ in the dual of $\L(\A)$ thus gets mapped to an element $z_\a$ of the second dual of $\A\ten \A$, which is $\A\ten \A$ itself, with the action:
\begin{equation}
\inp \a A=\inp {z_\a}{\oo_A}=\oo_A(z_\a)=\inp\b{(A\ten\id)[z_\a]\b}\;.
\end{equation}
A great advantage of the Choi isomorphism is, that $A$ is completely positive if and only if $\oo_A$ is positive\cite{Choi75} \cite{Buch}.
We obtain a natural norm on $\L(\A)$  by putting
   $$\norm A:=\norm{\oo_A}=\sup\bigl\{|\oo_A(z)|\big|z\in \A\ten \A,\norm z\le1\bigr\}\;.$$
In terms of the density matrix $\r_A$ of $\oo_A$ this is the {\it trace norm} $\tr\sqrt{\r_A^*\r_A}$.

\smallskip\noindent{\bf Warning:}
In Section \ref{NDQuantMeas} we introduced on $\L(\A)$ the structure of a C*-algebra, see also Section \ref{LinAlg}.
This induces another norm, and also a different notion of positivity on $\L(\A)$.
\begin{lem}\label{ThmTotVarM}
Let $M$ be a completely positive $\L(\A)$-valued measure on $\RR$. Then, for all Borel subsets $B$ of $\RR$,
   $$|M|(B)=\frac1d\tr\bigl(M(B)[\one]\bigr)\le\norm{M(B)[\one]}\;.$$
In particular, for an instrument $M$ we have $|M|(\RR)=1$.
\end{lem}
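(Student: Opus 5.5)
The plan is to compute the total variation using the Choi--Jamio\l kowski picture, where complete positivity becomes ordinary positivity. By the definition of total variation, and the pairing $\inp\a A=\inp\b{(A\ten\id)[z_\a]\b}$, the dual norm of $\a$ is $\norm{z_\a}$, the operator norm in $\A\ten\A$. This is because $\norm A=\norm{\oo_A}$ is the trace norm of the density matrix of $\oo_A$, and the dual of the trace norm is the operator norm. So the total variation is
$$|M|(B)=\sup_z\Re\int_B\oo_{M(dx)}(z(x)),$$
where the supremum runs over continuous $z:\RR\to\A\ten\A$ with $\norm{z(x)}\le1$ for all $x$.

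\emph{Upper bound.} Since $M$ is completely positive, each $\oo_{M(C)}$ is a positive functional on $\A\ten\A$. For a positive functional $\oo$ and $\norm z\le1$ we have $|\oo(z)|\le\oo(\one\ten\one)$. I would apply this locally: approximate $z$ uniformly on $B$ by simple functions (or use that $C\mapsto\oo_{M(C)}$ is a positive functional-valued measure). This gives
$$\Re\int_B\oo_{M(dx)}(z(x))\le\oo_{M(B)}(\one\ten\one).$$

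\emph{Lower bound.} The constant function $z\equiv\one\ten\one$ is admissible, so the two bounds combine to
$$|M|(B)=\oo_{M(B)}(\one\ten\one).$$
Note that the supremum is over continuous functions, while $B$ is an arbitrary Borel set; the constant choice sidesteps this, since the integral is only over $B$.

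\emph{Evaluation.} Next I compute $\oo_A(\one\ten\one)=\inp\b{(A[\one]\ten\one)\b}$. Using $\b=d^{-1/2}\sum_i e_i\ten e_i$, this equals
$$\frac1d\sum_{i,j}\inp{e_i}{A[\one]e_j}\inp{e_i}{e_j}=\frac1d\tr(A[\one]).$$
With $A=M(B)$ this gives $|M|(B)=\frac1d\tr\bigl(M(B)[\one]\bigr)$. The inequality $\frac1d\tr(P)\le\norm P$ holds for any positive matrix $P$, here $P=M(B)[\one]\ge0$, since its $d$ eigenvalues are each at most $\norm P$. For an instrument, $M(\RR)[\one]=\one$, so $|M|(\RR)=\frac1d\tr\one=1$.

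The main obstacle is the upper bound. It needs two facts: the dual norm identification (that the dual of the Choi trace norm is the operator norm on $\A\ten\A$), and passing the pointwise bound $|\oo(z)|\le\oo(\one\ten\one)$ through the integral with $x$-dependent $z$. For the latter I would write $\oo_{M(dx)}=\oo_{\mu}$-density times the scalar measure $\lambda(C):=\oo_{M(C)}(\one\ten\one)$. Each $\oo_{M(C)}$ is dominated by $\lambda(C)$, so $M$ is absolutely continuous with respect to $\lambda$, and the Radon--Nikodym density $\rho(x)$ is a state $\lambda$-a.e. Then
$$\left|\int_B\rho(x)(z(x))\,\lambda(dx)\right|\le\lambda(B).$$
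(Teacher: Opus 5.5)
Your proof is correct and follows the paper's route. You identify the dual of $\L(\A)$ with $\A\ten\A$ via Choi--Jamio\l kowski, use positivity of each $\oo_{M(C)}$ to see that the supremum is attained at the constant function $\one\ten\one$, and evaluate $\inp\b{(M(B)[\one]\ten\one)\b}=\frac1d\tr\bigl(M(B)[\one]\bigr)$. The paper only asserts two steps that you justify: the dual-norm identification, and passing the pointwise bound $|\oo(z)|\le\oo(\one\ten\one)$ through the integral for $x$-dependent $z$ (your Radon--Nikodym or simple-function argument).
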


\begin{proof}
\begin{eqnarray*}
|M|(B)&=&\sup_f\Re\int_B\inp{f(x)}{M(dx)}=\sup_f\Re\int_B\inp{\b}{\bigl(M(dx)\ten\id\big)[f(x)]\b}\;,
\end{eqnarray*}
where $f$ runs over the continuous functions from $\RR$ to the unit ball in $\L(\A)^*\cong\A\ten \A$.
Since $M$ is completely positive, the functional $z\mapsto\inp\b{\bigl(M(C)\ten\id\bigr)[z]\b}$ is positive for each $C\in\B(\RR)$,
hence  takes its maximum over the unit ball in $\one\ten\one$.
Therefore we may take $f(x)=\one\ten\one$:
\begin{eqnarray*}
|M|(B)&=&\inp\b{\bigl(M(B)\ten\id\bigr)[\one\ten\one]\b}\\
          &=&\frac1d\sod{i,j}\Inp{e_i\ten e_i}{\bigl(M(B)[\one]\ten\one\bigr) e_j\ten e_j}=\frac1d\tr\bigl(M(B)[\one]\bigr)\;,
\end{eqnarray*}
and, of course, the average of the (positive) eigenvalues of $M(B)[\one]$ is less than their maximum $\norm{M(B)[\one]}$.
Finally, if $M$ is an instrument,
   $$|M|(\RR)=\frac 1d\tr\bigl(M(\RR)[\one]\bigr)=\frac 1d\tr\one=1\;.$$
\end{proof}

\begin{definition}\label{DefTight}
A bounded sequence $(\mu_n)_{n\in\NN}$ of $E$-valued measures on $\RR$ is called {\it tight} if for all $\e>0$ there is a $K>0$ with the property that
   $$\forall_{n\in\NN}:\quad |\mu_n|\bigl(\RR\setminus[-K,K]\bigr)\le\e\;.$$
\end{definition}

\noindent
Every measure $\mu$ on $\RR$ will be considered as a measure on $\Rbar$ by putting $\mu(\{\infty\})$ $:=0$.

\begin{lem}\label{LemTussen}
Let $(\mu_n)_{n\in\NN}$ be a tight sequence of $\L(\A)$-valued measures on $\RR$ which converges weakly on $\Rbar$ to a measure $\mu$ on $\Rbar$.
Then $\mu(\{\infty\})=0$ and $\mu_n$ converges weakly to $\mu$ on $\RR$.
\end{lem}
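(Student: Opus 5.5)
The plan is to use continuous cut-off functions to move between $C(\Rbar)$ and $C_b(\RR)$, with tightness controlling the error. Fix $\e>0$ and let $K$ be as in Definition \ref{DefTight}. Choose a continuous $\chi_K:\Rbar\to[0,1]$ with $\chi_K=1$ on $[-K,K]$, $\chi_K=0$ outside $[-K-1,K+1]$ (in particular $\chi_K(\infty)=0$). Since the sequence is bounded, set $C:=\sup_n|\mu_n|(\RR)<\infty$.

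\emph{Step 1: $\mu(\{\infty\})=0$.} Take $h:=1-\chi_K\in C(\Rbar)$, so $h(\infty)=1$ and $0\le h\le 1_{\Rbar\setminus[-K,K]}$. For any functional $\a$ in the unit ball of $E^*$, weak convergence on $\Rbar$ gives $\inp\a{\int h\,d\mu}=\lim_n\inp\a{\int h\,d\mu_n}$, and $|\inp\a{\int h\,d\mu_n}|\le\int h\,d|\mu_n|\le|\mu_n|(\RR\setminus[-K,K])\le\e$. Thus $\norm{\tilde\mu(h)}\le\e$ for every such $h$. To isolate the atom at $\infty$, apply the same bound to $h_m:=1-\chi_{K_m}$ with $K_m\to\infty$. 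Then $h_m\downarrow 1_{\{\infty\}}$ pointwise and boundedly, so dominated convergence for the finite measure $|\mu|$ gives $\tilde\mu(h_m)\to\mu(\{\infty\})$. Choosing $K_m$ to be the tightness constant for $\e_m\downarrow0$ (larger $K$ only helps) yields $\norm{\mu(\{\infty\})}\le\lim\e_m=0$.

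\emph{Step 2: convergence against $f\in C_b(\RR)$.} Split $f=f\chi_K+f(1-\chi_K)$. The function $f\chi_K$ extends continuously to $\Rbar$ with value $0$ at $\infty$, so $\int f\chi_K\,d\mu_n\to\int f\chi_K\,d\mu$ by hypothesis. For the tail, $\norm{\int f(1-\chi_K)d\mu_n}\le\norm f_\infty|\mu_n|(\RR\setminus[-K,K])\le\norm f_\infty\e$. The analogous bound for $\mu$ requires $|\mu|(\RR\setminus[-K,K])\le\e$.

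The main obstacle is this last bound, transferring tightness to the limit, since total variation is only lower semicontinuous under weak convergence. I would handle it through lower semicontinuity on open sets: for the open set $U=\{|x|>K\}\cup\{\infty\}$ in $\Rbar$, take any continuous $g:\Rbar\to E^*$ with $\norm{g}\le1$ and support in $U$. Then $\Re\int\inp{g}{d\mu}=\lim_n\Re\int\inp g{d\mu_n}\le\limsup_n|\mu_n|(U)\le\e$. Taking the supremum over such $g$, and using regularity of $|\mu|$ to approximate by compactly supported functions inside $U$, gives $|\mu|(U)\le\e$. (Some care is needed with the definition of $|\mu|$ via continuous $g$ on all of $\RR$, which is restricted here to open sets by Urysohn/Lusin-type approximation.)

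Combining the three estimates gives $\limsup_n\norm{\int f\,d\mu_n-\int f\,d\mu}\le2\e\norm f_\infty$ for every $\e>0$, hence the claimed convergence. Incidentally, this bound also gives Step 1 directly, since $\norm{\mu(\{\infty\})}\le|\mu|(U)\le\e$ for all $\e$.
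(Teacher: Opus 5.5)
Your argument is correct and follows the same overall route as the paper: tightness forces $\mu(\{\infty\})=0$, and then a continuous cut-off of $f$ plus tail estimates gives a $3\e$ argument, with your $f\chi_K$ playing the role of the paper's $h$. Your Step~1 tests against the continuous functions $1-\chi_{K_m}$ and uses dominated convergence. This is a cleaner justification than the paper's, which passes the weak limit through integrals over the non-closed set $\Rbar\setminus[-K,K]$.

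The bound you single out as the main obstacle is avoidable, and the lower-semicontinuity detour is not needed. Once $|\mu|(\{\infty\})=\norm{\mu(\{\infty\})}=0$, continuity from above of the finite measure $|\mu|$ lets you simply enlarge $K$ so that $|\mu|(\RR\setminus[-K,K])<\e$ as well. Enlarging $K$ keeps the bound for the $\mu_n$, and this is exactly how the paper chooses $K$.
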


\noindent
Let $\tuple e m$ be a basis of $E$, and $\tuple {e^*} m$ the dual basis of $E^*$.

\begin{proof}
We may assume that $|\mu_n|(\RR), |\mu(\Rbar)|\le1$.
Note that, by Definition \ref{DefTotVar} and weak convergence,
\begin{eqnarray*}
|\mu|(\Rbar\setminus[-K,K])&=&\sup_f\Re\int_{\Rbar\setminus[-K,K]} \inp{f(x)}{\mu(dx)}\\
                                           &=&\sup_f\Re\li n\int_{\Rbar\setminus[-K,K]} \inp{f(x)}{\mu_n(dx)}\;,
\end{eqnarray*}
where the supremum is over $f\in C(\Rbar,E^*)$ with $\norm f\le1$.
Let $\e>0$. By tightness, we can choose $K$ so that for all $n\in\NN$ the integrals on the right are all less than $\e$.
Then so is their limit, and also the supremum thereof.
By the outer regularity of $|\mu|$:
   $$|\mu|(\{\infty\})=\inf_{K>0}|\mu|\bigl(\Rbar\setminus[-K,K]\bigr)=0\;.$$

\noindent
In order to show weak convergence on $\RR$ we must extend the following convergence, which holds for all $f\in C(\Rbar)$
since $\mu_n(\{\infty\})=\mu(\{\infty\})=0$:
\begin{equation}\label{EqWeakConv}
\li n\int_{\Rbar} f(x)\mu_n(dx)=\int_{\Rbar} f(x)\mu(dx)
\end{equation}
to all $f\in C_b(\RR)$.
Without loss of generality we may assume that $\norm f\le\half$.
Let $\e>0$, and let $K>0$ be such that $|\mu_n|(\RR\setminus[-K,K])<\e$ for all $n\in\NN$, and also $|\mu|(\RR\setminus[-K,K])<\e$. 
Let $h\in C(\Rbar)$ be such that $\norm h\le\half$, $h(x)=f(x)$ for $|x|\le K$, and $h(\infty)=\lim_{x\to\pm\infty}h(x)=0$;
then let $g_{j,\psi}:\RR\to E^*$ with $j=1,\ldots,m$ and $\psi\in[0,2\pi]$ be given by $g_{j,\psi}(x):=e^{i\psi}\bigl(f(x)-h(x)\bigr)e_j^*$.
Then $\norm{g_{j,\psi}}\le1$ and $g_{j,\psi}=0$ on $[-K,K]$, hence  by Definition \ref{DefTotVar} we have, since $(\mu_n)$ is tight
and $g_{j,\psi}\in C_b(\RR,E^*)$,
   $$\e>\Re\intR\inp{g_{j,\psi}(x)}{\mu_n(dx)}=\Re e^{i\psi}\intR\bigl(f(x)-h(x)\bigr)\inp{e_j^*}{\mu_n(dx)}\;.$$
Since this holds for all $\psi$, we have for all $j$, putting $\inp{e_j^*}{\mu_n(B)}=:\mu_n^j(B)$,
and the same for $\mu$:
   $$\left|\intR\bigl(f(x)-h(x)\bigr)\mu_n^j(dx)\right|<\e\;.$$
Applying (\ref{EqWeakConv}) to $h\in C(\Rbar)$, there exists $N\in\NN$ such that for $n\ge N$ we have
   $$\left|\intR h(x)\mu_n^j(dx)-\intR h(x)\mu^j(dx)\right|<\e\;.$$
Thus we obtain for $n\ge N$:
\begin{eqnarray*}
\left|\int fd\mu_n^j-\int fd\mu^j\right|
          &\le&\left|\int(f-h)d\mu_n^j\right|+\left|\int hd\mu_n^j-\int hd\mu^j\right|\\
          &+&\left|\int(h-f)d\mu^j\right|\le3\e\;,
\end{eqnarray*}
proving (\ref{EqWeakConv}) for all $f\in C_b(\RR)$. So we have $\mu_n\tto\mu$ weakly on $\RR$.
\end{proof}


\section{Convolution of Instruments and L\' evy's Continuity Theorem}\label{LevyCont}
Performing one instrument $M$, and subsequently another instrument $N$, and adding up their outcomes, yields a new instrument, the {\it convolution} of $M$ and $N$.

\begin{definition}\label{DefConv}
The {\it convolution} $M*N$ of the instruments $M$ and $N$ on $\RR$ is defined by
\begin{equation*}
M*N(B)[p];=\int_\RR\int_\RR1_B(x+y)M(dx)\bigl[N(dy)[p]\bigr]\;.
\end{equation*}
for all Borel subsets $B$ of $\RR$ and projections $p\in\A$.
There is an immediate extension to finite $\L(\A)$-valued measures.
\end{definition}

\subsection{Side issue: addition as an operation on probability spaces}
Addition of real numbers leads to a positive unital map $A:C(\Rbar)\to C(\Rbar)\ten C(\Rbar)$ given by
   $$\bigl(Af\bigr)(x,y):=f(x+y)\;.$$
(We take $x+y=\infty$ as soon as $x$ or $y$ equals $\infty$.)
We may draw a diagram for this situation, to be read from right to left (in the "Heisenberg picture"):

\begin{tikzpicture}
\draw[thick] (5,-0.333) -- (7,-0.333); 
\draw(6,-0.6) node {$C(\Rbar)$};
\draw(6,0.25) node {$C(\Rbar)$};
\draw[thick] (5,0) -- (7,0);
\draw[thick] (7,-0.5) -- (7,0.166);
\draw[thick] (7,-0.5) -- (8,-0.166);
\draw[thick] (7,0.166) -- (8,-0.166);
\draw (7.3,-0.166) node {$A$};
\draw[thick] (8,-0.166) -- (10,-0.166);
\draw(9,0.1) node  {$C(\Rbar)$};
\draw (10.5,-0.166) node {$\leftarrow f$};
\draw(4.3,-0.166) node {$Af\leftarrow$};
\end{tikzpicture}

\noindent
When read from left to right (in the "Schr\"odinger picture") this yields a map $A^*:C(\Rbar)^*\ten C(\Rbar)^*\to C(\Rbar)^*$ given by the diagram

\begin{tikzpicture}
\draw[thick] (5,-0.333) -- (7,-0.333); 
\draw[thick] (5,0) -- (7,0);
\draw[thick] (7,-0.5) -- (7,0.166);
\draw[thick] (7,-0.5) -- (8,-0.166);
\draw[thick] (7,0.166) -- (8,-0.166);
\draw (7.3,-0.166) node {$A^*$};
\draw[thick] (8,-0.166) -- (10,-0.166);
\draw (10.8,-0.166) node {$\rightarrow\widetilde{\mu*\nu}$};
\draw(4.3,-0.333) node {$\nutilde\rightarrow$};
\draw(4.3,0) node {$\mutilde\rightarrow$};
\end{tikzpicture}

\noindent
So $A^*$ is convolution of measures.
In this vein, as the reader may check, Definition \ref{DefConv} can be represented by the diagram

\begin{tikzpicture}
\draw (-0.6,0) node {$M*N(B)[p]\leftarrow$};
\draw[thick,decorate,decoration={snake}] (0.75,0) -- (1.5,0);
\draw[thick] (1.5,-0.5) rectangle (2.5,0.5);\draw (2,0) node {$\Mtilde$};
\draw[thick,decorate,decoration={snake}] (2.5,.333) -- (4,.333);
\draw[thick] (2.5,-0.333) -- (6,-0.333); 
\draw[thick] (4,-0.166) rectangle (5,0.833); \draw (4.5,0.333) node {$\Ntilde$}; 
\draw[thick,decorate,decoration={snake}] (5,0.666) -- (8,.666);
\draw[thick] (5,0) -- (6,0);
\draw[thick] (6,-0.5) -- (6,0.166);
\draw[thick] (6,-0.5) -- (7,-0.166);
\draw[thick] (6,0.166) -- (7,-0.166);
\draw (6.3,-0.166) node {$A$};
\draw[thick] (7,-0.166) -- (8,-0.166);
\draw (8.6,-0.166) node {$\leftarrow 1_B$};
\draw (8.5,0.666) node {$\leftarrow p$};
\end{tikzpicture}

\noindent
When read in the Schr\"odinger picture, i.e., from left to right, this diagram depicts quite neatly what is meant by convolution of instruments.

\subsection{Characteristic functions of instruments}

\begin{definition}
The {\it characteristic function} of a finite $E$-valued measure $\mu$ is the continuous function $\muhat:\RR\to E$ given by
   $$\muhat(\no):=\intR e^{i\no x}\mu(dx)\;.$$
\end{definition}

\begin{lem}\label{LemProdCharFns}
For finite $\L(\A)$-valued measures $M$ and $N$ on $\RR$ we have
\begin{equation*}
\widehat{M*N}(\no)[a]=\Mhat(\no)\circ\Nhat(\no)[a]\;.
\end{equation*}
\end{lem}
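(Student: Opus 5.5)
The plan is to compute $\widehat{M*N}(\no)$ directly from Definition \ref{DefConv}, by replacing the indicator $1_B$ with the bounded continuous function $z\mapsto e^{i\no z}$ and then using the multiplicativity $e^{i\no(x+y)}=e^{i\no x}e^{i\no y}$. Concretely, I first note that the defining formula of the convolution extends from indicators to integrals against bounded Borel functions $g$:
\begin{equation*}
\intR g(z)\,(M*N)(dz)[a]=\intR\intR g(x+y)\,M(dx)\bigl[N(dy)[a]\bigr].
\end{equation*}
This holds for all $a\in\A$, not only projections, since both sides are linear in $a$ and $\A=M_d$ is spanned by its projections. The extension from $g=1_B$ goes through simple functions by linearity, and then to bounded Borel $g$ by uniform approximation. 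The finiteness of $M$ and $N$, that is, of their total variations as in Definition \ref{DefTotVar}, makes the iterated integral continuous under uniform limits of $g$.

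Next I apply this with $g(z)=e^{i\no z}$, which factorizes as $e^{i\no x}e^{i\no y}$. The inner integral over $y$ is then $e^{i\no x}\intR e^{i\no y}N(dy)[a]=e^{i\no x}\Nhat(\no)[a]$. Here I use that $b\mapsto M(dx)[b]$ is linear, so the inner integral can be pulled inside the outer measure. Performing the outer integration gives $\intR e^{i\no x}M(dx)\bigl[\Nhat(\no)[a]\bigr]=\Mhat(\no)\bigl[\Nhat(\no)[a]\bigr]$, which is the claimed identity.

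The main obstacle is measure-theoretic rather than algebraic. One has to give meaning to the iterated vector-valued integral $\int\int M(dx)[N(dy)[a]]$ and to justify Fubini-type interchanges for it. I would handle this by working coordinatewise. Fix a basis $b_1,\dots,b_{d^2}$ of $\A$ and write $N(dy)[a]=\sum_k \nu_k(dy)\,b_k$ with complex measures $\nu_k$, which are finite because $N$ is finite. Then $M(dx)[N(dy)[a]]=\sum_k M(dx)[b_k]\,\nu_k(dy)$ is a finite sum of product measures $\mu_{k}\otimes\nu_k$ of finite ($\A$-valued) complex measures. For these the classical Fubini theorem holds, and it gives the factorization directly, with no approximation argument beyond the standard one.
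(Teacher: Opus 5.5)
Your proof is correct and follows essentially the same route as the paper: insert $e^{i\no(x+y)}=e^{i\no x}e^{i\no y}$ into the convolution integral of Definition \ref{DefConv} and factor the iterated integral into $\Mhat(\no)\circ\Nhat(\no)$. The only difference is that you spell out what the paper leaves implicit. That is, you justify the Fubini step through a basis expansion of $N(dy)[a]$ into finite complex measures. You also extend the defining formula from indicators $1_B$ and projections $p$ to $e^{i\no x}$ and general $a\in\A$.
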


\begin{proof}
From Definition \ref{DefConv} we obtain
\begin{eqnarray*}
\widehat{M*N}(\oo)&=&\intR\intR e^{i\no(x+y)}M(dx)\circ N(dy)\\
                       &=&\left(\intR e^{i\no x}M(dx)\right)\circ\left(\intR e^{i\no y}N(dy)\right)=\Mhat(\no)\circ\Nhat(\no)\;.
\end{eqnarray*}
\end{proof}

\noindent
Here is a technical result for later use: 
\begin{pro}\label{PropWeakToChar}
If a tight sequence $(\mu_n)$ converges weakly to $\mu$ on $\RR$, then
   $$\li n\muhat_n(\oo)=\muhat(\oo)$$
uniformly on compact subsets of $\RR$.
\end{pro}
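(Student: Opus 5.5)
Pointwise convergence is immediate from the definition of weak convergence on $\RR$ (Definition \ref{WeakConvR}). For fixed $\oo$ the function $x\mapsto e^{i\oo x}$ lies in $C_b(\RR)$, so $\muhat_n(\oo)\to\muhat(\oo)$. Strictly speaking we apply the definition componentwise: writing $\mu_n^j=\inp{e_j^*}{\mu_n}$ as in the proof of Lemma \ref{LemTussen}, each scalar integral converges, hence so does the $E$-valued one. The real work is upgrading this to uniform convergence on a compact set $[-W,W]$. I would do that by an equicontinuity argument, using tightness to control the tails.

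First I fix a bound. Weak convergence together with tightness gives $\sup_n|\mu_n|(\RR)=:C<\infty$, since a tight sequence is bounded by Definition \ref{DefTight}. We also have $|\mu|(\RR)<\infty$.

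Next I prove equicontinuity. Let $\e>0$ and choose $K$ with $|\mu_n|(\RR\setminus[-K,K])\le\e$ for all $n$, and also $|\mu|(\RR\setminus[-K,K])\le\e$, which is possible since $\mu$ is a finite measure. For $\oo,\oo'$ we estimate
$$\norm{\muhat_n(\oo)-\muhat_n(\oo')}\le\int_{[-K,K]}|e^{i\oo x}-e^{i\oo' x}|\,|\mu_n|(dx)+2\e\le K|\oo-\oo'|\,C+2\e.$$
Here we use $|e^{i\oo x}-e^{i\oo'x}|\le|x||\oo-\oo'|$ and the standard bound $\norm{\int g\,d\mu}\le\int|g|\,d|\mu|$. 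That bound holds for the total variation of Definition \ref{DefTotVar}, up to a constant depending on the choice of norm on $E$, which is harmless. Thus $\{\muhat_n\}\cup\{\muhat\}$ is uniformly equicontinuous on $\RR$, with modulus $KC|\oo-\oo'|+2\e$.

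Finally I pass from pointwise to uniform convergence. Cover $[-W,W]$ by finitely many points $\oo_1,\dots,\oo_r$ with spacing $\d=\e/(KC+1)$. By pointwise convergence there is $N$ such that $\norm{\muhat_n(\oo_k)-\muhat(\oo_k)}<\e$ for all $k$ and all $n\ge N$. For arbitrary $\oo\in[-W,W]$, pick the nearest $\oo_k$ and use the triangle inequality: the difference $\norm{\muhat_n(\oo)-\muhat(\oo)}$ is at most $(\e+2\e)+\e+(\e+2\e)=7\e$, uniformly in $\oo$.

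The main obstacle is the equicontinuity step. It requires a uniform bound on the total variations and uniform smallness of the tails, and this is exactly where tightness enters, since weak convergence alone on $\RR$ gives only pointwise convergence. The only technical care needed is the norm inequality $\norm{\int g\,d\mu}\le c\int|g|\,d|\mu|$ for $E$-valued measures. I would justify it directly from Definition \ref{DefTotVar} by pairing with a norm-one functional $\phi\in E^*$ attaining the norm and choosing $f(x)=\overline{\mathrm{sgn}}\,\phi$ approximately, via continuous approximation.
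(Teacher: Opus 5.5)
Your proposal is correct and follows the same route as the paper's sketch: pointwise convergence from the test function $e^{i\oo x}$, tightness giving uniform equicontinuity of the $\muhat_n$, and equicontinuity plus pointwise convergence giving uniform convergence on compacts. You supply the details (tail estimate, finite net, total-variation bound) that the paper leaves to the literature. One cosmetic point: to cover $\muhat$ with the same modulus, take $C\ge|\mu|(\RR)$. This holds anyway, since total variation is lower semicontinuous under weak convergence.
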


\begin{proof}
(Sketch)
If $\displaystyle\wlim_{n\to\infty}\mu_n=\mu$, then we have for all $\oo\in\RR$:
   $$\muhat_n(\oo)=\intR e^{i\oo x}\mu_n(dx)\tto\intR e^{i\oo x}\mu(dx)=\muhat(\oo)\;.$$
The tightness of $(\mu_n)$ leads to uniform equicontinuity of $(\muhat_n)$,
which in turn yields locally uniform convergence. This is equivalent to uniform convergence on compact sets. See the literature\cite{Shir96} for a detailed proof.
\end{proof}

\subsection{L\' evy's Continuity Theorem}
L\'evy's Continuity Theorem says, broadly speaking, that for positive measures, convergence of characteristic functions corresponds to weak convergence of
measures. We have this also in our case of completely positive measures.
For our purpose, however, we need more detailed information:

\begin{thm}[Extended Version of L\'evy's Continuity Theorem]\label{ThmLevCont}
Let $(\mu_n)_{n\in\NN}$ be a sequence of finite $E$-valued measures on $\RR$, such that for some function $\ph:\RR\to E$:
    $$\li n \muhat_n(\oo)=\ph(\oo)\;.$$
Then we have the implications (a)$\implies$(b)$\implies$(c):
\begin{itemize}
\item[(a)]
$\displaystyle(\mu_n)_{n\in\NN}$ is tight;
\item[(b)]
$\mu_n$ converges weakly on $\RR$ to a finite measure $\mu$ on $\RR$, and $\ph=\muhat$;
\item[(c)]
$\ph$ is continuous at 0.
\end{itemize}
Moreover, in the case that $E=\L(\A)$ and $\mu_n$ is completely positive, we also have (c)$\implies$(a), so that (a), (b), and (c) are equivalent.
\end{thm}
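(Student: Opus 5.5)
The plan is to prove the three implications in turn, using Propositions \ref{BanachAlaoglu}, \ref{PropWeakToChar} and Lemmas \ref{LemTussen}, \ref{ThmTotVarM}.

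\emph{(a)$\implies$(b).} Tightness as in Definition \ref{DefTight} includes boundedness, so the $\mu_n$ form a bounded sequence of $E$-valued measures on $\Rbar$. By Proposition \ref{BanachAlaoglu}, every subsequence has a further subsequence converging weakly on $\Rbar$. By Lemma \ref{LemTussen}, whose proof only uses the total variation and hence works for any finite dimensional $E$, the limit $\mu$ satisfies $\mu(\{\infty\})=0$, and the convergence also holds weakly on $\RR$. Since $x\mapsto e^{i\oo x}$ lies in $C_b(\RR)$, we get $\muhat(\oo)=\lim\muhat_{n_k}(\oo)=\ph(\oo)$. A finite $E$-valued measure is determined by its characteristic function: apply the scalar uniqueness theorem to each coordinate $\mu^j=\inp{e_j^*}{\mu}$, which is a complex measure. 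Hence all subsequential limits coincide with one measure $\mu$. The standard subsequence argument then gives weak convergence of the whole sequence $(\mu_n)$ to $\mu$, with $\ph=\muhat$.

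\emph{(b)$\implies$(c).} This is immediate, since $\muhat$ is continuous by dominated convergence, $\mu$ being finite.

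\emph{(c)$\implies$(a) for completely positive $\mu_n$.} This is the main step, and the plan is to adapt the classical truncation inequality to the positive scalar measures $\nu_n:=|\mu_n|$. By Lemma \ref{ThmTotVarM} these are
   $$\nu_n(B)=\frac1d\tr\bigl(\mu_n(B)[\one]\bigr)\;.$$
Define the linear functional $\tau(A):=\frac1d\tr(A[\one])$ on $\L(\A)$. Then $\widehat{\nu_n}=\tau\circ\muhat_n$, and this converges pointwise to $\psi:=\tau\circ\ph$, which is continuous at $0$. Boundedness comes from evaluating at $\oo=0$: we have $\nu_n(\RR)=\psi_n(0)\to\psi(0)$, so the total masses are bounded. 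We then apply the usual estimate to the positive measures $\nu_n$:
   $$\nu_n\bigl(\{|x|>2/\d\}\bigr)\le\frac1\d\int_{-\d}^{\d}\bigl(\widehat{\nu_n}(0)-\widehat{\nu_n}(\oo)\bigr)\,d\oo\;.$$
This follows from integrating $1-e^{i\oo x}$ and using $1-\frac{\sin u}{u}\ge\half$ for $|u|\ge2$. By dominated convergence (the integrands are bounded by $2\sup_n\nu_n(\RR)$), the right hand side tends to $\frac1\d\int_{-\d}^\d(\psi(0)-\psi(\oo))\,d\oo$. By continuity of $\psi$ at $0$, this limit is less than $\e$ for small $\d$.

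The result is an $N$ with $\nu_n(\RR\setminus[-K,K])<2\e$ for all $n\ge N$, where $K=2/\d$. Enlarging $K$ handles the finitely many $n<N$, since each finite measure is tight on its own. This gives tightness in the sense of Definition \ref{DefTight}, because $|\mu_n|=\nu_n$.

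The delicate point is precisely the reduction to a positive scalar measure. For general $E$-valued measures, continuity of $\ph$ at $0$ does not control $|\mu_n|$: cancellations can occur, so the converse genuinely fails there. The identification $|M|=\tau\circ M$ from Lemma \ref{ThmTotVarM} is what makes the argument work for completely positive measures.
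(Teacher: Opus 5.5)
Your proof is correct. For (a)$\implies$(b) and (b)$\implies$(c) you follow the paper essentially verbatim.

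For (c)$\implies$(a) you use the same two ingredients as the paper, namely the truncation inequality $\half 1_{[-K,K]^c}(x)\le 1-\frac{\sin \d x}{\d x}$ and Lemma~\ref{ThmTotVarM}, but in a different order.

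\textbf{The paper's route.} It integrates the inequality against $M_n$ at the operator level, getting $M_n([-K,K]^c)\le\frac1\d\int_{-\d}^{\d}\bigl(\Mhat_n(0)-\Mhat_n(\oo)\bigr)d\oo$ in the completely positive ordering. It then evaluates at $\one$, takes norms, and bounds the result by $2\sup_{|\oo|\le\d}\norm{\Mhat_n(0)[\one]-\Mhat_n(\oo)[\one]}$. This supremum it asserts is small for all $n\ge N$ ``by a $3\e$-argument''.

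\textbf{Your route.} You compose with $\tau(A)=\frac1d\tr(A[\one])$ at the outset, which reduces everything to the positive scalar measures $|\mu_n|$. You then control the \emph{averaged} integral by dominated convergence.

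\textbf{Why yours is the more careful one.} A bound on $\sup_{|\oo|\le\d}$ that is uniform in $n\ge N$ does not follow from pointwise convergence and continuity of $\ph$ at $0$ alone. In a $3\e$-split, the term $\norm{\ph(\oo)-\Mhat_n(\oo)}$ needs uniform convergence on $[-\d,\d]$, which is essentially the equicontinuity being proved. That bound holds only a posteriori, once tightness is known. Averaging first and passing to the limit under the integral is the standard way around this difficulty.

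You also verify explicitly the boundedness of $|\mu_n|(\RR)$ required by Definition~\ref{DefTight}, via $\tau(\muhat_n(0))\to\tau(\ph(0))$; the paper leaves this implicit.

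The paper's operator inequality in the completely positive ordering is a slightly stronger intermediate statement. However, it is only ever used through the norm estimate, so nothing is lost in your version.
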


\begin{proof}
(a)$\implies$(b):
By the sequential Banach-Alaoglu Theorem, Proposition \ref{BanachAlaoglu}, some subsequence $(\mu_{n_k})_{k\in\NN}$, converges weakly on $\Rbar$ to a measure $\mu$ on $\Rbar$.
By Lemma \ref{LemTussen}, $\mu$ is actually a measure on $\RR$, since $\mu(\{\infty\})=0$, and $\mu_{n_k}\tto\mu$ weakly  on $\RR$ (Definition \ref{WeakConvR}).
By choosing the test function $f(x)=e^{i\oo x}$ we obtain that
   $$\ph(\oo)=\li k\muhat_{n_k}(\oo)=\muhat(\oo)\;.$$
Since $\muhat$ determines $\mu$, all convergent subsequences of $(\mu_n)$ converge to the same measure $\mu$.
Suppose that we do not have $\wlim_{n\to\infty}\mu_n=\mu$ on $\RR$.
Then there exist $f\in C_b(\RR)$, $\e>0$, and an infinite subfamily $\set{\mu_n}{n\in F}$, $F\subset\NN$ with
\begin{equation}\label{EqDefactors}
\forall_{n\in F}:\quad\left|\intR f(x)\mu_n(dx)-\intR f(x)\mu(dx)\right|\ge\e\;.
\end{equation}
But by  Proposition \ref{BanachAlaoglu} and Lemma \ref{LemTussen}
this subfamily again contains a convergent subsequence, which can only converge to $\mu$, in contradiction to (\ref{EqDefactors}).
So $\mu_n\tto\mu$ weakly on $\RR$.

\noindent(b)$\implies$(c):
The characteristic function of a finite measure is continuous.

\noindent
(c)$\implies$(a):
Suppose that $E=\L(\A)$ and the measures $\mu_n$ (now called $M_n$), are completely positive,
$\Mhat_n$ converging pointwise to a function $\ph$, continuous at 0.
Let $\e>0$.
Since $\ph$ is continuous at 0, by a $3\e$-argument there exist $\d>0$ and $N\in\NN$ such that for all $n\ge N$:
   $$|\oo|<\d\implies\Norm{\Mhat_n(0)[\one]-\Mhat_n(\oo)[\one]}<\half\e\;.$$
Let $K:=\frac2\d$.
Then $|\d x|>2$ for $x\in[-K,K]^c$, and for all $x$: $\frac{\sin x}x\le\min(1,\frac1{|x|})$ (where we read $\frac{\sin x}x=1$ for $x=0$).
Hence we have for all $x\in\RR$:
\begin{eqnarray*}
\half1_{[-K,K]^c}(x)&\le&\left(1-\frac1{|\d x|}\right)\cdot 1_{[\frac2\d,\infty)}\bigl(|x|\bigr)
             \le1-\frac{\sin\d x}{\d x}\\
             &=&\frac1\d\int_0^\d(1-\cos\oo x)d\oo=\frac1{2\d}\int_{-\d}^\d\bigl(1-e^{i\oo x}\bigr)d\oo\;.
            \end{eqnarray*}
As $M_n$ is a finite completely positive  measure, we may integrate both sides with respect to $M_n$ and multiply by 2, to obtain, in the completely positive ordering,
\begin{eqnarray*}
M_n\bigl([-K,K]^c\bigr) &\le&\intR\left(\frac1{\d}\int_{-\d}^\d\left(1-e^{i\oo x}\right)d\oo\right)M_n(dx)\\
                                    &=&\frac1{\d}\int_{-\d}^\d\bigl(\Mhat_n(0)-\Mhat_n(\oo)\bigr)d\oo\;.
             \end{eqnarray*}
By Lemma \ref{ThmTotVarM}, for $n\ge N$,
   $$|M_n|([-K,K]^c)\le\norm{M_n([-K,K]^c)[\one]}\le2\sup_{\oo\in[-\d,\d]}\Norm{\Mhat_n(0)[\one]-\Mhat_n(\oo)[\one]}\le\e\;.$$
By making $K$ larger we can ensure that the finite measures $M_1,\ldots M_{N-1}$ satisfy this inequality as well.
We conclude that $(M_n)_{n\in\NN}$ is tight.
\end{proof}


\section{Convolution Semigroups of Instruments}\label{ConvSem}
Observation in continuous time is described by semigroups of instruments.

\begin{definition}
By a {\it convolution semigroup of instruments} we mean a family $(M_t)_{t\ge0}$ of instruments on $\RR$, satisfying $M_0=\d_0\cdot\id_\A$, and
for all $s,t\ge0$:
   $$M_{s+t}=M_s*M_t\;.$$
Such a semigroup is called {\it weakly continuous} if  for all $t \ge0$.
   $$\wlim_{s\to t}M_s=M_t\;.$$
\end{definition}
\noindent
Note that for the latter condition, by writing $s=u+t$, and bracketing out an operator $M_t$,
it suffices that $\displaystyle\wlim_{u\downarrow0}M_u=\d_0\cdot\id_\A$.

\noindent{\bf Discussion.}
Every instrument $M_t$ that lies in such a semigroup is automatically {\it infinitely divisible}:
for every $n\in\NN$ it is the $n$-th convolution power of some distribution $M_{t/n}$.
In classical probability a converse statement holds true: given an infinitely divisible distribution $\mu$,
there is a unique weakly continuous convolution semigroup $(\mu_t)$ such that $\mu=\mu_1$.
We do not know, however, what would be a suitable generalization of this in our framework.
For this reason, we take as our starting point for continuous observation the whole semigroup,
not a single infinitely divisible instrument. 

\begin{pro}[Characteristic Exponent]\label{ThmCharExp}
Let $(M_t)$ be a family of instruments on $\RR$.
Then the following are equivalent
\begin{itemize}
\item[(a)]
$(M_t)$ is a weakly continuous convolution semigroup;
\item[(b)]
there exists a continuous function $L:\RR\to\L(\A)$ such  that for all $t\ge0$ and $\oo\in\RR$
\begin{equation}\label{EqCharExp}
\Mhat_t(\oo)=e^{tL(\oo)}\;.
\end{equation}
\end{itemize}
\end{pro}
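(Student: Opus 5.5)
For (a)$\implies$(b) I fix $\oo\in\RR$ and look at the operator family $t\mapsto\Mhat_t(\oo)\in\L(\A)$. By Lemma \ref{LemProdCharFns}, $\Mhat_{s+t}(\oo)=\Mhat_s(\oo)\circ\Mhat_t(\oo)$, and $\Mhat_0(\oo)=\id_\A$ because $M_0=\d_0\cdot\id_\A$. So for each $\oo$ this is a one-parameter semigroup in the finite dimensional algebra $\L(\A)$. Weak continuity with test function $x\mapsto e^{i\oo x}\in C_b(\RR)$ makes $t\mapsto\Mhat_t(\oo)$ continuous. A continuous semigroup in a finite dimensional algebra is of the form $e^{tL(\oo)}$. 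To see this, I pick $h>0$ so small that $V:=\int_0^h\Mhat_s(\oo)\,ds$ is invertible; this works because $\frac1hV\to\id$ as $h\downarrow0$. Then
\[
\Mhat_t(\oo)V=\int_t^{t+h}\Mhat_s(\oo)\,ds,
\]
so $\Mhat_t(\oo)$ is differentiable in $t$, and $L(\oo):=\bigl(\Mhat_h(\oo)-\id\bigr)V^{-1}$ is its derivative at $0$. The ODE $\frac{d}{dt}\Mhat_t=L\Mhat_t$ with initial value $\id$ then gives (\ref{EqCharExp}).

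Continuity of $L$ in $\oo$ is the delicate point and needs uniformity. Since $t\mapsto M_t$ is weakly continuous, the set $\set{M_t}{t\in[0,1]}$ is weakly compact, so every sequence $(M_{t_n})$ has a weakly convergent subsequence. By the easy direction of Theorem \ref{ThmLevCont} the characteristic function $\Mhat_{t_n}$ converges, and Proposition \ref{PropWeakToChar} makes this convergence uniform on compacts, provided the sequence is tight. Tightness comes from Theorem \ref{ThmLevCont}, (b)$\implies$(a) via (c): the limit $\Mhat_{t}(\oo)$ is continuous, and the measures are completely positive. From this I get joint continuity of $(t,\oo)\mapsto\Mhat_t(\oo)$ on $[0,1]\times\RR$. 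Then for a fixed small $h$ the map $V(\oo)$ is continuous in $\oo$ and invertible for $\oo$ in a compact neighbourhood $K$, uniformly, by uniform continuity at $t=0$. Hence $L(\oo)=(\Mhat_h(\oo)-\id)V(\oo)^{-1}$ is continuous on $K$. Covering $\RR$ by such compacts gives continuity of $L$ everywhere. I expect this joint-continuity and uniformity step to be the main obstacle.

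For (b)$\implies$(a), suppose $\Mhat_t(\oo)=e^{tL(\oo)}$. Then $\widehat{M_s*M_t}=e^{sL}e^{tL}=\Mhat_{s+t}$ by Lemma \ref{LemProdCharFns}, and $\Mhat_0=\id=\widehat{\d_0\cdot\id_\A}$. Since a characteristic function determines the measure, $M_{s+t}=M_s*M_t$ and $M_0=\d_0\cdot\id_\A$. For weak continuity, let $s_n\to t$. Then $\Mhat_{s_n}(\oo)=e^{s_nL(\oo)}\to e^{tL(\oo)}$ pointwise, and the limit is continuous because $L$ is. The $M_{s_n}$ are completely positive, so Theorem \ref{ThmLevCont}, (c)$\implies$(b), gives $M_{s_n}\to\mu$ weakly with $\muhat=\Mhat_t$, so $\mu=M_t$. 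Sequential convergence suffices for $\wlim_{s\to t}M_s=M_t$.
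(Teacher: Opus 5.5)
Your proof is correct. Its skeleton matches the paper's: the semigroup law comes from Lemma \ref{LemProdCharFns}, tightness from Theorem \ref{ThmLevCont} (c)$\implies$(a), and locally uniform convergence from Proposition \ref{PropWeakToChar}. For (b)$\implies$(a) you use uniqueness of characteristic functions together with Theorem \ref{ThmLevCont}, and there you are slightly more careful than the paper, checking $M_0=\d_0\cdot\id_\A$ and continuity at every $t$ rather than only as $t\downarrow0$.

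The real difference is how $L$ is recovered from $\Mhat$. The paper cites differentiability of continuous matrix semigroups and inverts locally with a logarithm, $L(\oo)=\frac1t\log_\e\bigl(\Mhat_t(\oo)\bigr)$. You prove differentiability yourself and use the identity $L(\oo)V_h(\oo)=\Mhat_h(\oo)-\id_\A$, with $V_h(\oo)=\int_0^h\Mhat_s(\oo)\,ds$. This identity holds exactly for every $h>0$ and every $\oo$ once $\Mhat_s(\oo)=e^{sL(\oo)}$ is known, so the $\oo$-dependent choice of $h$ in your first paragraph causes no trouble. It buys you two things.

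First, there is no branch issue. The logarithm route tacitly needs $tL(\oo)$ to stay inside the ball where $\exp$ is injective, for all $0<t<t_0$ and all $|\oo|\le\oo_0$.

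Second, your joint-continuity step, and with it the appeal to tightness, can be dropped altogether. For fixed $h$, the map $\oo\mapsto V_h(\oo)$ is continuous by dominated convergence, since each $\Mhat_s$ is continuous and $\norm{\Mhat_s(\oo)}\le|M_s|(\RR)=1$. Invertibility is an open condition. And $V_h(\oo_0)$ is invertible for small $h$ by continuity in $s$ at the single point $\oo_0$. Hence $L=(\Mhat_h-\id_\A)V_h^{-1}$ is continuous near each $\oo_0$.

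The paper's $\log_\e$ device, for its part, is short to state. It is also recycled at the end of Theorem \ref{ThmDeFinetti}, where it turns locally uniform convergence of $e^{tL_n}$ into locally uniform convergence of $L_n$; your identity could do that job as well.
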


\begin{definition}
If these equivalent conditions hold, then $L$ is called the {\it characteristic exponent} or {\it generator} of the semigroup $(M_t)$.
\end{definition}

\noindent
This is the central notion of this paper.

\begin{proof}
(a)$\implies$(b):
By Lemma \ref{LemProdCharFns} the semigroup property $M_{s+t}=M_s*M_t$ implies that for all $\oo\in\RR$ we have, for all $s,t\ge0$,
   $$\Mhat_{s+t}(\oo)=\Mhat_s(\oo)\circ\Mhat_t(\oo)\;.$$
Moreover, by weak continuity,
   $$\lim_{s\to t}\Mhat_s(\oo)=\lim_{s\to t}\intR e^{i\oo x}M_s(dx)=\intR e^{i\oo x}M_t(dx)=\Mhat_t(\oo)\;.$$
So $t\mapsto \Mhat_t(\oo)$ is a continuous semigroup of operators on the finite dimensional space $\A$ for each $\oo\in\RR$.
Such a semigroup is automatically differentiable\cite{DaviesSemigroups}.
Let $L(\oo)$ be its derivative at 0. Then for all $t\ge0$:
\begin{eqnarray*}
\frac d{dt}\Mhat_t(\oo)&=&\lim_{s\downarrow0}\frac1s\bigl(\Mhat_{t+s}(\oo)-\Mhat_t(\oo)\bigr)\\
                                   &=&\lim_{s\downarrow0}\frac1s \Mhat_t\circ\bigl(\Mhat_s(\oo)-\id_\A\bigr)=\Mhat_t(\oo)\circ L(\oo)\;.
         \end{eqnarray*}
The solution of this differential equation is (\ref{EqCharExp}).

\noindent
We claim that $L(\oo)$ is continuous in $\oo$.
Let $\oo_0\ge0$.
First note that $(M_t)_{t\ge0}$ is  tight by Theorem \ref{ThmLevCont}, (b)$\implies$(a).
Therefore, by Proposition \ref{PropWeakToChar}, the convergence 
of $\Mhat_t(\no)$ to $\Mhat_0(\oo)=\id_{\A}$, holds uniformly on $[-\no_0,\no_0]$:
  $$\lid t\max_{|\no|\le\no_0}\norm{\Mhat_t(\no)-\id_{\A}}=0\;.$$
Denote, for $A\in\L(\A)$ and $\e>0$, the unit ball with radius $\e$ around $A$ by $\B_\e(A)$.
For $\e$ small enough, the exponential map restricted to $\B_\e(0)$ is invertible on its range; call the inverse $\log_\e$.
Let $\d>0$ be such that $\B_\d(\id_{\A})$ is contained in this range $\exp\bigl(\B_\e(0)\bigr)$.
Let $t_0$ be such that $\norm{\Mhat_t(\no)-\id_{\A}}<\d$ for all $\no\in[-\no_0,\no_0]$ and $t<t_0$.
It follows from (\ref{EqCharExp}) that, for $0<t<t_0$ and $-\no_0\le\no\le\no_0$:
\begin{equation}\label{EqLogEps}
L(\no)=\frac 1t\log_\e\bigl(\Mhat_t(\no)\bigr).
\end{equation}
Since $\log_\e$ is a smooth map, and $\oo\mapsto\Mhat_t(\oo)$ is continuous,
$L$ is continuous on $[-\no_0,\no_0]$. As this holds for all $\no_0>0$, it is continuous on the whole of $\RR$.

(b)$\implies$(a):
The condition (\ref{EqCharExp}) implies that for all $s,t\ge$ and $\oo\in\RR$
   $$\Mhat_s(\oo)\circ\Mhat_t(\oo)=e^{sL(\oo)}\circ e^{tL(\oo)}=e^{(s+t)L(\oo)}=\Mhat_{s+t}(\oo)\;.$$
By Lemma \ref{LemProdCharFns} this implies that
   $$\widehat{\bigl(M_s*M_t\bigr)}(\oo)=\Mhat_{s+t}(\oo)\;.$$

\noindent
Since an instrument is determined by its characteristic function, it follows that $M_s*M_t=M_{s+t}$.
Moreover, we have for all $\oo\in\RR$:
   $$\lim_{t\downarrow0}\Mhat_t(\oo)=\lim_{t\downarrow0}e^{tL(\oo)}=\id_\A=\Mhat_0(\oo)\;,$$
which is certainly continuous at $\oo=0$. By Theorem \ref{ThmLevCont} it follows that $M_t\tto M_0$ weakly.
\end{proof}


\subsection{Example: the basic Davies generator}\label{BasicDavies}
For $a,b\in\A=M_d$, we denote the {\it anticommutator} $ab+ba$ by $\{a,b\}$.
\begin{pro}[Davies Generator]\label{PropBasicDavies}
Let for $x\in\RR$ and $a\in\A$ the function $L_{x,a}:\RR\to\L(\A)$ be given by
\begin{equation*}
L_{x,a}(\no)[z]:=e^{i\no x}a^*za-\half\{a^*a,z\}\;.
\end{equation*}
Then $L_{x,a}$ is the generator of a weakly continuous convolution semigroup of instruments.
\end{pro}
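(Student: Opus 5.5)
By Proposition \ref{ThmCharExp}, (b)$\implies$(a), it suffices to exhibit a family of instruments $(M_t)_{t\ge0}$ with $\Mhat_t(\oo)=e^{tL_{x,a}(\oo)}$ for all $t\ge0$ and $\oo\in\RR$. Continuity of $L_{x,a}$ in $\oo$ is clear from its formula. The plan is to build $M_t$ explicitly as a sum over jump numbers (a Dyson expansion), since the generator splits naturally into a "no-jump" part and a "jump" part.

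Write $L_{x,a}(\oo)=e^{i\oo x}J+L_0$, with
$$J[z]:=a^*za,\qquad L_0[z]:=-\half\{a^*a,z\}.$$
Here $J$ is completely positive. Moreover $e^{tL_0}[z]=e^{-\frac t2a^*a}\,z\,e^{-\frac t2a^*a}$ is also completely positive. Note that $L_0$ and $J$ do not commute, so we cannot simply factor the exponential. Instead we use the Dyson expansion
$$e^{t(L_0+e^{i\oo x}J)}=\szi n e^{in\oo x}\int_{0<s_1<\cdots<s_n<t}e^{s_1L_0}Je^{(s_2-s_1)L_0}J\cdots Je^{(t-s_n)L_0}\,ds_1\cdots ds_n=:\szi n e^{in\oo x}T_n(t).$$
This series converges in norm, because $\norm{T_n(t)}\le e^{t\norm{L_0}}(t\norm J)^n/n!$.

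We then define the discrete measure
$$M_t:=\szi n \d_{nx}\,T_n(t),$$
which puts mass $T_n(t)$ at the point $nx$. Each $T_n(t)$ is an integral of compositions of completely positive maps, hence lies in $\CP(\A)$, so $M_t$ is a completely positive measure on $\RR$. By construction its characteristic function is $\Mhat_t(\oo)=\szi n e^{in\oo x}T_n(t)=e^{tL_{x,a}(\oo)}$.

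For the instrument normalization we evaluate at $\oo=0$:
$$M_t(\RR)[\one]=\Mhat_t(0)[\one]=e^{tL_{x,a}(0)}[\one].$$
Since $L_{x,a}(0)[\one]=a^*a-\half\{a^*a,\one\}=0$, we get $e^{tL_{x,a}(0)}[\one]=\one$. Also $M_0=\d_0\cdot\id_\A$, because $T_0(0)=\id_\A$ and $T_n(0)=0$ for $n\ge1$. Proposition \ref{ThmCharExp} then yields both the semigroup property and weak continuity.

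The only step needing care is justifying the Dyson expansion, i.e.\ the identity $e^{t(A+B)}=\sum_n\int e^{s_1A}Be^{(s_2-s_1)A}\cdots$ for bounded $A,B$. I would verify it by iterating the variation-of-constants formula $e^{t(A+B)}=e^{tA}+\int_0^te^{sA}\,B\,e^{(t-s)(A+B)}ds$; the ordering must match the right-multiplication convention $\tfrac d{dt}\Mhat_t=\Mhat_t\circ L$ used in the paper. The complete positivity of each term and the computation $L_{x,a}(0)[\one]=0$ are routine.
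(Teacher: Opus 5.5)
Your proposal is correct and is essentially the paper's proof. It uses the same splitting $L_{x,a}(\omega)=e^{i\omega x}J+K$ with $J$ and $e^{tK}$ completely positive, the same Dyson-series measure $M_t=\sum_n\delta_{nx}P_n(t)$, the same normalization via $L_{x,a}(0)[\one]=0$, and the same conclusion via Proposition~\ref{ThmCharExp}. The only difference is how the expansion is justified: the paper differentiates the terms $P_n^{\omega}(t)$ to get $\frac{d}{dt}\widehat M_t(\omega)=\widehat M_t(\omega)\circ L_{x,a}(\omega)$, while you iterate Duhamel's formula and also supply the norm bound for convergence, which the paper leaves implicit.
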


\begin{proof}Write
\begin{equation}
J[z]:=a^*za\quad\hbox{and}\quad K[z]:=-\half\{a^*a,z\}.
\end{equation}
Then $J$ is completely positive, as well as
    $$e^{tK}:z\mapsto e^{-\frac t 2 a^*a}ze^{-\frac t 2 a^*a}\;.$$
Define completely positive maps $P_n(t)$ ($t\ge0$, $n\in\NN$) by:  $P_0(t):=e^{tK}$ and for $n\ge1$:
$$P_n(t):=\int_0^t\int_0^{t_n}\cdots\int_0^{t_2}e^{t_1K}\circ J\circ e^{(t_2-t_1)K}\circ\cdots\circ J\circ e^{(t-t_n)K}dt_1dt_2\ldots dt_n\;.$$
(Note that, for $\A=\CC$, and, say, $a^*a=:\lambda$, this is the Poisson probability $P_n(t)=\frac{t^n}{n!}\cdot \lambda^n e^{-\lambda t}$.) 
Define $P_n^\no(t)$ for $\oo\in\RR$ by replacing $J$ with $e^{i\no x}J$ in the above expression for $P_n(t)$.
Let $M_t$ be the completely positive measure on $\RR$ given by
   $$M_t:=\szi n \d_{nx}\cdot P_n(t)\;.$$
Since the factor $J$ occurs $n$ times in $P_n(t)$, the characteristic function can be written:
   $$\Mhat_t(\no)=\szi n e^{in\no x}P_n(t)=\szi n P_n^\no(t)\;.$$
We have $\ddt P_0^\no(t)=P_0^\no(t)\circ K$, and for $n\ge0$,
by differentiation of $P_n^\no(t)$ first "with respect to the upper limit", and then  "under the integral", we find
   $$\ddt P_n^\no(t)=P_{n-1}^\no(t)\circ\bigl(e^{i\no x}J\bigr)+P_n^\no(t)\circ K\;.$$
Therefore
\begin{eqnarray*}
\ddt\Mhat_t(\no)&=&P_0^\no(t)\circ K+\soi n \Bigl(P_{n-1}^\no(t)\circ\bigl(e^{i\no x}J\bigr)+P_n^\no(t)\circ K\Bigr)\\
                        &=&\Mhat_t(\no)\circ\bigl(e^{i\no x}J+K\bigr)=\Mhat_t(\no)\circ L_{x,a}(\no)\;.
\end{eqnarray*}
Solving this differential equation with initial condition $\Mhat_0(\no)=\widehat{\d_0}(\oo)\cdot\id_\A=\id_\A$ gives $\Mhat_t(\no)=e^{tL_{x,a}(\no)}$.
By Proposition \ref{ThmCharExp},  $(M_t)$ is a weakly continuous convolution semigroup.
Moreover, since $L_{x,a}(0)[\one]=a^*\one a-\half\{a^*a,\one\}=0$,
we have $M_t(\RR)[\one]=\Mhat_t(0)[\one]=e^{tL_{x,a}(0)}[\one]=\one$, so the measure $M_t$ is an instrument.
\end{proof}

\noindent
\subsection{Discussion}
The basic Davies generator is the central building block of all characteristic exponents.
It describes a point process of observations ("clicks"), which accompanies the quantum evolution.
A click might be the detection of a particle emitted by the quantum system.
When the initial state is $\r_0$, then the distribution of the number of particles counted up to time $t$ is $\r_0\bigl(M_t(\one)\bigr)$.
A "quantum jump" $J$, where $\psi$ jumps to $a\psi$, occurs at every click,
or maybe it should be said that the detection of a particle leaving the system,
suddenly changes the information which the observer possesses about the system.
A typical property of quantum mechanics is, that the two situations cannot be clearly distingished.
Between clicks the quantum system moves according to the smooth evolution $e^{tK}$.
The latter does not only describe the free evolution of the system, but  also the effect of conditioning on the non-occurrence of clicks.
Hence $e^{tK}$ typically contracts to 0, indicating that a pause of length $t$ between clicks is increasingly rare.
This is to be compared with the waiting time distribution function $F(t)=1-e^{-\lambda t}$ of a Poisson process.

\subsection{Application: resonance fluorescence}\label{ResFluor}
Let $\A=M_2$, $x=1$, and let $a,h\in M_2$ be given by
   $$a:=\sqrt\g\begin{pmatrix}0&1\\0&0\end{pmatrix},\quad h:=\frac i 2\begin{pmatrix}0&-\Oo\\ \Oo&0\end{pmatrix}\;.$$
Let $L(\no)[z]:=L_{x,a}(\no)[z]-i[h,z]$, the basic Davies generator, enriched with a Hamiltonian term.
We may absorb the Hamiltonian term into the operator $K$ in the above, since for $K[z]=-\half\{a^*a,z\}-i[h,z]$ the operator
   $$e^{tK}:z\mapsto e^{-t(\frac12 a^*a+ih)}ze^{-t(\frac12 a^*a-ih)}$$
is still completely positive.
The semigroup $(M_t)$ which we then obtain describes the counting of photons emitted from a two-level atom,
which is irradiated by a laser, tuned to the resonance frequency.
This frequency does not occur in the expression, but there is a rotation with the {\it Rabi  frequency} $\Oo$,
that is proportional to the {\it amplitude} of the (classical) laser field.
The positive constant $\g$ denotes the rate of spontaneous emission of photons, accompanied by a jump of the atom from its upper level $\twosvector01$
to its lower level $\twosvector10$.
(We adopt the qubit convention, according to which $|0\rangle$ denotes
$\twosvector10$, and $|1\rangle$ denotes $\twosvector01$).

\noindent
The photons are emitted according to a renewal process with waiting time distribution function\cite{Buch}
   $$F(t)=1-\Norms{e^{-t(\frac 12 a^*a-ih)}\twovector10}\;,$$
whose density function is given by\cite{Car99}\cite{Buch}
   $$f(t)=F'(t)=\frac{4\g\Oo^2}{\g^2-4\Oo^2}e^{-\frac12\g t}\sinh^2\left(\frac t4\sqrt{\g^2-4\Oo^2}\right)\;.$$
The phenomenon that $f(0)=0$ is known as "antibunching" of photons:
after emission of one photon the atom has to be reloaded before the next photon can be sent out.
In the 1970's this was experimentally confirmed, and taken as a sign that light does not follow a classical description.\cite{KimbleMandel}


\section{A Quantum de Finetti Theorem}\label{InfDivInstQuantDeFinetti}
It was de Finetti who first realized that, in probability theory,
the characteristic exponents of infinitely divisible distributions on the real line form a closed cone,
generated by the exponents $\no\mapsto e^{i\no x}-1$ of the Poisson distributions of different jump sizes $x\in\RR$.
The following theorem is an extension of this observation to instruments. 

\begin{thm}\label{ThmDeFinetti}
Let $L:\RR\to\L(\A)$. Then the following are equivalent:
\begin{itemize}
\item[(a)]
$L$ is the characteristic exponent of a weakly continuous  convolution semigroup $(M_t)_{t\ge0}$ of instruments on $\RR$;
\item[(b)]
$L$ lies in the pointwise closure $\Cbar$, inside the space of continuous functions, of the cone $\C$ generated by the basic Davies generators
   $$L_{x,a}(\no)[z]:=e^{i\no x}a^*za-\half\{a^*a,z\}\;,\quad x\in\RR, a\in \A\;.$$
\end{itemize}
Moreover, if $L\in\Cbar$ is a pointwise limit of functions $L_n\in\C$, then the convergence $L_n\tto L$ is uniform on compact subsets of $\RR$.
\end{thm}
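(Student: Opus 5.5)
For (a)$\implies$(b) I will approximate $L$ by difference quotients of the semigroup. Fix $t>0$ and set
$$L^{(t)}(\no):=\frac1t\bigl(\Mhat_t(\no)-\id_\A\bigr)=\frac1t\Bigl(\intR e^{i\no x}M_t(dx)-\id_\A\Bigr)\;.$$
By Proposition \ref{ThmCharExp}, $\Mhat_t(\no)=e^{tL(\no)}$, so $L^{(t)}(\no)\to L(\no)$ as $t\downarrow0$, locally uniformly in $\no$ because $L$ is continuous. It then suffices to show that each $L^{(t)}$ lies in $\Cbar$; since $\Cbar$ is pointwise closed, $L$ will lie in it as well. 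To see this, write $M_t(dx)$ through its Kraus decomposition: $M_t(dx)[z]=\sum_k a_k(x)^*za_k(x)\,\nu(dx)$ for a positive measure $\nu$, which is available by the Choi isomorphism, since the density of a completely positive measure is a positive matrix $\A\ten\A$-valued function. Approximate $M_t$ by finite sums $\sum_j\d_{x_j}\cdot\sum_k a_{jk}^*\,\cdot\,a_{jk}$ whose total mass satisfies $\sum_{j,k}a_{jk}^*a_{jk}=M_t(\RR)[\one]=\one$. This can be done weakly on $\RR$ by tightness, so that the characteristic functions converge pointwise. For such a discrete approximant,
$$\frac1t\Bigl(\sum_{j,k}e^{i\no x_j}a_{jk}^*za_{jk}-z\Bigr)=\sum_{j,k}\Bigl(e^{i\no x_j}b_{jk}^*zb_{jk}-\half\{b_{jk}^*b_{jk},z\}\Bigr)=\sum_{j,k}L_{x_j,b_{jk}}(\no)[z],\qquad b_{jk}:=a_{jk}/\sqrt t\;,$$
where the identity $\sum b^*b=\frac1t\one$ makes the anticommutator terms add up to $-\frac1t z$. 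Hence $L^{(t)}\in\Cbar$.

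For (b)$\implies$(a), start with $L_n\in\C$, that is, a finite sum $\sum_j L_{x_j,a_j}$. Each summand generates a semigroup of instruments by Proposition \ref{PropBasicDavies}. A finite sum of generators also generates such a semigroup: by the Lie--Trotter product formula, $e^{t(L_1+L_2)(\no)}=\lim_m\bigl(e^{\frac tmL_1(\no)}e^{\frac tmL_2(\no)}\bigr)^m$, and the right-hand side is the characteristic function of convolution powers of instruments. Those powers are tight, because their characteristic functions converge to a function continuous in $\no$, so by Theorem \ref{ThmLevCont} the limit is an instrument whose characteristic function is $e^{t(L_1+L_2)}$. Nonnegative multiples are handled by rescaling time, so every $L_n\in\C$ is the exponent of a semigroup $M^{(n)}_t$.

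Now let $L_n\to L$ pointwise with $L$ continuous. Then $e^{tL_n(\no)}\to e^{tL(\no)}$ pointwise, and this limit is continuous at $0$. By Theorem \ref{ThmLevCont} (c)$\implies$(b), $M^{(n)}_t$ converges weakly to a completely positive measure $M_t$ with $\Mhat_t=e^{tL}$. Normalisation passes to the limit: $M_t(\RR)[\one]=\lim M^{(n)}_t(\RR)[\one]=\one$, using the constant test function $1\in C_b(\RR)$. Proposition \ref{ThmCharExp} (b)$\implies$(a) then shows that $(M_t)$ is a weakly continuous convolution semigroup.

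For the uniformity claim, fix small $t$ and apply Proposition \ref{PropWeakToChar} to the tight sequence $M^{(n)}_t\to M_t$. This gives $e^{tL_n}\to e^{tL}$ uniformly on a compact set $[-\no_0,\no_0]$. For $t$ small and $n$ large, all of these lie in $\B_\d(\id_\A)$, provided the $L_n$ are locally uniformly bounded. Then $L_n=\frac1t\log_\e e^{tL_n}$ as in (\ref{EqLogEps}), which converges uniformly to $L$. The main obstacle is exactly this bound: we need $\norm{tL_n(\no)}<\e$ uniformly on the compact set for a single $t$, so that the branch of $\log$ is the right one. I would try to obtain it from the equicontinuity of the $\Mhat^{(n)}_t$ given by tightness, combined with a continuity argument in $t$. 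Since $t\mapsto e^{tL_n(\no)}$ stays inside $\B_\d(\id_\A)$ for all small $t$ uniformly in $n$ and $\no$, the path $tL_n(\no)$ cannot leave $\B_\e(0)$ without $e^{tL_n(\no)}$ leaving $\B_\d(\id_\A)$.
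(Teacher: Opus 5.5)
Your two implications follow the paper's proof step by step: difference quotients of $e^{tL}$ are rewritten as sums of basic Davies generators via discrete approximation of the instruments, and the converse uses Trotter, Theorem \ref{ThmLevCont} and Proposition \ref{ThmCharExp}. Both are sound, and your exact normalisation $\sum_{j,k}a_{jk}^*a_{jk}=\one$ is a tidy way to make the anticommutator terms match.

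The gap is in the \emph{moreover} part, where you suspected it. Your remedy assumes that $\norm{e^{tL_n(\no)}-\id_\A}<\d$ holds simultaneously for all $t\in(0,t_0]$, $n\ge N$ and $|\no|\le\no_0$. Nothing you have established gives this. Theorem \ref{ThmLevCont} and Proposition \ref{PropWeakToChar}, applied at one fixed $t$, yield $\sup_{|\no|\le\no_0}\norm{e^{tL_n(\no)}-e^{tL(\no)}}\to0$ only with an $N$ that depends on $t$. Equicontinuity in $\no$ at that $t$ says nothing about the intermediate times your connectedness argument needs. Uniformity over a whole $t$-interval means that the semigroups $M^{(n)}_t$ tend to $\d_0\cdot\id_\A$ as $t\downarrow0$ uniformly in $n$. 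That amounts to the local uniform boundedness of $(L_n)$ which you are trying to prove, so as it stands the argument is circular. The connectedness step itself is fine once the hypothesis is granted. The paper's own proof passes over the same point with its reference to (\ref{EqLogEps}).

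You can close the gap without logarithms. For all $n$ and $\no$,
\[
L_n(\no)\int_0^{s_0}e^{sL_n(\no)}\,ds=e^{s_0L_n(\no)}-\id_\A\;.
\]
For each fixed $s$ you do have $e^{sL_n}\to e^{sL}$ uniformly on $[-\no_0,\no_0]$. Moreover $\norm{e^{sL_n(\no)}}=\norm{\Mhat^{(n)}_s(\no)}\le|M^{(n)}_s|(\RR)=1$ by Lemma \ref{ThmTotVarM}, and likewise for $e^{sL(\no)}$. The function $s\mapsto\sup_{|\no|\le\no_0}\norm{e^{sL_n(\no)}-e^{sL(\no)}}$ is continuous and bounded by $2$. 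Dominated convergence therefore gives $\int_0^{s_0}e^{sL_n}ds\to\int_0^{s_0}e^{sL}ds$ uniformly on $[-\no_0,\no_0]$. Because $L$ is bounded there, for small $s_0$ this limit is uniformly close to $s_0\,\id_\A$, hence invertible with a uniformly bounded inverse. The same then holds for large $n$, and the inverses converge uniformly. Consequently $L_n=\bigl(e^{s_0L_n}-\id_\A\bigr)\bigl(\int_0^{s_0}e^{sL_n}ds\bigr)^{-1}\to L$ uniformly on $[-\no_0,\no_0]$.
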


\noindent
Notation: let $\G$ denote the class of characteristic exponents ("generators") in (a).
Then the theorem can be briefly expressed as
\begin{equation}\label{EqDeFinetti}
\G=\Cbar\;.
\end{equation}

\begin{proof}
From (a) to (b), i.e.: $\G\subset\Cbar$:
Let $L$ be a characteristic exponent.
Then $L$ is continuous by Proposition \ref{ThmCharExp}. We shall approximate $L$ by functions in $\C$.
Let us write $L(\no)=\li n \Ltilde_n(\no)$, where
\begin{eqnarray*}
\Ltilde_n(\no)[z]&=&n\left(e^{\frac1n L(\no)}[z]-z\right)=n\left(\intR e^{i\no x}M_{1/n}(dx)[z]-z\right)\\
          &=&n\intR \biggl(e^{i\no x}M_{1/n}(dx)[z]-\half\{M_{1/n}(dx)[\one],z\}\biggr)\;,
\end{eqnarray*}
since $M_{1/n}(\RR)[\one]=\one$ and $\half\{\one,z\}=z$.
In their turn the finite completely positive measures $nM_{1/n}$ can be approximated weakly by (finite completely positive) measures $J_n$ of he form
\begin{equation}\label{InfDivInstEqJDiscr}
J_n(B)[z]=\sum_{j=1}^{k_n}\d_{x^n_j}(B)\cdot(a_j^n)^*z a_j^n\;.
\end{equation}
for some sequence of integers $k_n$, and sequences of $k_n$-tuples $\tuple {x^n}{k_n}\in\RR$ of jump sizes,
and $\tuple {a^n}{k_n}\in\A$ of jump matrices.

\noindent
We thus obtain that $L(\no)=\li n L_n(\no)$ with
\begin{equation}\label{InfDivInstEqLfromJ}
L_n(\no)[z]=\intR e^{i\no x}J_n(dx)[z]-\half\{J_n(dx)[\one],z\}\;,
\end{equation}
so that $L$ is approximated by sums of basic Davies generators:
\begin{equation}\label{InfDivInstLApprox}
L(\no)[z]=\li n \sum_{j=1}^{k_n}e^{i\no x_j^{n}}(a_j^n)^*za_j^n-\half\{(a_j^n)^*a_j^n,z\}\;.
\end{equation}
We conclude that $\G\subset\Cbar$.

\noindent
From (b) to (a):
We must show that $\Cbar\subset\G$.
\noindent
First note that $L_{x,a}\in\G$ by Proposition \ref{PropBasicDavies}.
It then suffices to show that $\G$ is a cone, closed in the pointwise topology inside the space of all continuous functions $\RR\to\L(\A)$.
Indeed, if $L$ and $L'$ are in $\G$, say $e^{tL}=M_t$ and $e^{tL'}=M'_t$,
then, by the Trotter-Kato formula
\begin{eqnarray*}
\Phi_t(\oo):=e^{t\bigl(L(\no)+L'(\no)\bigr)}&=&\li n e^{\frac t n L(\no)}\circ e^{\frac t n L'(\no)}\circ\cdots\circ e^{\frac t n L(\no)}\circ e^{\frac t n L'(\no)}\\
               &=&\li n \bigl(M_{\frac t n}*M'_{\frac t n}*\cdots*M_{\frac t n}*M'_{\frac t n}\bigr)\widehat{\phantom{A}}(\no)\\
                &=:&\li n\Nhat_t^n(\oo)\;,
\end{eqnarray*}
So $\Phi_t$ is a pointwise limit of characteristic functions of instruments $N^n_t$.
Since $L$ and $L'$ are continuous, so is $\Phi_t$.
By L\'evy's Continuity Theorem \ref{ThmLevCont}, implication (c)$\implies$(b), which holds in the completely positive case,
for each $t\ge0$ there exists an instrument $N_t=\wlim_{n\to\infty}N_t^n$, in particular $\Nhat_t(\oo)=\li n\Nhat_t^n(\oo)=\Phi_t(\oo)=e^{t(L+L')(\oo)}$.
By Proposition \ref{ThmCharExp}, again since $L+L'$ is continuous, $(N_t)$ is a weakly continuous semigroup of instruments, i.e., $L+L'\in\G$.
So $\G$ is a cone.
Finally, if some sequence $L_1,L_2,\ldots$ in $\G$ converges pointwise to a continuous function $L$, then $\displaystyle\li n e^{tL_n(\no)}=e^{tL(\no)}$,
and, again by Theorem \ref{ThmLevCont} and Proposition \ref{ThmCharExp}, we have $L\in\G$. We conclude that $\Cbar\subset\G$.

\noindent
Moreover, by L\'evy's Continuity Theorem \ref{ThmLevCont}, $\bigl(e^{tL_n}\bigr)$ is tight, and $e^{tL_n}\tto e^{tL}$ weakly.
By Proposition \ref{PropWeakToChar},  $e^{tL_n}\tto e^{tL}$ uniformly on compact sets,
and by the argument in (\ref{EqLogEps}), involving $\log_\e$, this property is transferred to the convergence $L_n\tto L$.
\end{proof}

 \section{The path space of the "needle"}\label{Needle}
The semigroup $(M_t)$ of instruments on $\RR$ determines a stochastic process $(X_t)$, $t\in[0,T]$ on $\RR$
via the family of $\L(\A)$-valued "transition kernels"
    $$K_t(x,B):=M_t(B-x)\;,$$
where $x\in\RR$, $B\in\B(\RR)$.
We have chosen a fixed time $T\ge0$ up to which the observations are done.

\noindent
A minimal probability space for this process is $(\Oo_T,\F_T)$,
where $\Oo_T=\RR^{[0,T]}$, and $\F_T$ is the product $\s$-algebra.
The process $(X_t)_{0\le t\le T}$ is given by $X_t:\Oo_T\to\RR:\xi\mapsto\xi(t)$.
As a measure on this space we take an instrument $\M_T:\F_T\to\CP(\A)$,
determined, via the Kolmogorov-Daniell construction, by the consistent family of cylinder measures,
in obvious probabilistic notation given by, for $0< t_1< t_2,\ldots,t_{n-1}<t_n=T$,
\begin{eqnarray}\label{EqDefMarkov}
&&\M_T\bigl[X_{t_1}\in B_1,X_{t_2}\in B_2,\ldots,X_{t_n}\in B_n\bigr]\\ \nonumber
&:=&\int_{B_1}\int_{B_2}\cdots\int_{B_{n-1}}\int_{B_n}K_{t_1}(0,dx_1)\circ K_{t_2-t_1}(x_1,dx_2)\circ\cdots\circ K_{T-t_{n-1}}(x_{n-1},dx_n)\;.
\end{eqnarray}
(Despite the order of factors in the integrand, $x_n$ is integrated over $B_n$ first, then $x_{n-1}$ over $B_{n-1}$, etcetera.)
If $\r_0$ denotes the state at time 0,
then
   $$B\mapsto\r_0\bigl(M_t(B)[\one]\bigr)$$
is a probability measure on $\RR$, the distribution of $X_t$.
The process $X_t$ describes the cumulated observations up to time $t$.
In the particle counting example of Section \ref{BasicDavies}, $X_t$ is the number of particles counted up to time $t$.

\goodbreak\smallskip\noindent
{\bf Time-Ordered Exponential.}

\noindent
Let $f:\RR\to\L(\A)$ be a curve in the space of linear operators on $\A$.
Then by the {\it time-ordered exponential}
   $$F(s,t)=\toexp\left(\int_s^tf(u)du\right)$$
we mean the (unique) solution for $t\ge s$, if it exists, of the differential equation $\frac{\partial}{\partial t}F(s,t)=F(s,t)f(t)$
with initial condition $F(s,s)=\id_{\A}$.

\goodbreak\smallskip\noindent
{\bf Stochastic integral of a test function.}

\noindent
For the simple function $f:[0,T]\to\RR$ given by
\begin{equation}\label{SimpleFunction}
f(t):=c_j \hbox{ for }t_j\le t<t_{j+1}\;,
\end{equation}
$j=0,\ldots,n-1$,
where $0=t_0<t_1<\cdots<t_{n-1}<t_n=T$, and $c_0,\ldots,c_{n-1}\in\RR$
we define the "stochastic integral" $X(f)$ by
    $$X(f):=\sum_{j=0}^{n-1}c_j(X_{t_{j+1}}-X_{t_{j}})\;.$$

\begin{thm}\label{ThmPaths}
The instrument $\M_T$ has the characteristic function
\begin{equation}\label{BigCharFun}
\Mchat_T(f):=\EE_{\M_T}\left(e^{iX(f)}\right):=\toexp\left(\int_0^T L\bigl(f(t)\bigr)dt\right)
\end{equation}
acting on the simple functions $f:[0,T]\to\RR$ given by (\ref{SimpleFunction}).
\end{thm}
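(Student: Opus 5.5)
For a simple function $f$ as in (\ref{SimpleFunction}), the time-ordered exponential is a finite product, because $t\mapsto L(f(t))$ is piecewise constant. On $[t_j,t_{j+1})$ the solution of $\frac{\partial}{\partial t}F(0,t)=F(0,t)L(c_j)$ is $F(0,t)=F(0,t_j)\,e^{(t-t_j)L(c_j)}$. So the plan is first to establish, by induction on $j$ (using uniqueness of solutions of linear ODEs in the finite-dimensional space $\L(\A)$ and continuity of $F$ at the break points),
\begin{equation*}
\toexp\left(\int_0^T L\bigl(f(t)\bigr)dt\right)=e^{(t_1-t_0)L(c_0)}\circ e^{(t_2-t_1)L(c_1)}\circ\cdots\circ e^{(t_n-t_{n-1})L(c_{n-1})}\;.
\end{equation*}
By Proposition \ref{ThmCharExp}, each factor equals $\Mhat_{t_{j+1}-t_j}(c_j)$. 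It therefore remains to show that $\EE_{\M_T}\bigl(e^{iX(f)}\bigr)=\Mhat_{t_1-t_0}(c_0)\circ\cdots\circ\Mhat_{t_n-t_{n-1}}(c_{n-1})$.

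For this I will use the cylinder measure (\ref{EqDefMarkov}) with times $t_1<\cdots<t_n=T$. The random variable $e^{iX(f)}=\prod_j e^{ic_j(X_{t_{j+1}}-X_{t_j})}$ depends only on $(X_{t_1},\ldots,X_{t_n})$, with $X_{t_0}=X_0=0$; the measure is started at $0$ via $K_{t_1}(0,\cdot)$. Hence
\begin{equation*}
\EE_{\M_T}\bigl(e^{iX(f)}\bigr)=\int\cdots\int\prod_{j=0}^{n-1}e^{ic_j(x_{j+1}-x_j)}\,K_{t_1}(0,dx_1)\circ K_{t_2-t_1}(x_1,dx_2)\circ\cdots\circ K_{T-t_{n-1}}(x_{n-1},dx_n)\;,
\end{equation*}
with $x_0=0$. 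Next, substitute the increments $y_{j+1}=x_{j+1}-x_j$. Since $K_s(x,B)=M_s(B-x)$, the kernel $K_{t_{j+1}-t_j}(x_j,dx_{j+1})$ becomes $M_{t_{j+1}-t_j}(dy_{j+1})$, which does not depend on $x_j$. The integrand factorizes as $\prod_j e^{ic_jy_{j+1}}$, so the iterated integral splits into the composition $\prod_j\intR e^{ic_jy}M_{t_{j+1}-t_j}(dy)$, taken in the order written. This is exactly $\Mhat_{t_1-t_0}(c_0)\circ\cdots\circ\Mhat_{t_n-t_{n-1}}(c_{n-1})$, as in the proof of Lemma \ref{LemProdCharFns}.

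The main obstacle is to justify the step that interprets $\EE_{\M_T}$ of a bounded function of finitely many coordinates as the iterated integral, and then performs the change of variables inside the operator-valued iterated integral, with the noncommutative order of composition kept intact. I would handle this first for indicator functions of products of intervals in the increment variables, where it is just (\ref{EqDefMarkov}) together with the translation structure of $K$. I would then extend to bounded continuous integrands by linearity and uniform approximation, exactly as in the scalar Fubini argument, because all measures involved are finite completely positive and composition is bilinear and continuous in finite dimension. Finally, I would check that a refinement of the partition (adding a point with $c_j=c_{j+1}$) gives the same value on both sides, using the semigroup law $e^{sL}e^{tL}=e^{(s+t)L}$, so the result does not depend on the representation of $f$.
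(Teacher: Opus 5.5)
Your proposal is correct and follows essentially the paper's route. The paper likewise rewrites the cylinder measure (\ref{EqDefMarkov}) in increment form $M_{t_1}(A_1)\circ M_{t_2-t_1}(A_2)\circ\cdots\circ M_{T-t_{n-1}}(A_n)$ and factorizes $\EE_{\M_T}\bigl(e^{iX(f)}\bigr)$ into $\Mhat_{t_1}(c_0)\circ\cdots\circ\Mhat_{T-t_{n-1}}(c_{n-1})$; it then replaces each factor by $e^{(t_{j+1}-t_j)L(c_j)}$ and recognizes the time-ordered exponential. Your additional care adds no new idea, but it fills in steps the paper leaves implicit: uniqueness of the piecewise ODE solution at the break points, justifying the operator-valued change of variables while keeping the order of composition, and independence of the partition.
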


\begin{proof}
We can rewrite (\ref{EqDefMarkov}) as follows:
for Borel subsets $\tuple A n$ of $\RR$ and times  $0< t_1< t_2,\ldots,t_{n-1}<t_n=T$:
\begin{eqnarray}\label{EqProcIndep}
&&\M_T\bigl[X_{t_1}\in A_1, X_{t_2}-X_{t_1}\in A_2,\ldots,X_T-X_{t_{n-1}}\in A_n\bigr]\\\nonumber
&&\qquad\qquad\qquad=M_{t_1}(A_1)\circ M_{t_2-t_1}(A_2)\circ\cdots\circ M_{T-t_{n-1}}(A_n)\;.
\end{eqnarray}

\noindent
We may now calculate, for the simple function $f$ from (\ref{SimpleFunction}):
\begin{eqnarray*}
&&\EE_{\M_T}\left(e^{iX(f)}\right)\\
          &=&\EE_{\M_T}\Bigl(\exp i\bigl(c_0 X_{t_1}+c_1(X_{t_2}-X_{t_1})+\ldots+c_{n-1}(X_{T}-X_{t_{n-1}})\bigr)\Bigr)\\
         &=&\EE_{\M_T}\left(e^{ic_0X_{t_1}}\times e^{ic_1(X_{t_2}-X_{t_1})}\times\cdots\times e^{ic_{n-1}(X_{T}-X_{t_{n-1}})}\right)\\
         &=&\EE_{M_{t_1}}\left(e^{ic_0\bullet}\right)\circ \EE_{M_{t_2-t_1}}\left(e^{ic_1\bullet}\right)\circ\cdots\circ\EE_{M_{T-t_{n-1}}}\left(e^{ic_{n-1}\bullet}\right)\\
         &=&\Mhat_{t_1}(c_0)\circ\Mhat_{t_2-t_1}(c_1)\circ\cdots\circ\Mhat_{T-t_{n-1}}(c_{n-1})\\
         &=&e^{t_1L(c_0)}\circ e^{(t_2-t_1)L(c_1)}\circ\cdots\circ e^{(T-t_{n-1})L(c_{n-1})}\\
         &=&\toexp\left(\int_0^T L\bigl(f(t)\bigr)dt\right)\;.
\end{eqnarray*}
\end{proof}

\subsection{Discussion}
It should be noted that
this construction is not more than a skeleton, since $\F$ is a very coarse $\s$-algebra,
only fit for expressing correlations, but no path properties such as continuity and density of jumps.
The more interesting structure, however, depends highly on the choice of the generator $L$ of $(M_t)$.
If $L=L_{x,a}$, the basic Davies generator of Section \ref{BasicDavies}, then the paths will be constant, apart from jumps
of size $x$ at the random times of the quantum jumps $\psi\mapsto a\psi$.
If $L(\oo)=-\half\oo^2\cdot\id_\A$, for example, then $(X_t)$ is the Wiener process.
And if $L(\oo)=|\oo|\cdot\id_\A$, then we get a Cauchy process, and the jumps of $X_t$ are dense in time.

\noindent
By (\ref{EqProcIndep}), we may say that $X_t$ has "$\A$-dependent increments":
so formally the process $(X_t)$ looks like a {\it L\' evy process} with independent increments.
Since the increments correlate with each other via the quantum system,
we shall call this process an {\it $\A$-L\'evy process}.
(In the classical case $\A=\CC$, our term "$\A$-dependence" actually means {\it independence}: $\CC$ denotes the algebra of "no information".)

\noindent
Here we shall not attempt to go any further than this skeleton construction, and
the expression for the characteristic function of the path space measure $\M_T$ will remain completely formal.
Although the right hand side can be easily extended from the simple functions to some large class, such as $\D[0,T]$ or $C[0,T]$,
the same cannot be said about the "stochastic integral" $X(f)$ occurring on the left, to be viewed as
   $$X(f)=\int_0^T f(t)dX_t\;.$$ 
In the case of a Davies process the path $X_t$ is constant up to a finite number of jumps, and $X(f)$ is a finite sum of values of $f$.
In the case that $X_t$ is a Wiener process, it requires stochastic integration to define  $X(f)$.
In the case of the Cauchy process, the stochastic integral is quite a nontrivial matter.

\noindent
In his later work with Barchielli\cite{BarcHol}, Holevo worked out the path properties of the stochastic process $(X_t)$,
as well as the related quantum trajectories.
They found that these paths are semimartingales, which can be chosen continuous from the right with limits from the left.
The paths are continuous if and only if the term $\Lj$ in (\ref{DecompL}) is zero.


\section{Some Linear Algebra on $\L(\A)$}\label{LinAlg}
Let $\Mdop$ be the opposite algebra of $\A$: the linear space $\A$ with involution $a\mapsto a^*$ and multiplication $(a,b)\mapsto ba$.

\begin{lem}
There is a C*-algebra isomorphism $\iota$ from the algebra $\A\ten\Mdop$
to $\L(\A)$, seen as operators on $\A$ with scalar product 
    $$\inp a b:=\dth\tr(a^*b)\;,$$
which is given by
\begin{equation}\label{EqIota}
\iota(a\ten b):\quad z\mapsto azb\;.
\end{equation}
\end{lem}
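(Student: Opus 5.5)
The plan is to check, in turn, that $\iota$ is well defined and linear, that it is multiplicative for the opposite-algebra product, that it preserves the involution, and that it is bijective. Since $\A\ten\Mdop$ and $\L(\A)$ are both finite dimensional C*-algebras, these four facts make $\iota$ a $*$-isomorphism. The norm then comes for free: an injective $*$-homomorphism between C*-algebras is isometric.

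First, the map $(a,b)\mapsto\bigl(z\mapsto azb\bigr)$ is bilinear from $\A\times\A$ into $\L(\A)$. By the universal property of the algebraic tensor product it therefore extends uniquely to a linear map $\iota:\A\ten\Mdop\to\L(\A)$. For multiplicativity, recall that in $\A\ten\Mdop$ the product is $(a\ten b)(a'\ten b')=aa'\ten b'b$. On the other side,
$$\iota(a\ten b)\circ\iota(a'\ten b')[z]=a(a'zb')b=(aa')z(b'b)=\iota(aa'\ten b'b)[z].$$
This identity is exactly why the opposite algebra appears in the second factor. Linearity then extends multiplicativity to all of $\A\ten\Mdop$.

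For the involution, take the adjoint with respect to $\inp a b=\dth\tr(a^*b)$. Using cyclicity of the trace,
$$\inp{x}{azb}=\dth\tr(x^*azb)=\dth\tr\bigl((a^*xb^*)^*z\bigr)=\inp{a^*xb^*}{z}.$$
Hence $\iota(a\ten b)^*=\iota(a^*\ten b^*)$, which is precisely $\iota\bigl((a\ten b)^*\bigr)$. So $\iota$ is a $*$-homomorphism, and it is unital since $\iota(\one\ten\one)=\id_\A$.

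The one step needing an actual argument is bijectivity. Both spaces have dimension $d^4$, so it suffices to prove surjectivity. Let $e_{ij}$ be the matrix units. Then
$$\iota(e_{ij}\ten e_{kl})[e_{mn}]=e_{ij}e_{mn}e_{kl}=\d_{jm}\d_{nk}\,e_{il}.$$
So $\iota(e_{ij}\ten e_{kl})$ maps $e_{jk}$ to $e_{il}$ and kills all other matrix units. These operators are exactly the matrix units of $\L(\A)$ with respect to the basis $(e_{mn})$ of $\A$, so they span $\L(\A)$. Thus $\iota$ is a bijective $*$-homomorphism. Finally, since $\A\ten\Mdop$ is simple (it is isomorphic to $M_{d^2}$), even injectivity alone would follow from $\iota\ne0$.
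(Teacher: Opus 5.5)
Your proof is correct and follows essentially the paper's route: the bilinear extension, multiplicativity from the opposite product in the second factor, the adjoint formula $\iota(a\ten b)^*=\iota(a^*\ten b^*)$ via cyclicity of the trace, and the dimension count $d^4$ for bijectivity. The only difference is that you verify surjectivity directly, noting that the images $\iota(e_{ij}\ten e_{kl})$ are exactly the matrix units of $\L(\A)$ in the basis $(e_{mn})$, whereas the paper verifies injectivity by reading $\sum_i\inp{e_m}{a_i e_p}\inp{e_q}{b_i e_n}$ as a matrix element of $\sum_i a_i\ten b_i$ on $\CC^d\ten\CC^d$; given the dimension count, either half suffices.
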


\begin{proof}
By bilinearity, (\ref{EqIota}) extends to a linear map $\A\ten\Mdop\to\L(\A)$.
To show that $\iota$ is injective, suppose that, for some $\tuple a k\in\A,\tuple b k\in\Mdop$ we have
   $$\iota\left(\sok i a_i\ten b_i\right)=0\;.$$
Then for all $z\in \A=M_d$ the matrix $\sok i a_izb_i\in\A$ must be 0, in particular its $(m,n)$-matrix element.
If we require this for $z=|e_p\rangle\langle e_q|$, then we obtain, for all $m,n,p,q$
\begin{eqnarray*}
0&=&\Inp{e_m}{\left(\sok i a_izb_i\right)e_n}=\sok i\inp{e_m}{a_ie_p}\inp{e_q}{b_ie_n}\\
   &=&\Inp{e_m\ten e_q}{\left(\sok i a_i\ten b_i \right)e_p\ten e_n}\;.
\end{eqnarray*}
Therefore $\sok i a_i\ten b_i=0$, and $\iota$ is injective.
Since the dimensions of $\A\ten\Mdop$ and $\L(\A)$ are both equal to $d^4$, $\iota$ is also surjective.
In order to find $\iota(a\ten b)^*$, we calculate
\begin{eqnarray*}
\inp{\iota(a\ten b)^*y}{x}&=&\inp{y}{\iota(a\ten b)x}=\dth\tr\bigl(y^*(axb)\bigr)\\
                                          &=&\dth\tr(by^*ax)=\inp{a^*yb^*}{x}=\inp{\iota(a^*\ten b^*)y}{x}\;.
\end{eqnarray*}
So $\iota(a\ten b)^*=\iota(a^*\ten b^*)$. Generally, $\iota(u)^*=\iota(u^*)$ for $u\in \A\ten\Mdop$.
And finally, for all $a_1,a_2\in\A$ and $b_1,b_2\in \Mdop$,
   $$\iota\bigl((a_1\ten b_1)\cdot(a_2\ten b_2)\bigr)[z]=\iota(a_1a_2\ten b_2b_1)[z]= a_1a_2zb_2b_1=\iota(a_1\ten b_1)\circ\iota(a_2\ten b_2)[z]\;.$$
By linear extension, $\iota(uv)=\iota(u)\iota(v)$ for all $u,v\in \A\ten\Mdop$.
\end{proof}\noindent
For the formulation of the main theorem we shall need the following notation.
\begin{definition}
Let $\ph$ be  a state on $\A$.
Then we define the maps $c_\ph:\L(\A)\to\CC$, $l_\ph, h_\ph:\L(\A)\to \A$, and $Q_\ph:\L(\A)\to\L(\A)$ as follows:
\begin{eqnarray*}
c_\ph\bigl(\iota(u\ten v)\bigr)&:=&\ph(u)\ph(v)\;;\\
l_\ph\bigl(\iota(u\ten v)\bigr)&:=&\ph(u)\bigl(v-\ph(v)\cdot\one\bigr)\;;\\
h_\ph\bigl(\iota(u\ten v)\bigr)&:=&\frac i{2}\bigl(l_\ph\bigl(\iota(u\ten v)\bigr)-l_\ph\bigl(\iota(u\ten v)^*\bigr)\bigr)\;;\\
Q_\ph\bigl(\iota(u\ten v)\bigr)&:=&\iota\bigl(\bigl(u-\ph(u)\cdot\one\bigr)\ten\bigl(v-\ph(v)\cdot\one\bigr)\bigr)\;.
\end{eqnarray*}
\end{definition}

\noindent
We note that $Q_\ph$ has the effect of making an operator in $\L(\A)$ small when it lies close to the identity.


\section{Holevo's Theorem}\label{Holevo}
We are now in a position to formulate our central result.
Given Theorems \ref{ThmDeFinetti} and \ref{ThmPaths},
which show that weakly continuous convolution semigroups and $\A$-L\' evy processes are determined by their generators, 
the characteristic exponents, it remains to classify the latter.

\begin{thm}[Holevo 1986, modified]\label{ThmHolevo}
Let $L$ be a function $\RR\to\L(\A)$.
Then the following are equivalent:
\begin{itemize}
\item[(a)]
$L$ is the characteristic exponent of a weakly continuous convolution semigroup of instruments;
\item[(b)]
$L$ is of the form:
\begin{equation}\label{DecompL}
L(\no)[z]=i\a\no z+L_0[z]+\Ld_{\s,b}(\no)[z]+\Lj_J(\no)[z]\;,
\end{equation}
where $\a$ is a real number, $L_0$ is a Lindblad generator on $\A$, $\Ld_{\s,b}$ is a diffusive generator; i.e.:
   $$\Ld_{\s,b}(\no)[z]:=\s^2\bigl(b^*zb-\half\{b^*b,z\}+i\oo(b^*z+zb)-\half\no^2z\bigr)\;,$$
for some $\s\ge0$ and $b\in \A$;
and $\Lj_J$ is given by
\begin{eqnarray*}
\Lj_J(\no)[z]:=&{\displaystyle\intRn}&\biggl(e^{i\no x}J(dx)[z]-\half\{J(dx)[\one],z\}\\
           &&-\frac{i\no x}{1+x^2}c_\ph\bigl(J(dx)\bigr)z-i\bigl[h_\ph\bigl(J(dx)\bigr),z\bigr]\biggr)\;.
\end{eqnarray*}
where $\ph$ is any fixed state on $\A$,
and $J$ is a, possibly infinite, completely positive ``jump measure''  on $\Rp$ satisfying the conditions
\begin{eqnarray}
\label{InfDivInstCondOutcome}
\intRn\frac{x^2}{1+x^2}J(dx)&<&\infty\;;\quad\hbox{[jumps in outcome space]}\\
\label{InfDivInstCondState}
\intRn\Qphi\bigl(J(dx)\bigr)&<&\infty\;.\quad\hbox{[jumps in state space]}
\end{eqnarray}
\end{itemize}
If $J$ satisfies  these two conditions, then the integral expression for $\Lj_J$ becomes integrable due to the compensating terms containing $c_\ph$ and $h_\ph$.
\end{thm}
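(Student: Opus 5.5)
Both directions go through the quantum de Finetti Theorem \ref{ThmDeFinetti}, $\G=\Cbar$, so the task is to identify the pointwise closure $\Cbar$ of the cone generated by the basic Davies generators $L_{x,a}$ with the set of functions of the form (\ref{DecompL}).

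\textbf{(b)$\implies$(a).} Since $\G$ is a closed cone, it suffices to show that each of the four summands lies in $\Cbar$.
\begin{itemize}
\item The term $i\a\no z$ is the limit as $n\to\infty$ of $L_{\a/n,\sqrt n\,\one}(\no)=n(e^{i\no\a/n}-1)\,\id$.
\item A Lindblad generator $L_0[z]=\sum_k a_k^*za_k-\half\{a_k^*a_k,z\}+i[h,z]$ is handled in two parts. The dissipative part is $\sum_k L_{0,a_k}$. The Hamiltonian part is a limit: take $a=\sqrt n(\one+\frac{i}{n}h')$ with $x=0$. Then $a^*za-\half\{a^*a,z\}=i[z,h']+O(1/n)$, up to sign conventions.
\item The diffusive term comes from $L_{x_n,a_n}$ with $x_n=\s/\sqrt n$ and $a_n=\sqrt n(\one+\frac1{\sqrt n}\s b)$. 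Expand $e^{i\no x_n}=1+i\no x_n-\half\no^2x_n^2+\dots$ and collect terms. The divergent term $n(z-z)$ cancels. The $\sqrt n$ term gives $i\no\s\sqrt n\,z$, which is cancelled by adding $L_{-\s/\sqrt n,\sqrt n\one}$. What remains is $\s^2(b^*zb-\half\{b^*b,z\}+i\no(b^*z+zb)-\half\no^2z)$ plus a commutator that is absorbed by the previous step.
\item For the jump part, first truncate $J$ to $\{|x|>\e\}$ and replace it by discrete approximations $\sum_j\d_{x_j}a_j^*\cdot a_j$. On that region $J$ is finite by (\ref{InfDivInstCondOutcome}), so $\intRn$ restricted to it is a finite sum of $L_{x,a}$'s plus a drift term and a commutator term, both already shown to lie in $\Cbar$. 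Then let $\e\downarrow0$.
\end{itemize}
The crux of the last step is dominated convergence. Using the decomposition $a=\ph(a)\one+(a-\ph(a)\one)$ in $\iota$-coordinates, the compensated integrand splits into three pieces: the part $Q_\ph(J(dx))$, a part bounded by $x^2/(1+x^2)$, and a part of size $|e^{i\no x}-1-\frac{i\no x}{1+x^2}|$ times $c_\ph$. Each piece is integrable by (\ref{InfDivInstCondOutcome})--(\ref{InfDivInstCondState}), locally uniformly in $\no$. This also proves the final assertion of the theorem about integrability.

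\textbf{(a)$\implies$(b).} By the proof of Theorem \ref{ThmDeFinetti}, $L=\lim_n L_n$ locally uniformly, where $L_n$ is given by (\ref{InfDivInstEqLfromJ}) with $J_n=nM_{1/n}$. The plan is to extract a limit of $J_n$.
\begin{enumerate}
\item Define the finite completely positive measures
$$\nu_n(dx):=\frac{x^2}{1+x^2}\,c_\ph(J_n(dx))\;+\;\text{(a positive density built from }Q_\ph(J_n(dx))\text{)}.$$
\item Show that the family $(\nu_n)$ is bounded and tight. Boundedness should come from reading off $L_n(\no)$ at small $\no$ and from components of $L_n(0)$ orthogonal to the identity, much as in the classical L\'evy--Khinchin proof via $\int_{-\d}^{\d}(L(0)-L(\no))\,d\no$. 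Tightness follows from the argument of Theorem \ref{ThmLevCont}(c)$\implies$(a).
\item By Proposition \ref{BanachAlaoglu} and Lemma \ref{LemTussen}, pass to a subsequence converging weakly on $\Rbar$.
\item The limit measure splits into its part on $\Rp$, which is $J$ (weighted), and an atom at $0$, whose matrix coefficients yield $\s^2$, $b$, and the Lindblad part.
\item The drift constant $\a$ and the commutator $h$ emerge as limits of the compensating terms. Their limits exist because everything else in $L_n$ converges.
\end{enumerate}

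The main obstacle I expect is the a priori bound in step 2: controlling $\int Q_\ph(J_n(dx))$ uniformly in $n$ from the convergence of $L_n$ alone. The approach I would try first is to pair $L_n(\no)$ with the functional $A\mapsto c$-coordinates, i.e. project onto the $(u-\ph(u))\ten(v-\ph(v))$ block of $\A\ten\Mdop$ via the Choi positivity of $J_n$. A second difficulty, at the atom at $0$, is showing that the quadratic-in-$\no$ coefficient has the rank-one form $\s^2(\ldots)$. I would handle this with a Cauchy--Schwarz argument on the positive Choi matrix of the limiting $2\times2$ block structure.
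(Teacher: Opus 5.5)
\textbf{There is a genuine gap in (a)$\implies$(b): nothing in your construction produces $b$.} Your overall architecture is the paper's: reduce to $\G=\Cbar$, approximate each summand of (b) by elements of $\C$, extract weak limits of reweighted $J_n$, and isolate the atom at $0$. The problem is in the five-term splitting of $L_n$. Besides the $\Qphi(J_n)$-part and the $c_\ph$-part (whose weight $e^{i\oo x}-1-\frac{i\oo x}{1+x^2}$ is $O(x^2)$), there is the cross part $(e^{i\oo x}-1)\bigl(z\,l_\ph(J_n(dx))+l_\ph(J_n(dx))^*z\bigr)$, whose weight is only $O(x)$.

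Your $\nu_n$ contains $\frac{x^2}{1+x^2}c_\ph(J_n)$ and $\Qphi(J_n)$ but no cross component. So its weak limit says nothing about the limit of $l_\ph(J_n)$, and you cannot pass to the limit in this term. Weighting all of $J_n$ by $\frac{x^2}{1+x^2}$ does not help either: that limit has an atom at $0$ which is a multiple of $\id$, with no cross component. Hence in your step 4 there is nothing at the atom from which $b$ and the term $i\oo\s^2(b^*z+zb)$ could be read off. The $\oo^2$-coefficient is automatically scalar, namely $-\half\oo^2c_\ph(\Sigma(\{0\}))$; what actually needs Cauchy--Schwarz is the linear-in-$\oo$ coefficient, and that lives on the cross part.

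The paper fills this with a third measure, $\Delta_n^\ph(dx)=\frac{x}{1+ix}l_\ph(J_n(dx))$:
\begin{itemize}
\item its tightness follows from that of $\Sigma_n$ and $\Gamma_n^\ph$ via $\Deltatilde_\ph(f)^*\bullet\Deltatilde_\ph(f)\le c_\ph(\Sigmatilde(|f|))\,\Gammatilde_\ph(|f|)$;
\item its limit gives $\s^2b:=\Delta_\ph(\{0\})$;
\item the limiting inequality $\Delta_\ph(\{0\})^*\bullet\Delta_\ph(\{0\})\le\s^2\,\Gamma_\ph(\{0\})$ is exactly what makes $\Gamma_\ph(\{0\})-\s^2b^*\bullet b$ completely positive, i.e.\ what makes the remainder at $0$ a Lindblad generator.
\end{itemize}

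Your ``$2\times2$ block'' idea can do all of this at once, but only if the cross block gets the weight $\frac{x}{1+ix}$, whose modulus is the geometric mean of $\frac{x^2}{1+x^2}$ and $1$. Concretely, for $J_n=\sum_j\d_{x_j}a_j^*\bullet a_j$, replace $a_j$ by $\frac{x_j}{1-ix_j}\ph(a_j)\one+\bigl(a_j-\ph(a_j)\one\bigr)$. This gives a single CP measure whose $c_\ph$-, $l_\ph$- and $\Qphi$-components are $c_\ph(\Sigma_n)$, $\Delta_n^\ph$ and $\Gamma_n^\ph$. Complete positivity of its limit at $\{0\}$ is then precisely the required inequality, by a Schur-complement argument on the Choi matrix.

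\textbf{Smaller points.}
\begin{itemize}
\item In (b)$\implies$(a) your diffusive approximant does not converge to $\Ld_{\s,b}$. One finds $L_{x_n,a_n}(\oo)[z]=\frac{\s\sqrt n}{2}[b^*-b,z]+i\s\sqrt n\,\oo z+\s^2\bigl(b^*zb-\half\{b^*b,z\}+i\oo(b^*z+zb)-\half\oo^2z\bigr)+O(n^{-1/2})$. So the commutator is of order $\sqrt n$, not a bounded term to be absorbed. Moreover, the compensator $L_{-\s/\sqrt n,\sqrt n\one}(\oo)=n\bigl(e^{-i\s\oo/\sqrt n}-1\bigr)\id$ contributes a second $-\half\s^2\oo^2$, doubling the Gaussian term. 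You can instead compensate with exact elements of the closed cone $\Cbar$: the drift $-i\s\sqrt n\,\oo z$ and a Hamiltonian term of size $\sqrt n$. Alternatively, use the paper's symmetric pair $\half\s^2n^2\bigl(L_{1/n,\one+b/n}+L_{-1/n,\one-b/n}\bigr)$, in which all first-order terms cancel.
\item In your integrability check for $\Lj_J$, the cross part is not dominated by $\frac{x^2}{1+x^2}$. It is $O(|x|)$ and is controlled only through Cauchy--Schwarz, using both conditions at once.
\item The a priori bound you flag as the main obstacle is in fact immediate. $\Qphi$ annihilates the anticommutator terms, so $\Qphi\bigl(L_n(\oo)\bigr)$ is exactly the characteristic function of $\Gamma_n^\ph$. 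L\'evy's Continuity Theorem then gives boundedness, tightness and weak convergence together.
\end{itemize}
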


\noindent
Note that the conditions (\ref{InfDivInstCondOutcome}) and (\ref{InfDivInstCondState})
allow an infinity of jumps per unit time, provided that these jumps are small,
in outcome space and in state space respectively.

\noindent
In Holevo's papers\cite{Hol5} the condition (\ref{InfDivInstCondState}) had the form described in Corollary \ref{CorEquivCond} (a) below.
See Section \ref{InfDivInstIntCond} for a comparison.

\subsection{Discussion.}
Holevo's Theorem decomposes an arbitrary generator $L$ into four parts:
\begin{itemize}
\item
The shift part $L(\oo)[z]=i\a\oo z$ merely shifts the distribution $M_t$ of $X_t$ by an amount $\a t$ to the right.
On its own, it describes a needle moving to the right at speed $\a$, regardless of what happens to the quantum system.
\item
A free evolution part $L_0$, which induces a Hamiltonian or dissipative motion on the quantum system.
On its own, it describes  a quantum system that evolves according to $\r\mapsto (e^{tL_0})^*\r$,
without any connection to the needle.
\item
A diffusive part $\Ld$, where both the needle and the quantum system perform a diffusive, Brownian-like motion,
with continuous paths.
Here we may think of measurement in continuous time, of the electromagnetic field emitted by the quantum system, by heterodyne detection\cite{BarGreg09},
or continuous time measurement of some quantum observable\cite{Hol5}.
\item
A jump part $\Lj$.
Taken on its own, it describes a needle and a quantum system moving only by making jumps.
Different jump sizes of the needle are accompanied by different quantum jumps.
Infinitly many jumps may occur in a finite time, as is the case, for example, in the classical Cauchy process.\cite{Applebaum}
\end{itemize}

\noindent
There is some freedom in the representation (b) of $L$.
The state $\ph$ on $\A$ may be chosen arbitrarily, and $b$ is determined only up  to a multiple of $\one_\A$.
Moreover, there is some freedom, not used in the above formulation of the theorem, to replace the expression $\frac {x^2}{1+x^2}$ by some other function,
behaving like $x^2$ around $x=0$ and like 1 around $x=\pm\infty$.

\noindent
A change in $\ph$ will lead, via the $c_\ph$-term in $\Lj$, to a shift in the instrument, which is absorbed by the parameter $\a$,
and via the $h_\ph$-term in $\Lj$ to the addition of a term $i[h,z]$ to the generator $L$,
which is absorbed by the Hamiltonian part of $L_0$.
Replacing the function $\frac x{1+x^2}$ also leads to a finite shift in $L$, absorbed by $\a$.

\noindent
However, the variance $\s^2$ in $\Ld$, the jump measure $J$, and the non-unital part of $b$ are all completely fixed by $L$.
(Cf. the last statement in Proposition \ref{InfDivInstThmSigmaGammaDelta}.)

\noindent
It should be noted that the different components into which Holevo's formula decomposes the $\A$-L\' evy process, are highly intertwined.
Their statistical independence, which holds in the classical case $\A=\CC$, is lost.

\noindent
When performing the observations, the movements of the "needle" enable us to reconstruct the motion of the quantum system,
provided that its initial state and its generator are known to us.
In fact, the state of the quantum system can be viewed as nothing but an encoding of the observer's information concerning the quantum system,
given the past motions of the "needle".
No quantum jumps can occur without a jump of the needle.


\smallskip\noindent
{\bf Proof of Holevo's Theorem \ref{ThmHolevo}.}
Let $\L$ ("L\' evy-Khinchin-class") denote  the class of functions $L$, described by (b) of the theorem.
Our quantum version, Theorem \ref{ThmDeFinetti}, of de Finetti's theorem says that $\G=\Cbar$,
so in order to prove Holevo's Theorem \ref{ThmHolevo}, saying that $\G=\L$, it suffices to prove that $\Cbar=\L$.
We start by showing that $\Cbar\subset\L$.

\subsection{Proof that $\Cbar\subset\L$.}
Suppose that $L\in\Cbar$.
This means that $L$  is of the form (\ref{InfDivInstLApprox}); i.e., it can be written as a limit of expressions $L_n$ as in  (\ref{InfDivInstEqLfromJ})
with finite completely positive measures $J_n$ given by (\ref{InfDivInstEqJDiscr}).

\noindent
Despite the convergence $L_n\tto L$ in (\ref{InfDivInstLApprox}), the sequence of measures
$(J_n)_{n\in\NN}$ of (\ref{InfDivInstEqJDiscr}) need not itself converge.
This is the main difficulty we encounter in proving the theorem, but also the source
of interesting new contributions such as $\Ld$.

\noindent
It is gratifying that we {\it do} have convergence of three derived sequences of measures.

\begin{pro}[Convergence of $\Sigma_n$, $\Gamma_n^\ph$ and $\Delta_n^\ph$]\label{InfDivInstThmSigmaGammaDelta}
Let $\ph$ be a state on $\A$, and
let $(J_n)_{n\in\NN}$ be a sequence of finite completely positive measures on $\RR$.
Suppose that the sequence $(L_n)_{n\in\NN}$, given by (\ref{InfDivInstEqLfromJ}) converges pointwise to a continuous function $L:\RR\to\L(\A)$.
Then the sequences of completely positive measures $(\Sigma_n)_{n\in\NN}$ and $(\Gamma_n^\ph)_{n\in\NN}$, given by
\begin{eqnarray}
\label{InfDivInstEqSigman}
\Sigma_n(dx)&:=&\frac{x^2}{1+x^2}J_n(dx)\\
\label{InfDivInstEqGamman}
\hbox{and}\quad\Gamma_n^\ph(dx)&:=&\Qphi\bigl(J_n(dx)\bigr)
\end{eqnarray}
converge weakly on $\RR$ to finite completely positive measures $\Sigma$ and $\Gamma_\ph$ respectively.

\noindent
If on $\RR\setminus\{0\}$ we define the possibly infinite measure
\begin{equation}\label{EqSigmaJ}
J(dx):=\frac{1+x^2}{x^2}\Sigma(dx)\;,
\end{equation}
then we have, on $\Rp$:
\begin{equation}\label{InfDivInstEqGammaQu}
\Gamma_\ph(dx)=\Qphi\bigl(J(dx)\bigr)\;.
\end{equation}
Moreover,
the sequence $(\Delta_n^\ph)_{n\in\NN}$ of $\A$-valued measures given by
\begin{equation}\label{InfDivInstEqDeltan}
\Delta_n^\ph(dx):=\frac{x}{1+ix}l_\ph\bigl(J_n(dx)\bigr)\;,
\end{equation}
is tight and converges to an $\A$-valued measure $\Delta_\ph$ which on $\Rp$ satisfies
\begin{equation}\label{InfDivDistEqDeltaph}
\Delta_\ph(dx)=\frac{x}{1+ix}l_\ph\bigl(J(dx)\bigr)\;.
\end{equation}
In the completely positive ordering we have the inequality:
\begin{equation}\label{InfDivInstEqIneqZero}
\Delta_\ph(\{0\})^*\bullet\Delta_\ph(\{0\})\le c_\ph\bigl(\Sigma(\{0\})\bigr)\cdot\Gamma_\ph(\{0\})\;.
\end{equation}
Finally, $c_\ph\bigl(\Sigma(\{0\})\bigr)$ does in fact not depend on $\ph$, and $\Delta_\ph(\{0\})$ only up to addition of a multiple of $\one_\A$.
\end{pro}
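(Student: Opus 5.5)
The plan is to extract, from the pointwise convergence $L_n\to L$, suitable scalar-valued or operator-valued ``characteristic functions'' whose limits are continuous at $0$. Then Lévy's Continuity Theorem \ref{ThmLevCont}, direction (c)$\implies$(a)$\implies$(b), gives tightness and weak convergence of $\Sigma_n$ and $\Gamma^\ph_n$. By Theorem \ref{ThmDeFinetti}, $L_n\to L$ uniformly on compact sets.

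\textbf{The measure $\Sigma_n$.} Apply $c_\ph$, or better the normalised trace pairing $A\mapsto\inp{\one}{A[\one]}$, to $L_n(\oo)$. This produces a scalar function
\[
\ell_n(\oo)=\intR\bigl(e^{i\oo x}-1\bigr)\,\mu_n(dx),
\qquad \mu_n(dx)=\tfrac1d\tr\bigl(J_n(dx)[\one]\bigr),
\]
which is exactly the classical Lévy–Khinchin situation. Imitating the classical argument, I form
\[
\ell_n(\oo)-\frac1{2h}\int_{-h}^h \ell_n(\oo+s)\,ds
=\intR e^{i\oo x}\Bigl(1-\frac{\sin hx}{hx}\Bigr)\mu_n(dx).
\]
At operator level I apply the same averaging to $L_n$ itself. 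The weight $1-\frac{\sin hx}{hx}$ is comparable to $\frac{x^2}{1+x^2}$. The completely positive measures $\bigl(1-\frac{\sin hx}{hx}\bigr)J_n(dx)$ therefore have characteristic functions $L_n(\oo)-\frac1{2h}\int L_n(\oo+s)\,ds$, because the $-\half\{\cdot,\cdot\}$ part is independent of $\oo$ and cancels. These converge to a continuous limit, so Theorem \ref{ThmLevCont} gives tightness and weak convergence. Multiplying by the bounded continuous ratio $\frac{x^2}{1+x^2}\big/\bigl(1-\frac{\sin hx}{hx}\bigr)$ then transfers tightness and weak convergence to $\Sigma_n$, using Lemma \ref{LemTussen} and the definition of weak convergence on $\RR$.

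\textbf{The measure $\Gamma^\ph_n$.} I write $J_n(dx)=\sum_j\iota(a_j^*\ten a_j)\d_{x_j}$, so that
\[
Q_\ph\bigl(J_n(dx)\bigr)=\sum_j\iota\bigl((a_j-\ph(a_j))^*\ten(a_j-\ph(a_j))\bigr)\d_{x_j},
\]
which is again completely positive. The key algebraic identity is that $L_n(\oo)$, after subtracting the terms linear in $l_\ph$ and $c_\ph$, can be rewritten as
\[
\intR e^{i\oo x}\,Q_\ph\bigl(J_n(dx)\bigr)+(\text{a Hamiltonian and scalar part}).
\]
Concretely, I expand $a=(a-\ph(a))+\ph(a)$ in $a^*za-\half\{a^*a,z\}$. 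To isolate $\intR e^{i\oo x}\Gamma^\ph_n(dx)$ from the limit, I use a projection $P_\ph$ on $\L(\A)\cong\A\ten\Mdop$ built from $\id-\ph$ on each tensor leg. This annihilates the terms of the form $z\mapsto uz$, $z\mapsto zv$ and $z\mapsto z$. Since $P_\ph$ is linear and continuous, $P_\ph(L_n(\oo))$ converges pointwise to the continuous function $P_\ph(L(\oo))$. Theorem \ref{ThmLevCont} then yields tightness and convergence to $\Gamma_\ph$.

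\textbf{Identifying the limits on $\Rp$.} Off $0$, the factor $\frac{1+x^2}{x^2}$ is continuous. The map $Q_\ph$ is linear, and on test functions supported away from $0$ it commutes with weak limits. Hence $\Gamma_\ph=Q_\ph(J)$ on $\Rp$, which is \eqref{InfDivInstEqGammaQu}. For $\Delta^\ph_n$, I would apply $(\id-\ph)$ to the second leg and $\ph$ to the first leg of $J_n$. I then bound $|l_\ph(J_n(dx))|$ via the Cauchy–Schwarz inequality for the positive sesquilinear form $(a,b)\mapsto\ph(a^*)(b-\ph(b))$ defined through $J_n$. This gives
\[
\Bigl\|\frac{x}{1+ix}\,l_\ph(J_n)(B)\Bigr\|^2\lesssim \Sigma_n(B)\,\Gamma^\ph_n(B).
\]
Tightness of $\Delta^\ph_n$ follows from tightness of $\Sigma_n$ and $\Gamma^\ph_n$, and I take a subsequential limit. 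Uniqueness of the limit comes from the explicit formula on $\Rp$ together with the $\{0\}$-part being determined by $L$ itself, namely the coefficient of $i\oo(b^*z+zb)$. Passing the Cauchy–Schwarz inequality to the limit at the atom $\{0\}$ gives \eqref{InfDivInstEqIneqZero}.

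The last claims follow from how $\Sigma(\{0\})$ and $\Delta_\ph(\{0\})$ enter the $\oo^2$ and $\oo$ terms of $L$. Changing $\ph$ shifts $l_\ph$ by multiples of $\one$ and leaves the $\oo^2$ coefficient fixed.

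\textbf{Main obstacle.} I expect the hard step to be the algebraic decomposition showing that $\Gamma^\ph_n$ is exactly the completely positive part extractable from $L_n$, with a continuous limit, and doing this uniformly enough to invoke Theorem \ref{ThmLevCont}.
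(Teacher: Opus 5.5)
Your handling of $\Sigma_n$ (averaging plus an invertible weight), of $\Gamma_n^\ph$ (your $P_\ph$ is exactly $Q_\ph$), the identification on $\Rp$, and the Cauchy--Schwarz tightness bound for $\Delta_n^\ph$ all match the paper.

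\textbf{Convergence of $(\Delta_n^\ph)$.} The gap is the convergence of the \emph{whole} sequence. You take subsequential limits and claim uniqueness because the atom at $0$ is ``determined by $L$''. It is not. Pass to the limit in the five-term identity of Lemma~\ref{InfDivInstEqFiveTerm} along a subsequence. Then $\Delta_\ph(\{0\})$ enters $L$ only through $i\oo\bigl(z\Delta_\ph(\{0\})+\Delta_\ph(\{0\})^*z\bigr)$, next to a drift $i\a\oo z$ that is itself only defined as a limit along that subsequence. Replacing $\Delta_\ph(\{0\})$ by $\Delta_\ph(\{0\})+\g\one$ and $\a$ by $\a-2\Re\g$ leaves $L$ unchanged. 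So two subsequential limits could a priori differ by $\g\one\cdot\d_0$.

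This is exactly why $b$ in Theorem~\ref{ThmHolevo} is fixed only modulo $\one$. Appealing to that $b$ here would also be circular, since $b$ is built from $\Delta_\ph(\{0\})$.

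The paper needs no uniqueness argument. The one-sided second-order average $H_n(\oo)=l_\ph\bigl(L_n(\oo)-2\int_0^1(1-\nu)L_n(\oo-\nu)d\nu\bigr)$ is the characteristic function of $\k(x)\,l_\ph(J_n(dx))$, where $\k(x)\sim\frac i3x$ vanishes to \emph{first} order at $0$, like $\frac{x}{1+ix}$. Your symmetric kernel $1-\frac{\sin x}x\sim\frac{x^2}6$ cannot do this, because it vanishes to second order. Hence $\k\,l_\ph(J_n)=h\,\Delta_n^\ph$ with $h$ invertible in $C(\Rbar)$. Theorem~\ref{ThmLevCont}, (a)$\implies$(b), fed with your tightness bound, then gives convergence of the full sequence.

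Your route could be rescued by noting $\ph\circ l_\ph=0$. Every $\Delta_n^\ph$ and every limit is then $\ker\ph$-valued, which rules out a nonzero difference $\g\one$.

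\textbf{The closing claims.} These are likewise not established. ``Changing $\ph$ shifts $l_\ph$ by multiples of $\one$'' is false: $l_\th(a^*\bullet a)-l_\ph(a^*\bullet a)=\bigl(\overline{\th(a)}-\overline{\ph(a)}\bigr)a+(\hbox{scalar})\cdot\one$. The non-scalar part disappears only at the atom in the limit, and that is what the paper's Step~4 estimate shows. On $\supp f_m$ the difference carries a factor $|x|\le\frac1m$, and $|(\th-\chi)(a)|=|(\th-\chi)(q_\ph(a))|\le\norm{\th-\chi}\,\norm{q_\ph(a)}$ bounds it by $\frac2m\tr\Gamma_n^\ph(\RR)[\one]$.

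For $c_\ph(\Sigma(\{0\}))$, your $\oo^2$ idea can be made rigorous once the limiting integral representation is in hand. Dominated convergence gives $L(\oo)/\oo^2\to-\half c_\ph(\Sigma(\{0\}))\,\id_\A$ as $\oo\to\infty$. The paper's argument is more direct: $\frac{x^2}{1+x^2}\le\frac1{m^2}$ on $\supp f_m$ gives $Q_\tau(\Sigma(\{0\}))=0$, which forces $\Sigma(\{0\})=\lambda\,\id_\A$.
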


\noindent
For the proof we shall need several steps.

\begin{proof}[Step 1: Convergence of $\Sigma_n$.]
The situation with $\Sigma_n$  is the same as in the classical L\'evy-Khinchin formula.
Consider the following sequence of integrals:
\begin{equation}\label{InfDivInstEqLConv}
F_n(\oo):=L_n(\oo)-\half\int_{-1}^1 L_n(\oo-\nu)d\nu\;.
\end{equation}
Since $L_n$ converges to $L$ uniformly on $[-1,1]$ by Theorem \ref{ThmDeFinetti},
the integrals $F_n$ converge pointwise to the continuous function $F:\RR\to\L(\A)$, given by
   $$F(\oo):=L(\oo)-\half\int_{-1}^1 L(\oo-\nu)d\nu\;.$$
Note that $F_n$ is linear in $L_n$ and vanishes if $L_n$ is constant. 

\noindent
Now, if we substitute into (\ref{InfDivInstEqLConv}) the expression (\ref{InfDivInstEqLfromJ}) for $L_n$,
the second term of the latter cancels, since it is constant as a function of $\oo$.
The first term of (\ref{InfDivInstEqLfromJ}) yields, by Fubini,
\begin{eqnarray*}
F_n(\oo)&=&\intR e^{i\oo x}J_n(dx)-\half\int_{-1}^1\left(\intR e^{i(\oo-\nu)x}J_n(dx)\right)d\nu\\
            &=&\intR\eiox\left(1-\half\int_{-1}^1 e^{-i\nu x}d\nu\right)J_n(dx)\\
             &=&\intR e^{i\oo x}\left(1-\frac{\sin x}x\right)J_n(dx)\\
            &=:&\intR e^{i\oo x}\Sigma_n'(dx)\;,
\end{eqnarray*}
where we have defined a sequence of finite completely positive measures $\Sigma'_n:=\left(1-\frac{\sin x}x\right)J_n(dx)$.
Since their characteristic functions $F_n$ converge pointwise to a continuous function $F$,
by L\'evy's Continuity Theorem \ref{ThmLevCont} $(\Sigma_n')$ is tight and converges weakly to a finite completely positive measure $\Sigma'$,
with Fourier transform $F$.
Now, note that $\Sigma'_n=k\cdot\Sigma_n$, where $k\in C(\Rbar)$ is given by
   $$k(x)=\begin{cases}\left(1-\frac{\sin x}x\right)\cdot\frac{1+x^2}{x^2}&\hbox{ if }x\notin\{0,\infty\}\\
                                    \frac16\hbox{ if }x=0;\\
                                    1\hbox{ if }x=\infty.
                                                     \end{cases}$$
As $k$ is invertible in $C(\Rbar)$, it follows that also, for all $f\in C(\Rbar)$:
\begin{eqnarray*}
\int_\Rbar f(x)\Sigma_n(dx)&=&\int_\Rbar f(x)k(x)^{-1}\Sigma'_n(dx)\\
                                         &\stackrel{n\to\infty}{\tto}&\int_\Rbar f(x)k(x)^{-1}\Sigma'(dx)=\int_\Rbar f(x)\Sigma(dx)\;.
\end{eqnarray*}
I.e., $\displaystyle\wlim_{n\to\infty}\Sigma_n=\Sigma$ on $\Rbar$, and then on $\RR$ by tightness (Lemma \ref{LemTussen}).

\noindent
{\bf Remark:}
Above we have applied a trick from Lukacs\cite{Lukacs} to deal with the convergence $F_n\tto F$: 
Since in (\ref{InfDivInstEqLConv}) the integral is over a compact set, we can apply the uniform convergence from Proposition \ref{PropWeakToChar}.
From a computational point of view, however, it would have been much smoother to write instead
    $$F_n(\oo):=L_n(\oo)-\half\intR e^{-|\nu|}L_n(\oo-\nu)d\nu\;,$$
which leads directly to $\Sigma_n$ with $\Sigmahat_n=F_n$, since $\half e^{-|\nu|}$ is the Fourier transform of $\frac{1}{1+x^2}$.
However, uniform convergence of $L_n$ to $L$ on compact sets would not have sufficed to prove pointwise convergence $F_n\tto F$.

\noindent{\bf Step 2:}
A similar construction is possible with the measure sequence $(\Gamma_n^\ph)_{n\in\NN}$.
We consider the sequence of functions $G^\ph_n:\RR\to\L(\A)$:
   $$G_n^\ph(\oo):=\Qphi\bigl(L_n(\oo)\bigr)$$
We substitute the expression (\ref{InfDivInstEqLfromJ}) for $L_n$.
Since $\Qphi$ annihilates anticommutators,
again the second term of the latter expression drops out, and we obtain that
   $$G_n^\ph(\oo)=\intR e^{i\oo x}\Qphi\bigl(J_n(dx)\bigr)=\intR e^{i\oo x}\Gamma_n^\ph(dx)\;,$$
the characteristic function of the completely positive measure $\Gamma_n^\ph$.
Clearly, $G_n^\ph(\oo)$ tends pointwise to the continuous function $\Qphi\bigl(L(\oo)\bigr)$ as $n\to\infty$.
So, again by Theorem \ref{ThmLevCont},
$\Gamma_n^\ph$ converges weakly to to a completely positive measure $\Gamma_\ph$ with Fourier transform
$\Gammahat_\ph(\oo)=\Qphi\bigl(L(\oo)\bigr)$.

\noindent
To prove (\ref{InfDivInstEqGammaQu}),
let $f\in C_b(\RR)$ have compact support contained in $\Rp$.
Then we have, as $x\mapsto f(x)\cdot(1+x^2)/x^2$ is also bounded and continuous,
\begin{eqnarray*}\label{EqJustify}
\intR f(x)\Gamma_\ph(dx)&=&\li n \int_\Rp f(x)\Qphi\bigl(J_n(dx)\bigr)\\
                       &=&\li n \int_\Rp f(x)\frac{1+x^2}{x^2}\Qphi\bigl(\Sigma_n(dx)\bigr)\\
                       &=&\intR f(x)\frac{1+x^2}{x^2}\Qphi\bigl(\Sigma(dx)\bigr)
                        = \intR f(x)\Qphi\bigl(J(dx)\bigr)\;,
\end{eqnarray*}
and the statement (\ref{InfDivInstEqGammaQu}) follows.
\end{proof}
\noindent
The measure $\Delta^\ph_n$ is connected to the pair $(\Sigma_n,\Gamma_n^\ph)$ by a kind of Cauchy-Schwarz relation.
We begin with a simple lemma.

\begin{lem}[Vector Cauchy-Schwarz]\label{InfDivInstThmVectorCS}
For $j=1,2,\ldots,n$ let $w_j\ge0$, $\a_j\in\CC$, and $\th_j\in\CC^d$. Then
   $$\Norms{\son j w_j\a_j\th_j}\le\left(\son j w_j|\a_j|^2\right)\cdot\left(\son k w_k\norms{\th_k}\right)\;.$$
\begin{proof}
Let $\th_j=(\th_{ji})_{i=1}^d$.
Then
\begin{eqnarray*}
\Norms{\son j w_j\a_j\th_j}&=&\sod i \left|\son j w_j\a_j\th_{ji}\right|^2\\
                          &\le&\sod i \left(\son j w_j|\a_j|^2\right)\cdot\left(\son k w_k|\th_{ki}|^2\right)\\
                          &=&\left(\son j w_j|\a_j|^2\right)\cdot\left(\son k w_k||\th_{k}||^2\right)\;,
\end{eqnarray*}
where the inequality is Cauchy-Schwarz in the space $l^2\bigl(\{1,2,\ldots,n\},w)$.
\end{proof}
\end{lem}

\begin{lem}[Cauchy-Schwarz for finite $\Delta_\ph$]\label{CSDelta}
Let $J$ be a finite completely positive measure on $\RR$,
and let $\Sigma(dx):=\frac{x^2}{1+x^2}J(dx)$, $\Gamma_\ph(dx):=\Qphi\bigl(J(dx)\bigr)$.
Let
   $$\Delta_\ph(dx):=\frac{x}{1+ix}l_\ph\bigl(J(dx)\bigr)\;.$$
Then we have, in the completely positive ordering on $\L(\A)$, for all $f\in C_b(\RR)$:
\begin{equation}\label{InfDivInstEqCS}
\Deltatilde_\ph(f)^*\bullet\Deltatilde_\ph(f)\le c_\ph\bigl(\Sigmatilde(|f|)\bigr)\cdot\Gammatilde_\ph(|f|)\;.
\end{equation}
\end{lem}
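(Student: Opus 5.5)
The plan is to reduce to a finite discrete jump measure, where the inequality becomes the Vector Cauchy--Schwarz Lemma \ref{InfDivInstThmVectorCS}, and then pass to the limit. Throughout I read $\Delta^*\bullet\Delta$ as the map $z\mapsto\Delta^*z\Delta$ in $\L(\A)$, and the ordering as the completely positive ordering, which by the Choi--Jamio\l kowski isomorphism is positivity of $\oo_A$.

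\textbf{Step 1 (reduction to discrete $J$).} For a partition of a large interval $[-K,K]$ into small cells $B_j$ with points $x_j\in B_j$, plus the two tails, put
\[J_m:=\sum_j\d_{x_j}\cdot J(B_j).\]
Since $J$ is finite, $J_m\tto J$ weakly on $\RR$ as the mesh goes to $0$ and $K\to\infty$. The functions $x\mapsto \frac{x^2}{1+x^2}|f(x)|$ and $x\mapsto\frac{x}{1+ix}f(x)$ lie in $C_b(\RR)$, and $c_\ph$, $l_\ph$, $\Qphi$ are linear maps on the finite dimensional space $\L(\A)$. Hence both sides of (\ref{InfDivInstEqCS}) for $J_m$ converge to the corresponding sides for $J$. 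The cone $\CP(\A)$ is closed, so it suffices to prove (\ref{InfDivInstEqCS}) for $J_m$.

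Each $J(B_j)$ is completely positive, so it has a Kraus decomposition $J(B_j)=\sum_k a_{jk}^*\bullet a_{jk}$. We may therefore assume
\[J=\sum_{r=1}^n\d_{x_r}\cdot a_r^*\bullet a_r .\]

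\textbf{Step 2 (single Kraus term).} For $J=a^*\bullet a=\iota(a^*\ten a)$ the definitions give
\[c_\ph=|\ph(a)|^2,\qquad l_\ph=\overline{\ph(a)}\,\th,\qquad \Qphi=\th^*\bullet\th,\qquad \th:=a-\ph(a)\one .\]
Write $f(x_r)=|f(x_r)|e^{i\psi_r}$, and set
\[w_r:=|f(x_r)|,\qquad \a_r:=e^{i\psi_r}\frac{x_r}{1+ix_r}\overline{\ph(a_r)},\qquad \th_r:=a_r-\ph(a_r)\one .\]
Then
\[\Deltatilde_\ph(f)=\sum_r w_r\a_r\th_r,\qquad \Gammatilde_\ph(|f|)=\sum_r w_r\,\th_r^*\bullet\th_r .\]
Because $|x/(1+ix)|^2=x^2/(1+x^2)$, we also get $c_\ph\bigl(\Sigmatilde(|f|)\bigr)=\sum_r w_r|\a_r|^2$.

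\textbf{Step 3 (the operator Cauchy--Schwarz, the main point).} We must show that
\[\Bigl(\sum_r w_r\a_r\th_r\Bigr)^*\bullet\Bigl(\sum_r w_r\a_r\th_r\Bigr)\le\Bigl(\sum_r w_r|\a_r|^2\Bigr)\sum_r w_r\,\th_r^*\bullet\th_r\]
in the completely positive ordering. I would check this on Choi matrices, which amounts to showing, for every $k$, every positive $Z\in M_d\ten M_k$ and every vector $\xi$, that
\[\Bigl\langle\xi,\bigl(\textstyle\sum w\a\th\bigr)^*Z\bigl(\textstyle\sum w\a\th\bigr)\xi\Bigr\rangle\le\Bigl(\textstyle\sum w|\a|^2\Bigr)\textstyle\sum_r w_r\bigl\langle\xi,\th_r^*Z\th_r\xi\bigr\rangle,\]
where each $\th_r$ stands for $\th_r\ten\one$. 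Setting $\eta_r:=Z^{1/2}(\th_r\ten\one)\xi$, the left side is $\norms{\sum_r w_r\a_r\eta_r}$ and the right side is $\bigl(\sum_r w_r|\a_r|^2\bigr)\sum_r w_r\norms{\eta_r}$. This is exactly Lemma \ref{InfDivInstThmVectorCS}, applied in the larger space $\CC^{dk}$ instead of $\CC^d$; its proof is unchanged.

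The only delicate points are bookkeeping. One must place the phase of $f$ inside $\a_r$ and not inside the weights, so that $|f|$ appears on the right side. One must also confirm that the $\bullet$ convention and the $\iota$ ordering (recall $\Mdop$) match $\th^*\bullet\th$ for $\Qphi(a^*\bullet a)$. If the direct Choi argument proves awkward, I would instead test the inequality on $z\ten\one_k$ with $z\ge0$ and use the standard characterization of complete positivity.
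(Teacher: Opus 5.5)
Your proof is correct and follows the paper's argument. You reduce to a discrete $J=\sum_r\d_{x_r}\cdot a_r^*\bullet a_r$, put the phase of $f$ into $\a_r$, and apply the Vector Cauchy--Schwarz Lemma~\ref{InfDivInstThmVectorCS} in a form that certifies complete positivity. The differences are cosmetic: the paper tests complete positivity via $\sum_{i,j}\inp{\psi_i}{T[z_i^*z_j]\psi_j}\ge0$ with $\th_j:=\sum_i z_ib_j\psi_i\in\CC^d$ rather than your $\eta_r=Z^{1/2}(\th_r\ten\one)\xi\in\CC^{dk}$, and it only asserts the reduction to discrete measures, which you justify by weak approximation.
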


\begin{proof}
It suffices to prove the inequality (\ref{InfDivInstEqCS}) for $J$ of the form $J(B)[z]=\son j a_j^*z a_j\cdot\d_{x_j}(B)$.
Write $b_j:=a_j-\ph(a_j)\cdot\one$.
Then the "superoperator" $\iota(a_j^*\ten a_j):z\mapsto a_j^*za_j$ gets mapped by $c_\ph$ to $|\ph(a_j)|^2$,
by $Q_\ph$ to $z\mapsto b_j^*zb_j$, and by $l_\ph$ to $\overline{\ph(a_j)}b_j$.
Hence $c_\ph\circ\Sigma$, $\Gamma_\ph$, and $\Delta_\ph$ can be written
\begin{eqnarray}
c_\ph\bigl(\Sigmatilde(f)\bigr)&=&\son j f(x_j)\frac{x_j^2}{1+x_j^2}|\ph(a_j)|^2\;;\nonumber\\
             \Gammatilde_\ph(f)[z]&=&\son j f(x_j)b_j^*z b_j\;;\label{EqGammaExpr}\\
        \Deltatilde_\ph(f)&=&\son j f(x_j)\frac{x_j}{1+ix_j}\overline{\ph(a_j)}b_j\;.\label{EqDeltaExpr}
\end{eqnarray}
Now, let $\tuple \psi n\in\CC^d$ and $\tuple x n\in \A$ be arbitrary. 
Let $f\in\C_b(\RR)$ and $f(x_j)=|f(x_j)|e^{i\eta_j}$.
In Lemma \ref{InfDivInstThmVectorCS}\ substitute:
   $$w_j:=|f(x_j)|\;,\quad
     \a_j:=\frac{x_j}{1+ix_j}\overline{\ph(a_j)}e^{i\eta_j}\;,\quad
     \th_j:=\son i z_i b_j \psi_i\;.$$
We thus obtain
\begin{eqnarray*}
&&\Norms{\son j f(x_j)\frac{x_j}{1+ix_j}\overline{\ph(a_j)}\som i z_ib_j\psi_i}\\
          &\le&\left(\son j |f(x_j)|\frac{x_j^2}{1+x_j^2}|\ph(a_j)|^2\right)
          \cdot\left(\son k |f(x_k)|\Norms{\som i z_ib_k\psi_i}\right)\;.
\end{eqnarray*}
I.e.,
   $$\Norms{\som i z_i\Deltatilde_\ph(f)\psi_i}\le c_\ph\bigl(\Sigmatilde(|f|)\bigr)\cdot\son k |f(x_k)|\Norms{\som i z_ib_k\psi_i}\;.$$
Again put differently,
\begin{eqnarray*}
\som {i,j} \Inp{\psi_i}{\Deltatilde_\ph(f)^*z_i^*z_j\Deltatilde_\ph(f)\psi_j}
     &\le& c_\ph\bigl(\Sigmatilde(|f|)\bigr)\cdot\som {i,j}\Inp{\psi_i}{\left(\son k |f(x_k)|b_k^*z_i^*z_jb_k\right)\psi_j}\\
     &=&c_\ph\bigl(\Sigmatilde(|f|)\bigr)\cdot\som {i,j}\Inp{\psi_i}{\Gammatilde_\ph(|f|)[z_i^*z_j]\psi_j}\;,
\end{eqnarray*}
which implies (\ref{InfDivInstEqCS}).
\end{proof}

\begin{proof}[Proof of Proposition \ref{InfDivInstThmSigmaGammaDelta}, Step 3:
convergence of $(\Delta_n^\ph)$]
Let $\e>0$, and let $K>0$ be such that both
$c_\ph\bigl(\Sigma_n[-K,K]^c\bigr)$ and $\Norm{\Gamma_n^\ph[-K,K]^c[\one]}$ are less than $\e$.
Then for all $f\in C_b(\RR)$ with $\norm f\le1$ and $\supp(f)\subset[-2K, 2K]^c$ we have by Lemma \ref{CSDelta},
   $$\Deltatilde_n^\ph(f)^*\Deltatilde_n^\ph(f)\le c_\ph\bigl(\Sigma_n(|f|)\bigr)\cdot\Gammatilde_n^\ph\bigl(|f|\bigr)[\one]\;,$$
So $\Norm{\Deltatilde_n^\ph(f)}<\e$, and it follows that $(\Delta_n^\ph)$ is tight.


\noindent
In order to apply L\'evy's Continuity Theorem \ref{ThmLevCont}, it remains to show convergence of the Fourier transforms.
We follow the pattern of the case $\Sigma$:
Consider the sequence of integral expressions
\begin{equation}\label{InfDivInstEqHConv}
H_n(\oo):=l_\ph\left(L_n(\oo)-2\int_0^1 (1-\nu)L_n(\oo-\nu)d\nu\right)\;.
\end{equation}
Again, since $L_n$ converges to $L$ uniformly on $[\oo-1,\oo]$ by Theorem \ref{ThmDeFinetti},
the integrals $H_n$ converge pointwise to the corresponding continuous function $H:\RR\to\L(\A)$.
And again $H_n$ is linear in $L_n$ and vanishes if $L_n$ is constant. 
If we substitute into (\ref{InfDivInstEqHConv}) the expression (\ref{InfDivInstEqLfromJ}) for $L_n$,
the second term of the latter again cancels, since it is constant.
The first term of (\ref{InfDivInstEqLfromJ}) yields, by Fubini,
\begin{eqnarray*}
H_n(\oo)&=&l_\ph\left(\intR e^{i\oo x}J_n(dx)-2\int_{0}^1(1-\nu)\left(\intR e^{i(\oo-\nu)x}J_n(dx)\right)d\nu\right)\\
            &=&\intR e^{i\oo x}l_\ph\bigl(J_n(dx)\bigr)\kappa(x)\\
            &=:&\intR e^{i\oo x}(\Delta^\ph_n)'(dx)\;,
\end{eqnarray*}
where
   $$\kappa(x):=1-2\int_0^1(1-\nu)e^{-i\nu x}d\nu=\frac2{x^2}\biggl(e^{-ix}-(1-ix-\half x^2)\biggr)\approx\third ix+\ldots$$
and where we have defined a sequence of finite $\A$-valued measures $(\Delta_n^\ph)'(dx):=\kappa(x) l_\ph\bigl(J_n(dx)\bigr)$.
Since their characteristic functions $H_n$ converge pointwise to a continuous function $H$,
by L\'evy's Continuity Theorem \ref{ThmLevCont} there exists a finite $\A$-valued measure $\Delta_\ph'$ such that $H=\Deltahat_\ph'$ and $(\Delta_n^\ph)'\tto\Delta_\ph'$ weakly.
Now, note that
   $$(\Delta_n^\ph)'(dx)=l_\ph\bigl(J_n(dx)\bigr)\k(x)=\frac x{1+ix} l_\ph\bigl(J_n(dx)\bigr)h(x)=h(x)\Delta_n^\ph(dx)\;,$$
where $h\in C(\Rbar)$ is given by
   $$h(x)=\begin{cases}\frac{1+ix}{x}\k(x)&\hbox{ if }x\notin\{0,\infty\}\\
                                    \frac i3\hbox{ if }x=0;\\
                                    i\hbox{ if }x=\infty.
                                                     \end{cases}$$
Again $h$ is invertible in $C(\Rbar)$.
If we now {\it define} $\Delta_\ph(dx)$ on $\Rbar$ as $h(x)^{-1}\cdot\Delta'_\ph(dx)$, then it follows for all $f\in C(\Rbar)$:
\begin{eqnarray*}
\int_\Rbar f(x)\Delta_n^\ph(dx)&=&\int_\Rbar f(x)h(x)^{-1}(\Delta_n^\ph)'(dx)\\
                                                &&\stackrel{n\to\infty}{\tto}\int_\Rbar f(x)h(x)^{-1}\Delta'_\ph(dx)=\int_\Rbar f(x)\Delta_\ph(dx)\;.
\end{eqnarray*}
I.e., $\displaystyle\wlim_{n\to\infty}\Delta_n^\ph=\Delta_\ph$ on $\Rbar$, and by Lemma \ref{LemTussen} also on $\RR$.

\noindent
Equation (\ref{InfDivDistEqDeltaph}) can be proved from the weak convergence $\Sigma_n\tto\Sigma$ by the same argument as was used to prove (\ref{InfDivInstEqGammaQu}) .

\noindent
To prove the inequality (\ref{InfDivInstEqIneqZero}),
define a sequence of functions $\bigl(f_m\in\Cb(\RR)\bigr)_{k\in\NN}$ by
   $$f_m(x):=\begin{cases}
             1-m|x|&\hbox{if }|x|\le\frac1m\;;\\
             0&\hbox{otherwise}\;.
             \end{cases}$$
Applying inequality (\ref{InfDivInstEqCS}) to $f=f_m$, 
and taking $m\to\infty$, we find (\ref{InfDivInstEqIneqZero}).

\medskip\noindent{\bf Step 4: Dependence of $\s_\ph\bigl(\Sigma(\{0\})\bigr)$ and $\Delta_\ph(\{0\})$ on $\ph$.}
Let $\tau$ be the trace state $a\mapsto\frac 1d\tr(a)$ on $\A$. Then, since $f_m(x)>0$ only if $|x|<\frac1m$,
$$Q_\tau\bigl(\Sigmatilde_n(f_m)\bigr)=\intR\frac{x^2}{1+x^2}f_m(x)Q_\tau\circ J_n(dx)\le\frac1{m^2}\Gamma^\tau_n(\RR).$$
Hence
$$Q_\tau\bigl(\Sigma(\{0\})\bigr)=\li m Q_\tau\bigl(\Sigmatilde(f_m)\bigr)=\li m\li n Q_\tau\bigl(\Sigmatilde_n(f_m)\bigr)=0.$$
Now, for a completely positive map $A$ on $\A$, say $A[z]=\sok j a_j^*za_j$, to have $Q_\tau(A)=0$, means that
   $$0=Q_\tau(A)[\one]=\sok j\bigl(a_j-\ph(a_j)\cdot\one\bigr)^*\bigl(a_j-\ph(a_j)\cdot\one\bigr)\;,$$
i.e., all the $a_j$ are multiples of $\one$, and $A$ is a multiple of $\id_\A$.
We conclude that
   $$\Sigma(\{0\})=\lambda\cdot\id_\A$$
 for some $\lambda\ge0$, and therefore
 $c_\ph\bigl(\Sigma(\{0\})\bigr)=\lambda\;,$
independent of $\ph$.

\smallskip\noindent
To study the behaviour of $\Delta_\ph(\{0\})$, we may assume that $J_n$ is of the form  (\ref{InfDivInstEqJDiscr}),
so, using (\ref{EqDeltaExpr}):
   $$\Deltatilde^\ph_n(f_m)=\sokn j f_m(x_j^n)\frac{x_j^n}{1+ix_j^n}\overline{\ph(a_j^n)}\bigl(a_j^n-\ph(a_j^n)\cdot\one\bigr)\;.$$
Abbreviating $a-\ph(a)\cdot\one$ to $q_\ph(a)$ for $a\in\A$, and $q_\ph(a_j^n)$ to $b_j^n$,
we may write for any three states $\ph$, $\th$, and $\chi$ on $\A$,
using $q_\ph\circ q_\th(a)=q_\ph(a)$ and $(\th-\chi)(a)=(\th-\chi)\bigl(q_\ph(a)\bigr)$:
\begin{eqnarray*}
\Norm{q_\ph\bigl(\Deltatilde_n^\th(f_m)-\Deltatilde_n^\chi(f_m)\bigr) }&=&\Norm{\sokn j f_m(x_j^n)\frac{x_j^n}{1+ix_j^n}\overline{(\th-\chi)(a_j^n)}q_\ph(a_j^n)}\\
             &\le&\frac1m\norm{\th-\chi}\sokn j\norms{q_\ph(a_j^n)}\le\frac2m\sokn j\norm{(b_j^n)^*b_j^n}\\
             &\le&\frac2m\sokn j\tr\bigl((b_j^n)^*b_j^n\bigr)=\frac2m\tr\Gamma^\ph_n(\RR)[\one]\;.
\end{eqnarray*}
In the last step (\ref{EqGammaExpr}) was used.
It follows that
   $$q_\ph\bigl(\Delta_\th(\{0\})-\Delta_\chi(\{0\})\bigr)=\li m\li n q_\ph\bigl(\Deltatilde_n^\th(f_m)-\Deltatilde_n^\chi(f_m)\bigr)=0\;.$$
For $a\in\A$ we have that $q_\ph(a)=0$ implies $a=\ph(a)\cdot\one$.
So $\Delta_\th(\{0\})$ and $\Delta_\chi(\{0\})$ differ only by a multiple of $\one$. 
\end{proof}

We have now accomplished that we can express any $L\in\Cbar$, i.e. any limit $L$ of  the form (\ref{InfDivInstLApprox}), in terms of integrals of {\it finite} measures:

\begin{pro}[Integral Expression for a Generator]\label{InfDivInstThmFinMeasRepr}
Let $\ph$ be a state on $\A$.
Then for every $L\in\Cbar$ there are $\a\in\RR$, $h=h^*\in \A$,
completely positive measures $\Sigma$, $\Gamma_\ph$, and an $\A$-valued measure $\Delta_\ph$ such that
for all $\oo\in\RR$, $x\in \A$,
\begin{eqnarray}
L(\oo)[z]= &&\intR e^{i\oo x}\Gamma_\ph(dx)[z]-\half\{\Gamma_\ph(dx)[\one],z\}\label{InfDivInstIntOrange}\\
          &+&\intR f_\oo(x)
             c_\ph\bigl(\Sigma(dx)\bigr)\cdot z\label{InfDivInstIntBlue}\\
          &+&\intR g_\oo^{+}(x)z\Delta_\ph(dx)+g_\oo^{-}(x)\Delta_\ph(dx)^*z\label{InfDivInstIntPurple}\\
          &+&i\a\oo z\label{InfDivInstIntGreenSpace}\\
          &+&i[h,z]\label{InfDivInstIntGreenStates}\;,
\end{eqnarray}
where for $\oo\in\RR$ the bounded continuous functions $f_\oo,g^\pm_\oo:\RR\to\CC$ are given by
\begin{eqnarray*}
f_\oo(x)&:=&\begin{cases}
             \left(e^{i\oo x}-1-\frac{i\oo x}{1+x^2}\right)\frac{1+x^2}{x^2}&\hbox{if }x\ne0\\
             -\half\oo^2&\hbox{if } x=0\;;
         \end{cases}\\
g_\oo^{\pm}(x)&:=&\begin{cases}
             \left(e^{i\oo x}-1\right)\frac{1\pm ix}{x}&\hbox{if }x\ne0\\
             i\oo&\hbox{if } x=0\;,
         \end{cases}
\end{eqnarray*}
such that, on $\Rp$ the measures $\Sigma$, $\Gamma_\ph$, and $\Delta_\ph$
are connected to a single, possibly infinite measure $J$ by (\ref{EqSigmaJ}), (\ref{InfDivInstEqGammaQu}), and
(\ref{InfDivDistEqDeltaph}). On $\{0\}$ they satisfy (\ref{InfDivInstEqIneqZero}).
\end{pro}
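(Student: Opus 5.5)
We approximate $L\in\Cbar$ by $L_n$ as in (\ref{InfDivInstEqLfromJ}) with discrete $J_n$ as in (\ref{InfDivInstEqJDiscr}), write $L_n$ algebraically in terms of $\Sigma_n,\Gamma_n^\ph,\Delta_n^\ph$ plus remainder terms, and pass to the limit using Proposition \ref{InfDivInstThmSigmaGammaDelta}. Throughout, $\ph$ is fixed.

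The key identity concerns a single jump term $\iota(a^*\ten a)$. Write $a=\ph(a)\one+b$ with $b=q_\ph(a)$, so that
$$a^*za=|\ph(a)|^2z+\overline{\ph(a)}zb+\ph(a)b^*z+b^*zb,$$
and similarly $a^*a=|\ph(a)|^2\one+\overline{\ph(a)}b+\ph(a)b^*+b^*b$. Substituting into $e^{i\oo x}a^*za-\half\{a^*a,z\}$ gives three groups of terms.
\begin{itemize}
\item The $b^*\cdot b$ part gives $e^{i\oo x}Q_\ph(\cdot)[z]-\half\{Q_\ph(\cdot)[\one],z\}$, i.e. the integrand of (\ref{InfDivInstIntOrange}) with $\Gamma_n^\ph$.
\item The scalar part gives $(e^{i\oo x}-1)|\ph(a)|^2z$, which equals $\bigl(f_\oo(x)+\frac{i\oo x}{1+x^2}\frac{1+x^2}{x^2}\bigr)\frac{x^2}{1+x^2}|\ph(a)|^2z$. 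This is $f_\oo\,c_\ph(\Sigma_n)z$ plus a linear term $i\oo x\,c_\ph(J_n)z/(1+x^2)$.
\item The cross part gives $(e^{i\oo x}-1)\overline{\ph(a)}zb+(e^{i\oo x}-1)\ph(a)b^*z$, plus the commutator leftover $\half\bigl[\overline{\ph(a)}b-\ph(a)b^*,z\bigr]$ coming from $-\half\{\cdot,z\}$ versus the one-sided products. Since $\Delta_n^\ph(dx)=\frac{x}{1+ix}l_\ph(J_n(dx))$ with $l_\ph(\iota(a^*\ten a))=\overline{\ph(a)}b$, we get $(e^{i\oo x}-1)\overline{\ph(a)}b=g^+_\oo(x)\Delta_n^\ph$, and the adjoint term uses $\Delta^*$ with $\frac{x}{1-ix}$, giving $g^-_\oo$.
\end{itemize}

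So $L_n(\oo)[z]$ equals the sum of (\ref{InfDivInstIntOrange})--(\ref{InfDivInstIntPurple}) for $(\Gamma_n^\ph,\Sigma_n,\Delta_n^\ph)$, plus $i\a_n\oo z$ with $\a_n=\int\frac{x}{1+x^2}c_\ph(J_n(dx))$ (real), plus $i[h_n,z]$ with $h_n$ self-adjoint (this is where $h_\ph$ of $J_n$ appears). One must check that $\a_n$ is real and that $h_n=h_n^*$; both are routine.

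The functions $e^{i\oo x}$, $f_\oo$ and $g^\pm_\oo$ are bounded and continuous on $\RR$, with the stated values at $0$ being the continuous extensions. Hence by the weak convergence in Proposition \ref{InfDivInstThmSigmaGammaDelta} the first three integrals converge, for each $\oo$, to the corresponding integrals against $\Gamma_\ph,\Sigma,\Delta_\ph$. For (\ref{InfDivInstIntOrange}) this needs weak convergence of $\Gamma^\ph_n(dx)[\one]$, which holds.

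Therefore the remainder $R_n(\oo)=i\a_n\oo\,\id+i[h_n,\cdot]$ converges pointwise, being $L_n(\oo)$ minus convergent terms. Evaluating at $\oo=0$ gives convergence of $i[h_n,\cdot]$. We may normalize $h_n$ to be traceless, since $h\mapsto[h,\cdot]$ is injective on traceless self-adjoint matrices, and then $h_n\to h$. Evaluating at $\oo=1$ then gives convergence of $\a_n\to\a$. The limit is exactly (\ref{InfDivInstIntGreenSpace}) and (\ref{InfDivInstIntGreenStates}).

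The relations on $\Rp$ and the inequality at $\{0\}$ are already provided by Proposition \ref{InfDivInstThmSigmaGammaDelta}. The main obstacle is the bookkeeping of the cross terms, in particular the leftover commutator and the adjoint. I would first verify this on a single jump operator to make sure the split produces exactly $g^\pm_\oo$ against $\Delta_\ph$ and $\Delta_\ph^*$, with the remaining part genuinely of the form $i[h_n,z]$.
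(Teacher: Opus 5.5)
Your proposal is correct and follows the paper's proof essentially step for step. Your single-jump expansion is exactly the five term equality of Lemma \ref{InfDivInstEqFiveTerm}; note that the commutator leftover carries a minus sign, $-\half[\overline{\ph(a)}b-\ph(a)b^*,z]=i[h_\ph(a^*\bullet a),z]$, which is harmless. After that, the weak convergence from Proposition \ref{InfDivInstThmSigmaGammaDelta} handles the first three terms, as in the paper. The only difference is how you split the remainder: the paper sets $z=\one$ to isolate $i\a_n\oo$ first and then simply names the limiting derivation $i[h,\cdot]$, whereas you set $\oo=0$ and normalize $h_n$ to be traceless, which has the small advantage of showing explicitly that the limit derivation is inner with $h=h^*$.
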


\noindent
In the proof we shall use the following calculation repeatedly.
In fact, it was the starting point for our proof of Theorem \ref{ThmHolevo}.

\begin{lem}[Five Term Equality for Finite Measures]\label{InfDivInstEqFiveTerm}
Let $J$ be a finite completely positive measure on $\RR$.
Then for all $\oo\in\RR$ and $z\in \A$ we have the following equality for finite $\A$-valued measures:
\begin{eqnarray}
e^{i\oo x}J(dx)[z]&-&\half\{J(dx)[\one],z\}\nonumber\\
             &=&e^{i\oo x}\Qphi\bigl(J(dx)\bigr)[z]-\half\{\Qphi\bigl(J(dx)\bigr)[\one],z\}\label{InfDivInstOrange}\\
             &+&\left(e^{i\oo x}-1-\frac{i\oo x}{1+x^2}\right)c_\ph\bigl(J(dx)\bigr)\cdot z\label{InfDivInstBlue}\\
             &+&\left(e^{i\oo x}-1\right)\bigl(z l_\ph\bigl(J(dx)\bigr)+\bigl(l_\ph\bigl(J(dx)\bigr)^*z\bigr)\label{InfDivInstPurple}\\
             &+&\frac{i\oo x}{1+x^2}c_\ph\bigl(J(dx)\bigr)\cdot z\label{InfDivInstGreenSpace}\\
             &+&i\bigl[h_\ph\bigl(J(dx)\bigr),z\bigr]\label{InfDivInstGreenStates}\;.
\end{eqnarray}
\end{lem}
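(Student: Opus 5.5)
The plan is to reduce the identity to a pointwise algebraic identity between superoperators and verify it on the generators $z\mapsto a^*za$ of the completely positive cone. Both sides of the claimed equality are of the form $\intR(\ldots)J(dx)$, and for each fixed $x$ and $\oo$ the integrand is a real-linear expression in the value $A=J(dx)\in\CP(\A)$. This holds because $c_\ph$, $Q_\ph$, $l_\ph$ and $h_\ph$ are linear on $\L(\A)$, and the only antilinear ingredient, $l_\ph(A)^*$, is real-linear. By Proposition \ref{Riesz} (or simply by approximating $J$ weakly by discrete measures of the form (\ref{InfDivInstEqJDiscr}), as in the proof of Theorem \ref{ThmDeFinetti}), it therefore suffices to prove, for every $x\in\RR$, $\oo\in\RR$ and every single Kraus term $A=\iota(a^*\ten a):z\mapsto a^*za$, the identity
$$e^{i\oo x}A[z]-\half\{A[\one],z\}=e^{i\oo x}\Qphi(A)[z]-\half\{\Qphi(A)[\one],z\}+(e^{i\oo x}-1)c_\ph(A)z+(e^{i\oo x}-1)\bigl(z\,l_\ph(A)+l_\ph(A)^*z\bigr)+i[h_\ph(A),z].$$
Here the two $c_\ph$-terms (\ref{InfDivInstBlue}) and (\ref{InfDivInstGreenSpace}) have already been added: the compensator $\frac{i\oo x}{1+x^2}$ cancels between them. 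Summing over Kraus terms with positive weights and integrating against $\d_{x_j}$ then gives the lemma.

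To verify the single-term identity I will write $\lambda:=\ph(a)$ and $b:=a-\lambda\one$, so $a=\lambda\one+b$ and $a^*za=|\lambda|^2z+\overline\lambda zb+\lambda b^*z+b^*zb$. From the definitions: $c_\ph(A)=\ph(a^*)\ph(a)=|\lambda|^2$, $\Qphi(A)[z]=b^*zb$, and $l_\ph(A)=\ph(a^*)(a-\ph(a)\one)=\overline\lambda b$. Since $A^*=\iota(a\ten a^*)$ by the computation in Section \ref{LinAlg}, we have $l_\ph(A^*)=\lambda b^*$, hence $h_\ph(A)=\frac i2(\overline\lambda b-\lambda b^*)$ and $i[h_\ph(A),z]=-\half\overline\lambda[b,z]+\half\lambda[b^*,z]$.

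The check then splits according to the four pieces of $a^*za$:
\begin{itemize}
\item The $b^*zb$ piece reproduces line (\ref{InfDivInstOrange}) exactly.
\item The $|\lambda|^2$ piece gives $(e^{i\oo x}-1)|\lambda|^2z$, which is the combined $c_\ph$-term.
\item For the $\overline\lambda$ piece, $e^{i\oo x}\overline\lambda zb-\half\overline\lambda(bz+zb)=(e^{i\oo x}-1)\overline\lambda zb-\half\overline\lambda[b,z]$.
\item Symmetrically, the $\lambda$ piece gives $(e^{i\oo x}-1)\lambda b^*z+\half\lambda[b^*,z]$.
\end{itemize}
The last two items together are precisely line (\ref{InfDivInstPurple}) plus line (\ref{InfDivInstGreenStates}).

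I do not expect a real obstacle, since this is bookkeeping. The points needing care are getting the correct $\ph(a^*)=\overline{\ph(a)}$ factor in $l_\ph$, identifying $\iota(a^*\ten a)^*$ correctly so that $h_\ph$ comes out as stated, and justifying the passage from single Kraus terms to general finite completely positive $J$. For the last point I will note that every $A\in\CP(\A)$ is a finite sum $\sum_j a_j^*\bullet a_j$, and that each side of the identity is additive in $A$.
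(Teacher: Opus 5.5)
Your proposal is correct and is essentially the paper's proof: the paper also reduces to $J=\sum_j a_j^*\bullet a_j\cdot\d_{x_j}$, substitutes $a=b+\ph(a)\one$ into the basic Davies generator $L_{x,a}$, and matches the pieces using the same values $c_\ph=|\ph(a)|^2$, $Q_\ph=b^*\bullet b$, $l_\ph=\overline{\ph(a)}\,b$ and $h_\ph=\frac i2\bigl(\overline{\ph(a)}\,b-\ph(a)\,b^*\bigr)$ that you compute. Merging the two $c_\ph$-lines and spelling out the reduction to discrete $J$ (additivity in $J(dx)$ plus weak approximation) are only cosmetic differences; strictly speaking $h_\ph$ is real-linear rather than complex-linear, but your argument only uses additivity, so this does not matter.
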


\begin{proof}
Consider the elementary jump generator of Proposition \ref{PropBasicDavies}:
   $$L_{x,a}(\oo)[z]=e^{i\oo x}a^*za-\half\{a^*a,z\}\;.$$
Let us define $b:=a-\ph(a)\cdot\one$, so that $a=b+\ph(a)\cdot\one$,
and substitute this on the right to find
\begin{eqnarray}\label{InfDivInstEqFiveTermab}
L_{x,a}(\oo)[z]&=&e^{i\oo x}\bigl(b+\ph(a)\cdot\one\bigr)^*z\bigl(b+\ph(a)\cdot\one\bigr)
                          -\half\bigl\{\bigl(b+\ph(a)\cdot\one\bigr)^*\bigl(b+\ph(a)\cdot\one\bigr),z\bigr\}\nonumber\\
           &=&{e^{i\oo x}b^*zb-\half\{b^*b,z\}}\nonumber\\
           &&{+\left(e^{i\oo x}-1-\frac{i\oo x}{1+x^2}\right)|\ph(a)|^2 z}\nonumber\\
           &&{+\left(e^{i\oo x}-1\right)\bigl(\overline{\ph(a)}zb+\ph(a)b^*z\bigr)}\nonumber\\
           &&{-\half\bigl[\overline{\ph(a)}b-\ph(a)b^*,z\bigr]}\nonumber\\
           &&{+\frac{i\oo x}{1+x^2}|\ph(a)|^2 z\;.}
\end{eqnarray}
Now, it suffices to prove the five term equality of Lemma \ref{InfDivInstEqFiveTerm} for $J$ of the form $\sum_{j=1}^k a_j^*\bullet a_j\cdot\d_{x_j}$.
In this lemma we then have on the left hand side:
\begin{eqnarray}\label{EqSubstDavies}
\int_B e^{i\oo x}J(dx)[z]-\half\{J(dx)[\one],z\}&=&\sum_{j:x_j\in B}e^{i\oo x_j}a_j^*za_j-\half\{a_j^*a_j,z\}\\ \nonumber
                                                 &=&\sum_{j:x_j\in B}L_{x_j,a_j}(\oo)[z]\;.
\end{eqnarray}
If again we write $b_j:=a_j-\ph(a_j)\cdot\one$,
and realize that $\Qphi(a_j^*\bullet a_j)=b_j^*\bullet b_j$, $l_\ph(a_j^*\bullet a_j)=\overline{\ph(a_j)}b_j$,
$h_\ph(a_j^*\bullet a_j)=\frac{i}{2}\bigl(\overline{\ph(a_j)}b_j-\ph(a_j)b^*\bigr)$ and $c_\ph(a_j^*\bullet a_j)=|\ph(a_j)|^2$,
then by applying (\ref{InfDivInstEqFiveTermab}) to each term of (\ref{EqSubstDavies}), and summing over $j$,
we obtain the right hand side of Lemma \ref{InfDivInstEqFiveTerm}.
\end{proof}
\goodbreak

\begin{proof}[Proof of Proposition \ref{InfDivInstThmFinMeasRepr}]
Let $L\in\Cbar$, i.e., 
$L$ is a limit of sums of elementary jump generators as in the Quantum de Finetti Theorem \ref{InfDivInstQuantDeFinetti}:
\begin{eqnarray*}
L(\oo)[z]&=&\li n L_n(\oo)[z]=\li n \sokn j L_{x_j^n,a_j^n}(\oo)[z]\\
              &=&\li n \intR e^{i\oo x}J_n(dx)[z]-\half\{J_n(dx)[\one],z\}\;,
\end{eqnarray*}
where $J_n$ is given by (\ref{InfDivInstEqJDiscr}).
We may now substitute $J_n$ for $J$ in the equality of Lemma \ref{InfDivInstEqFiveTerm},
and apply the weak convergence from Proposition \ref{InfDivInstThmSigmaGammaDelta} to the individual terms.
We obtain the following convergences:

\begin{itemize}
\item
Since the sequence of measures $\Qphi(J_n)$ converges weakly to $\Gamma_\ph$,
the integral over (\ref{InfDivInstOrange})
converges to (\ref{InfDivInstIntOrange}).
\item
Since the sequence $\Sigma_n(dx):=\frac{x^2}{1+x^2}J_n(dx)$ from (\ref{InfDivInstEqSigman}) converges weakly to $\Sigma$,
and $f_\oo$ is a bounded continuous function,
the integral over (\ref{InfDivInstBlue})
converges to (\ref{InfDivInstIntBlue}).
\item
Since the sequence $\Delta_n^\ph(dx):=\frac{x}{1+ix}l_\ph\bigl(J_n(dx)\bigr)$ from (\ref{InfDivInstEqDeltan}) converges weakly to $\Delta_\ph$,
and $g_\oo^\pm$ are bounded continuous functions,
the integral over (\ref{InfDivInstPurple})
converges to (\ref{InfDivInstIntPurple}).
\item
Since the left hand side of the equality of Lemma \ref{InfDivInstEqFiveTerm}, with $J$ replaced by $J_n$,
converges to $L(\oo)[z]$ by assumption, the sum of the last two terms (\ref{InfDivInstGreenSpace}) and (\ref{InfDivInstGreenStates})
\begin{equation}\label{InfDivInstEqGreenTerms}
\intR\frac{i\oo x}{1+x^2}c_\ph\bigl(J_n(dx)\bigr)\cdot z+i\intR\bigl[h_\ph\bigl(J_n(dx)\bigr),z\bigr]\;,
\end{equation}
must converge as well.
Putting $z=\one$, we see that the first of these must already converge on its own, say to $i\a\oo z$.
And finally, the integral over the last term must then also converge, to some derivation which we may denote by $i[h,z]$.
\end{itemize}
\end{proof}

\noindent{\bf Discussion.}
We now have expressed any $L\in\Cbar$ in terms of integrals over finite measures
$\Sigma$, $\Gamma_\ph$, and $\Delta_\ph$, which we know to be connected to an infinite measure $J$ by (\ref{EqSigmaJ}), (\ref{InfDivInstEqGammaQu}), and
(\ref{InfDivDistEqDeltaph}), and to satisfy (\ref{InfDivInstEqIneqZero}).
Although this, Proposition \ref{InfDivInstThmFinMeasRepr},
is an interesting result in itself,
the --- equivalent --- Holevo form $(b)$ of Theorem \ref{ThmHolevo}\ is much better,
since it needs only one measure, the jump measure $J$,
and highlights a singular contribution of the three measures at 0, yielding an entirely new Gauss-type contribution $\Ld$,
which is not immediately visible in Proposition \ref{InfDivInstThmFinMeasRepr}.
Our program is therefore to isolate the measures at $\{0\}$,
and to express the remaining part of the measures, the part on on $\RR\setminus\{0\}$, in terms of $J$.
This will bring the generator $L$ into the form $(b)$ of Theorem \ref{ThmHolevo}.

\begin{proof}{\bf (of the inclusion $\Cbar\subset\L$, finalization.)}
The contribution at 0, which we shall call $\Lz$, to the integral expression for $L$ in Proposition \ref{InfDivInstThmFinMeasRepr},
of the terms (\ref{InfDivInstIntOrange}), (\ref{InfDivInstIntBlue}), and (\ref{InfDivInstIntPurple}) is
\begin{eqnarray}\label{InfDivDistFormContZero}
\Lz(\oo)[z]&=&\Gamma_\ph(\{0\})[z]-\half\{\Gamma_\ph(\{0\})[\one],z\}\nonumber\\
                 &-&\half\oo^2c_\ph\bigl(\Sigma(\{0\})\bigr)+i\oo\bigl(\Delta_\ph(\{0\})^*z+z\Delta_\ph(\{0\})\bigr)\;.
\end{eqnarray}
Let us define $\s\ge0$ and $b\in \A$ by
\begin{eqnarray}\label{EqDefSigma}
\s^2&:=&c_\ph\bigl(\Sigma(\{0\})\bigr)\;;\\
b&:=&\begin{cases}\label{EqDefB}
         \frac1{\s^2}\Delta_\ph(\{0\})&\hbox{ if }\s^2>0\;;\\
         0&\hbox{ if }\s^2=0\;.
     \end{cases}\\\nonumber
\end{eqnarray}
(Note that, by the Cauchy-Schwarz inequality (\ref{InfDivInstEqIneqZero}), $\s^2=0$ implies that $\Delta_\ph(\{0\})$ vanishes, anyway.)
Then by the same inequality we have $\s^2\Gamma_\ph(\{0\})\ge \s^4b^*\bullet b$ in the completely positive ordering,
hence we obtain a completely positive map $B:\A\to \A$ by putting
$B[z]:=\Gamma_\ph(\{0\})[z]-\s^2b^*zb$, so that
   $$\Gamma_\ph(\{0\})[z]=B[z]+\s^2b^*zb\;.$$
Then let $\Ltilde_0[z]:=B[z]-\half\{B[\one],z\}$, a Lindblad generator.
We obtain:
\begin{eqnarray*}
\Gamma_\ph(\{0\})[z]-\half\{\Gamma_\ph(\{0\})[\one],z\}&=&B[z]+\s^2 b^*zb-\half\{B[\one]+\s^2 b^*b,z\}\\
                               &=&\Ltilde_0[z]+\s^2\bigl(b^*zb-\half\{b^*b,z\}\bigr)\;.
\end{eqnarray*}
We have thus brought the contribution (\ref{InfDivDistFormContZero}) into the form
   $$\Lz(\oo)=\Ltilde_0[z]+\s^2\bigl(b^*zb-\half\{b^*b,z\}+i\oo(b^*z+zb)-\half\oo^2z\bigr)=\Ltilde_0[z]+\Ld_{\s,b}(\oo)[z]\;,$$
and have established the diffusive term $\Ld$ in Theorem \ref{ThmHolevo}.

\noindent
In order to identify the part $\Lj$ in $(b)$ of the theorem,
we introduce the measure $J_\e(dx):=1_{\RR\setminus(-\e,\e)}\cdot J(dx)$.
As $\frac{1+x^2}{x^2}$ is bounded on $\RR\setminus(-\e,\e)$, this measure $J_\e$ is finite,
and we may apply the five term equality of Lemma \ref{InfDivInstEqFiveTerm}\ to it.
Now bring the terms (\ref{InfDivInstGreenSpace}) and (\ref{InfDivInstGreenStates}) to the other side of the equation in this lemma,
and integrate both the left and the right hand side over $\RR\setminus(-\e,\e)$.
Then on the right hand side only the finite measures (\ref{InfDivInstOrange}),   (\ref{InfDivInstBlue}), and  (\ref{InfDivInstPurple}) remain,
so we may take the limit $\e\downarrow0$, to find their integral over $\RR\setminus\{0\}$.
It follows that also the left hand sided converges as $\e\downarrow0$.
We obtain the equality of convergent integrals
\begin{eqnarray}\label{EqConvInt}
&&\intRn\biggl(\eiox J(dx)[z]-\ant{J(dx)}{z}-\frac{i\oo x}{1+x^2}c_\ph\bigl(J(dx)\bigr)\cdot z - i\bigl[h_\ph\bigl(J(dx)\bigr),z\bigr]\biggr)\nonumber\\
&=&\intRn Q_\ph\bigl(J(dx)\bigr)[z]-\half\ant{Q_\ph(dx)}{z}\nonumber\\
&&+\intRn\left(\eiox-1-\frac{i\oo x}{1+x^2}\right)c_\ph\bigl(J(dx)\bigr)\cdot z\nonumber\\
&&+\intRn\bigl(\eiox-1\bigr)\biggl(z l_\ph\bigl(J(dx)\bigr)+l_\ph\bigl(J(dx)\bigr)^*z\biggr)\;.
\end{eqnarray}
Now in Proposition \ref{InfDivInstThmFinMeasRepr},
bring the contribution $\Lz(\oo)[z]=\Ltilde_0[z]+\Ld_{\s,b}(\oo)[z]$ to $L$ at 0, as well as the terms $i\a\oo z$ and $i[h,z]$ to the left: 
\begin{eqnarray}\label{EqFiniteRight}
L(\oo)[z]&-&i\a\oo x-i[h,z]-\Ltilde_0[z]-\Ld_{\s,b}(\oo)[z]\nonumber\\
          &=&\intRp\left(\eiox\Gamma_\ph(dx)[z]-\half\{\Gamma_\ph(dx)[\one],z\}\right)\nonumber\\
          &&+\intRp f_\oo(x)c_\ph\bigl(\Sigma(dx)\bigr)\cdot z \nonumber\\
          &&+\intRp \bigl(g_\oo^{+}(x)z\Delta_\ph(dx)+g_\oo^{-}(x)\Delta_\ph(dx)^*z\bigr)\;.
\end{eqnarray}
By the equalities (\ref{EqSigmaJ}), (\ref{InfDivInstEqGammaQu}), and (\ref{InfDivDistEqDeltaph}), (\ref{EqFiniteRight}) equals the right hand side of (\ref{EqConvInt}).
So (\ref{EqFiniteRight}) also equals the left hand side of  (\ref{EqConvInt}).
If we put $L_0[z]:=\Ltilde_0[z]+i[h,z]$, then $L$ obtains the form (b) of Theorem \ref{ThmHolevo}. In other words: $L\in\L$.
We have shown that $\Cbar\subset\L$.
\end{proof}

\subsection{Proof that $\L\subset\Cbar$}
We show that each of the terms in the L\'evy-Khinchin form $(b)$ of Theorem \ref{ThmHolevo} can be approximated by sums of elementary generators.
For the free terms $i\a\oo z$, $i[h,z]$, $\Ltilde_0[z]$, and the diffusive generator $\Ld_{\s,b}(\oo)[z]$ this can be done explicitly. 

\begin{lem}\label{InfDivInstEqApproxFree}
For all $\oo\in\RR$ and $z\in \A$ we have
\begin{eqnarray*}
i\a\oo z&=&\li n n L_{\a/n,\one}(\oo)[z]\;;\\
i[h,z]&=&\li n n L_{0,u_n}(\oo)[z],\quad\hbox{where} \quad u_n:=e^{-ih/n}\;;\\
\Ltilde_0&=&\sum_j L_{0,a_j}(\oo)[z]\quad\hbox{for some}\quad \tuple a k\in \A\;;\\
\Ld_{\s,b}(\oo)&=\half\s^2&\li n n^2\left(L_{\frac1n,\one+\frac1n b}(\oo)[z]+L_{-\frac1n,\one-\frac1n b}(\oo)[z]\right)\;.
\end{eqnarray*}
\end{lem}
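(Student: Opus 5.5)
Each of the four identities is a direct computation from the definition $L_{x,a}(\oo)[z]=e^{i\oo x}a^*za-\half\{a^*a,z\}$ of Proposition \ref{PropBasicDavies}. In each case we substitute the indicated $x$ and $a$ and expand in powers of $1/n$. We then check that the leading terms cancel and that the surviving term, once multiplied by $n$ (resp.\ $n^2$), is the claimed one, with the remainder $O(1/n)$ for each fixed $\oo$ and $z$.

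\textbf{Shift term.} For $a=\one$ we get $L_{\a/n,\one}(\oo)[z]=(e^{i\oo\a/n}-1)z$. Then $n(e^{i\oo\a/n}-1)\to i\a\oo$.

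\textbf{Hamiltonian term.} Take $a=u_n=e^{-ih/n}$, which is unitary since $h=h^*$. Then $a^*a=\one$, and with $x=0$ we get $L_{0,u_n}(\oo)[z]=e^{ih/n}ze^{-ih/n}-z$. Expanding to first order gives $\frac in[h,z]+O(n^{-2})$, so $n$ times it tends to $i[h,z]$.

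\textbf{Lindblad term.} Here $\Ltilde_0[z]=B[z]-\half\{B[\one],z\}$ with $B$ completely positive, as constructed in the finalization of the proof that $\Cbar\subset\L$. By Kraus/Choi we may write $B[z]=\sum_j a_j^*za_j$. Then $\sum_jL_{0,a_j}(\oo)[z]=\sum_j\bigl(a_j^*za_j-\half\{a_j^*a_j,z\}\bigr)=\Ltilde_0[z]$ exactly. If one wants all of $L_0$, including a Hamiltonian part, this combines with the previous item. A general Lindblad generator can always be written as a completely positive part minus the anticommutator plus $i[h,\cdot]$, which is Lindblad's theorem and may be assumed.

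\textbf{Diffusive term.} This is the only step with genuine bookkeeping, and the main risk is a sign or factor error. Set $a_\pm=\one\pm\frac1nb$ and $x_\pm=\pm\frac1n$. Then
$$a_\pm^*za_\pm=z\pm\tfrac1n(b^*z+zb)+\tfrac1{n^2}b^*zb,\qquad a_\pm^*a_\pm=\one\pm\tfrac1n(b+b^*)+\tfrac1{n^2}b^*b.$$
Also $e^{\pm i\oo/n}=1\pm\frac{i\oo}n-\frac{\oo^2}{2n^2}+O(n^{-3})$.

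We multiply out and sum the $+$ and $-$ contributions; odd powers of $1/n$ cancel by the symmetry. The $O(1)$ terms give $2z-\{\one,z\}=0$. The $1/n$ terms cancel. The terms $\pm\frac1n(b^*z+zb)$ and $\mp\frac1n\half\{b+b^*,z\}$ need a separate check: they do not cancel individually, but since they are odd they vanish in the sum.

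The $1/n^2$ terms are $2b^*zb-\{b^*b,z\}$ from the quadratic pieces, $2i\oo(b^*z+zb)$ from the cross terms $(\pm\frac{i\oo}n)(\pm\frac1n(b^*z+zb))$, and $-\oo^2z$ from the exponential. Their sum is $2\bigl(b^*zb-\half\{b^*b,z\}+i\oo(b^*z+zb)-\half\oo^2z\bigr)$. Multiplying by $n^2$ and by $\half\s^2$ gives exactly $\Ld_{\s,b}(\oo)[z]$, and the remainder is $O(1/n)$.

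Finally, every approximant is a nonnegative combination of basic Davies generators. The convergence is pointwise in $\oo$, and in fact uniform on compacts, to a continuous limit. So each term lies in $\Cbar$, which is what is needed for the inclusion $\L\subset\Cbar$.
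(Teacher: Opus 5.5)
Your proposal is correct and takes the same route as the paper, which simply declares these four identities ``straightforward verifications'' from the definition of $L_{x,a}$. Your expansions carry out exactly those verifications. This includes the even-in-$\frac1n$ cancellation and the $\half\s^2 n^2$ bookkeeping for $\Ld_{\s,b}$, the use of $h=h^*$ to make $u_n$ unitary, and the Choi--Kraus decomposition of the completely positive part $B$ that defines $\Ltilde_0$.
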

\begin{proof}
These are straightforward verifications.
\end{proof}

\noindent
Also the last term $\Lj_J$ in $(b)$ of Theorem \ref{ThmHolevo}\ lies in $\Cbar$:
\begin{lem}
If $J$ is a completely positive measure on $\RR$, satisfying the conditions (\ref{InfDivInstCondOutcome}) and (\ref{InfDivInstCondState}),
then the integral defining $\Lj_J$ in Theorem \ref{ThmHolevo} converges,
and $\Lj_J$ can be approximated pointwise by sums of elementary generators.
\end{lem}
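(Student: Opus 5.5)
The plan is to truncate $J$ away from the origin and reuse the five term equality. For $\e>0$ put $J_\e(dx):=1_{\RR\setminus(-\e,\e)}(x)\,J(dx)$. Condition (\ref{InfDivInstCondOutcome}) makes $J_\e$ a finite completely positive measure, because $\frac{1+x^2}{x^2}\le\frac{1+\e^2}{\e^2}$ off $(-\e,\e)$. Lemma \ref{InfDivInstEqFiveTerm} then applies to $J_\e$, and we move the terms (\ref{InfDivInstGreenSpace}) and (\ref{InfDivInstGreenStates}) to the left. This writes the truncated integrand of $\Lj_J$ over $\RR\setminus(-\e,\e)$ as the sum of the integrals of (\ref{InfDivInstOrange}), (\ref{InfDivInstBlue}) and (\ref{InfDivInstPurple}) against $J_\e$.

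Convergence of $\Lj_J$ then reduces to showing that these three pieces are integrals of bounded continuous functions against finite measures. The piece (\ref{InfDivInstOrange}) is integrated against $\Gamma_\ph(dx):=\Qphi(J(dx))$, which is finite by (\ref{InfDivInstCondState}). The piece (\ref{InfDivInstBlue}) equals $f_\oo(x)\,c_\ph(\Sigma(dx))$ with $\Sigma(dx):=\frac{x^2}{1+x^2}J(dx)$, which is finite by (\ref{InfDivInstCondOutcome}). The piece (\ref{InfDivInstPurple}) equals $g^+_\oo(x)\,z\,\Delta_\ph(dx)+g^-_\oo(x)\,\Delta_\ph(dx)^*z$ with $\Delta_\ph(dx)=\frac{x}{1+ix}l_\ph(J(dx))$.

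The one genuinely new point is that $\Delta_\ph$ has finite total variation on $\Rp$. I would get this from Lemma \ref{CSDelta} applied to the finite measures $J_\e$. That lemma gives $\Deltatilde_\ph(f)^*\Deltatilde_\ph(f)\le c_\ph(\Sigmatilde(|f|))\,\Gammatilde_\ph(|f|)[\one]$, uniformly in $\e$. Taking suprema over $\norm f\le1$ and summing over a finite partition bounds the total variation. Alternatively, Lemma \ref{InfDivInstThmVectorCS} gives the partition bound directly, namely $|\Delta_\ph|(B)^2\lesssim c_\ph(\Sigma(B))\,\norm{\Gamma_\ph(B)[\one]}$ up to dimension constants. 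With all three measures finite, dominated convergence lets $\e\downarrow0$, so the defining integral of $\Lj_J$ converges and equals the right hand side of (\ref{EqConvInt}).

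For the approximation statement, first note that $\Lj_{J_\e}\to\Lj_J$ pointwise, by the same dominated convergence. By the five term equality, $\Lj_{J_\e}$ is the finite-measure expression $\intR e^{i\oo x}J_\e(dx)[z]-\half\{J_\e(dx)[\one],z\}$ minus the terms $\frac{i\oo x}{1+x^2}c_\ph(J_\e(dx))z$ and $i[h_\ph(J_\e(dx)),z]$. These subtracted terms are of the form $i\a_\e\oo z$ and $i[h_\e,z]$, with $\a_\e\in\RR$ and $h_\e=h_\e^*$. Each of them lies in $\Cbar$ by Lemma \ref{InfDivInstEqApproxFree}, taking $-\a_\e$ and $-h_\e$.

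The first expression lies in $\Cbar$ because the finite completely positive measure $J_\e$ can be approximated weakly on $\RR$ by discrete measures of the form (\ref{InfDivInstEqJDiscr}), for example by Riemann-type discretization plus Kraus decomposition of each atom. Under this approximation the generators converge pointwise, since $e^{i\oo x}$ is bounded and continuous. Since $\Cbar$ is a closed cone (it equals $\G$ by Theorem \ref{ThmDeFinetti}), $\Lj_{J_\e}\in\Cbar$, and closedness again gives $\Lj_J\in\Cbar$. I expect the main obstacle to be the total-variation bound on $\Delta_\ph$ near $0$. That is where both integrability conditions (\ref{InfDivInstCondOutcome}) and (\ref{InfDivInstCondState}) are needed jointly via Cauchy–Schwarz, rather than either one alone.
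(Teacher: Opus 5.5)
Your proposal is correct and follows essentially the same route as the paper. You truncate to $J_\e$, apply the five term equality of Lemma \ref{InfDivInstEqFiveTerm}, and control the three remaining pieces via (\ref{InfDivInstCondOutcome}), (\ref{InfDivInstCondState}) and the Cauchy--Schwarz inequality (\ref{InfDivInstEqCS}); you then obtain $\Lj_{J_\e}\in\Cbar$ from a discrete approximation of $J_\e$ together with Lemma \ref{InfDivInstEqApproxFree}, and pass to $\e\downarrow0$. Your explicit total-variation bound on $\Delta_\ph$, and your remark that one must use $-\a_\e$ and $-h_\e$ because $\Cbar$ is only a cone, spell out steps the paper compresses into a single line each.
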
 

\begin{proof}
For $\e>0$, let $J_\e$ denote the finite restriction of $J$ to $\RR\setminus(-\e,\e)$.
Consider the function $\Lj_{J_\e}$, which can be written
   $$\Lj_{J_\e}(\oo)=A_\e(\oo)-B_\e(\oo)\;,$$
where $A_\e,B_\e:\RR\to\L(\A)$ are given by
\begin{eqnarray*}
A_\e(\oo)[z]&:=&\int_{\RR\setminus(-\e,\e)}\biggl(e^{i\oo x}J(dx)[z]-\{J(dx)[\one],z\}\biggr)\quad\hbox{and}\\
B_\e(\oo)[z]&:=&\int_{\RR\setminus(-\e,\e)}\left(\frac{i\oo x}{1+x^2}c_\ph\bigl(J(dx)\bigr)\cdot z+i[h_\ph\bigl(J(dx)\bigr),z]\right)=:i\a_\e\oo z+i[h_\e,z]\;.
\end{eqnarray*}
Applying the five term equality Lemma \ref{InfDivInstEqFiveTerm} to the finite measure $J_\e$,
we see that $\Lj_{J_\e}$ is given by the sum of the integrals of the terms
(\ref{InfDivInstOrange}), (\ref{InfDivInstBlue}), and (\ref{InfDivInstPurple}), with $J$ replaced by $J_\e$.
Now, by assumptions (\ref{InfDivInstCondOutcome}) and (\ref{InfDivInstCondState})
the terms (\ref{InfDivInstOrange}) and (\ref{InfDivInstBlue}) converge as $\e\downarrow0$.
By Cauchy-Schwarz, (\ref{InfDivInstEqCS}), so does (\ref{InfDivInstPurple}) 
Therefore also, for all $\oo\in\RR$,
    $$\lim_{\e\downarrow0}\Lj_{J_\e}(\oo)=\Lj_J(\oo)\;.$$
Now, by approximating $J_\e$ with measures of the form $\sum_{j=1}^ka_j^*\bullet a_j\cdot\d_{x_j}$,
we can approximate $A_\e$ pointwise with elements of $\C$, so that $A_\e\in\Cbar$.
By Lemma \ref{InfDivInstEqApproxFree}, we have $B_\e\in\Cbar$.
Therefore $\Lj_{J_\e}\in\Cbar$, and by the above also $\Lj_J\in\Cbar$.
\end{proof}


\section{Equivalence of Integrability Conditions}\label{InfDivInstIntCond}
In our formulations of Holevo's Theorem \ref{ThmHolevo} two conditions, (\ref{InfDivInstCondOutcome}) and (\ref{InfDivInstCondState}), 
were imposed on the L\'evy measure $J$ concerning jumps in observation space and jumps in state space respectively.
Also in Holevo's paper \cite{Hol5} two conditions were imposed. 
Our second condition seems to differ considerably from Holevo's. See Corollary \ref{CorEquivCond} (a).
Here we shall show that, despite appearances, the conditions are actually equivalent.
We first compare the two terminologies:

\begin{pro}\label{ContQuantTraThmEqCond}
Let $A_1,A_2,\ldots$ be an increasing sequence of completely positive maps on $\A$,
and let $\ph$ be a state on $\A$. Then the following are equivalent:

\begin{itemize}
\item[(a)]
For all $\tuple\psi m \in\CC^d$ and $\tuple z m \in \A$ satisfying $\sum_{i=1}^m z_i\psi_i=0$, we have
   $$\li n \som i\som j\inp{\psi_i}{A_n[z_i^*z_j]\psi_j}<\infty\;;$$
\item[(b)]
$\displaystyle\li n \norm{Q_\ph(A_n)[\one]}<\infty$.
\end{itemize}
\end{pro}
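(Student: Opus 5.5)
Write each completely positive map in Kraus form, $A_n[z]=\sum_k a_{nk}^*za_{nk}$, and set $b_{nk}:=a_{nk}-\ph(a_{nk})\cdot\one$. Then, as in the proof of Lemma \ref{CSDelta},
$$Q_\ph(A_n)[z]=\sum_k b_{nk}^*zb_{nk}.$$
The basic observation is that under the constraint $\sum_i z_i\psi_i=0$ the quadratic form in (a) only sees the $b$'s. Indeed,
$$\sum_{i,j}\inp{\psi_i}{A_n[z_i^*z_j]\psi_j}=\sum_k\Norms{\sum_i z_ia_{nk}\psi_i},$$
and $\sum_i z_ia_{nk}\psi_i=\sum_i z_ib_{nk}\psi_i+\ph(a_{nk})\sum_iz_i\psi_i=\sum_iz_ib_{nk}\psi_i$. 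Hence, on the constrained set, the form of $A_n$ in (a) coincides with the form of $Q_\ph(A_n)$. Since $A_n$ increases in the completely positive order, each such form is nondecreasing in $n$, so every limit in (a) exists in $[0,\infty]$.

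\textbf{Step (b)$\implies$(a).} Using the identity above,
$$\sum_k\Norms{\sum_iz_ib_{nk}\psi_i}\le\Bigl(\sum_i\norm{z_i}\norm{\psi_i}\Bigr)^2\cdot\max_k\ldots$$
is too crude. Instead I bound each $k$-term by Cauchy–Schwarz: $\norms{\sum_i z_ib\psi_i}\le m\sum_i\norm{z_i}^2\norms{b\psi_i}$. Summing over $k$ gives
$$\le m\sum_i\norm{z_i}^2\inp{\psi_i}{Q_\ph(A_n)[\one]\psi_i}\le m\sum_i\norm{z_i}^2\norms{\psi_i}\cdot\norm{Q_\ph(A_n)[\one]},$$
which is bounded uniformly in $n$. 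The constraint is not even used in this direction.

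\textbf{Step (a)$\implies$(b).} Here I must choose constrained data that reproduce $Q_\ph(A_n)[\one]$. Take the state $\ph$ in the form $\ph(a)=\sum_l\inp{\chi_l}{a\chi_l}$ (a density-matrix decomposition, or a purification), and let $e_1,\ldots,e_d$ be an orthonormal basis. For fixed $p$ and $l$, consider the pair $(z_1,\psi_1)=(\ket{e_p}\bra{\chi_l},\chi_l)$ and $(z_2,\psi_2)=(-\one,\ket{e_p}\norms{\chi_l})$, adjusted so that $z_1\psi_1+z_2\psi_2=0$. Then
$$\sum_iz_ia\psi_i=\inp{\chi_l}{a\chi_l}e_p-a e_p\norms{\chi_l}.$$
Summing the resulting squared norms over $l$ with the chosen weights should give $\norms{(\ph(a)-a)e_p}$, and hence, summing over $p$ and $k$, something like $\tr Q_\ph(A_n)[\one]$, which dominates $\norm{Q_\ph(A_n)[\one]}$. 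The sum over $l$ is not a single constrained configuration, but finitely many applications of (a), each finite, suffice.

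\textbf{Main obstacle.} The key step is this choice of constrained vectors. The difficulty is that $\ph(a)$ is a convex combination over $l$, while squared norms are convex, so the sum over $l$ gives $\sum_l w_l\norms{(\inp{\chi_l}{a\chi_l}-a)e_p}$ rather than the desired term. I would handle this by proving (b) for a pure state $\ph$ first, where the construction is exact. The general case then follows from the pure case: $Q_\ph(A)[\one]$ changes by a bounded amount under a change of $\ph$. Concretely, I would use the identity $Q_\ph(A)[\one]=Q_\theta(A)[\one]+{}$cross terms, and estimate the cross terms by Cauchy–Schwarz against $\tr Q_\theta(A)[\one]$, in the same way as in Step 4 of Proposition \ref{InfDivInstThmSigmaGammaDelta}.
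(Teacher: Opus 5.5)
Your proof is correct. The direction (b)$\Rightarrow$(a) is the paper's: on data with $\sum_i z_i\psi_i=0$, the form of $A_n$ coincides with that of $Q_\ph(A_n)$. Your Cauchy--Schwarz estimate makes explicit the bound the paper leaves implicit. Your aside that ``the constraint is not even used'' is wrong, though. The constraint is exactly what gives that identity, and without it (b) implies nothing: $A_n=n\,\id_\A$ has $Q_\ph(A_n)=0$, yet $m=1$, $z_1=\one$, $\psi_1\neq0$ gives the form $n\norms{\psi_1}\to\infty$.

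For (a)$\Rightarrow$(b) the routes differ. The paper handles an arbitrary $\ph$ in one stroke. It takes an eigenbasis $\psi_1,\ldots,\psi_d$ of the density matrix of $\ph$, with eigenvalues $r_j$, and sets $z^{(i)}_j=\delta_{ij}\one-r_j\ket{\psi_i}\bra{\psi_j}$. Then $\sum_j z^{(i)}_j\psi_j=0$ and $\sum_j z^{(i)}_ja\psi_j=\bigl(a-\ph(a)\one\bigr)\psi_i$ exactly, and summing the $d$ resulting forms over $i$ gives precisely $\tr Q_\ph(A_n)[\one]$. Placing the weights $r_j$ inside the $z$'s, with all eigenvectors in one configuration, is what avoids any convexity loss.

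Your pure-state configuration is this construction with $r=(1,0,\ldots,0)$. Your transfer to general $\ph$ is sound: with $b_k=a_k-\theta(a_k)\one$ and $c_k=(\theta-\ph)(b_k)$, so that $|c_k|\le2\norm{b_k}$, one gets $Q_\ph(A)[\one]\le2Q_\theta(A)[\one]+8\,\tr\bigl(Q_\theta(A)[\one]\bigr)\one$. This buys the side fact that (b) does not depend on $\ph$, at the cost of an extra estimate.

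However, the obstacle you named is not real. For (a)$\Rightarrow$(b) you only need an upper bound, and convexity goes your way. Write $\ph=\sum_l w_l\theta_l$ with $w_l=\norms{\chi_l}$ and $\theta_l$ the vector state of $\chi_l/\norm{\chi_l}$. Then
\[
\tr Q_\ph(A_n)[\one]=\sum_{k,p}\Norms{\sum_l w_l\bigl(a_{nk}-\theta_l(a_{nk})\one\bigr)e_p}\le\sum_l w_l\sum_{p,k}\Norms{\bigl(a_{nk}-\theta_l(a_{nk})\one\bigr)e_p}.
\]
The right-hand side is a finite positive combination of the forms from your first (normalized) construction. So that construction already suffices, without the pure-state detour.
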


\begin{proof}
$(b)\Longrightarrow(a)$:
First let $A:\A\to \A:z\mapsto a^*za$ for some $a\in \A$.
Then we have, for $\tuple\psi m$ and $\tuple z m$ as in $(a)$:
\begin{eqnarray}\label{ContQuantTraQcalc}
\som i\som j\inp{\psi_i}{Q_\ph(A)[z_i^*z_j]\psi_j}&=&\som i\som j\inp{\psi_i}{\Pfi{a}^* z_i^*z_j \Pfi{a}\psi_j}\nonumber\\
   &=&\Norms{\som j z_j\Pfi{a}\psi_j}
    =\Norms{\som j z_j a \psi_j}\nonumber\\
   &=&\som i\som j\inp{\psi_i}{A[z_i^*z_j]\psi_j}\;.
\end{eqnarray}
By linearity, the equality between the left and the right hand sides extends to all $A\in\CP(\A)$.
Hence the limit expression in $(a)$ equals the same expression with $A_n$ replaced by $Q_\ph(A_n)$,
which by $(b)$ is bounded above for all choices of $\tuple\psi m$ and $\tuple z m$.
It follows that the limit in $(a)$ is finite.

\noindent
$(a)\Longrightarrow(b)$:
Let $\tuple\psi d$ be an orthonormal basis of eigenvectors of the density matrix of $\ph$,
so that we may write
   $$\ph(a)=\sod i r_i\inp{\psi_i}{a\psi_i}$$
for some $\tuple r d\ge0$.
Now let for $i,j=1,\ldots d$ the matrix $z_j^{(i)}\in\A$ be given by
   $$z_j^{(i)}:=\d_{ij}\cdot\one-r_j\ket{\psi_i}\bra{\psi_j}\;.$$
Then, for $a\in \A$,
  $$\sod j z_j^{(i)} a\psi_j=\sod j\bigl(\d_{ij}\cdot\one-r_j\ket{\psi_i}\bra{\psi_j}\bigr)a\psi_j
                          =\Pfi{a}\psi_i\;.$$
Hence, if $A[z]=a^*za$, we find, applying part of (\ref{ContQuantTraQcalc}):
\begin{eqnarray*}
\sod i\left(\sod j\sod k\inp{\psi_j}{A[(z_j^{(i)})^*z_k^{(i)}]\psi_k}\right)&=&\sod i\Norms{\sod j z_j^{(i)}a\psi_j}
                   =\sod i\Norms{\Pfi{a}\psi_i}\\
                  &=&\sod i\inp{\psi_i}{\Pfi{a}^*\Pfi{a}\psi_i}\\
                  &=&\sod i\inp{\psi_i}{Q_\ph(A)[\one]\psi_i}
                   =\tr\bigl(Q_\ph(A)[\one]\bigr)\;.
\end{eqnarray*}
Again the equality of the left and the right hand sides extends to all $A\in\CP(\A)$.
If we now replace $A$ by $A_n$, and let $n$ tend to infinity, then the left hand side remains bounded by assumption $(a)$.
Hence so does the right hand side. The conclusion follows since $\norm{c}\le\tr(c)$ for all positive $c\in \A$.
\end{proof}

\begin{cor}\label{CorEquivCond}
Let $J$ be a completely positive measure on $\RR\setminus\{0\}$ satisfying
\begin{equation}\label{ContQuantTraEqJumpCond}
\int_{\RR\setminus\{0\}}\frac{x^2}{1+x^2}J(dx)<\infty\;,
\end{equation}
and let $\ph$ be a state on $\A$.
Then the following conditions are equivalent.
\begin{itemize}
\item[(a)]{\bf (Holevo's second condition)}
For all $\tupple\psi m\in\CC^d$ and $\tupple z m\in \A$ satisfying $\sod i z_i\psi_i=0$:
   $$\som i\som j\int_{\RR\setminus\{0\}}\inp{\psi_i}{J(dx)[z_i^*z_j]\psi_k}<\infty\;;$$
\item[(b)]{\bf (The jump condition (\ref{InfDivInstCondState}) in state space)}:
$$\intRn Q_\ph\bigl(J(dx)\bigr)<\infty\;.$$
\end{itemize}
\end{cor}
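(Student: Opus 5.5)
The plan is to reduce the Corollary to Proposition \ref{ContQuantTraThmEqCond} by exhausting $\RR\setminus\{0\}$ with the finite pieces $\RR\setminus(-\e,\e)$. Choose a sequence $\e_n\downarrow0$ and set
$$A_n:=J\bigl(\RR\setminus(-\e_n,\e_n)\bigr)=\int_{|x|\ge\e_n}J(dx)\;.$$
By the jump condition (\ref{ContQuantTraEqJumpCond}), each $A_n$ is a \emph{finite} completely positive map, because $\frac{1+x^2}{x^2}\le\frac{1+\e_n^2}{\e_n^2}$ on $|x|\ge\e_n$. Since the sets increase and $J$ takes values in $\CP(\A)$, the sequence $(A_n)$ is increasing in the completely positive ordering. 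This is exactly the situation of Proposition \ref{ContQuantTraThmEqCond}.

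The next step is to identify both conditions of the Corollary with the corresponding conditions of that Proposition.

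For (b): $Q_\ph$ is linear. Since finite measures of the form $\sum_j a_j^*\bullet a_j\,\d_{x_j}$ are sent to $\sum_j b_j^*\bullet b_j\,\d_{x_j}$, the map $Q_\ph$ sends completely positive maps to completely positive maps. Hence $Q_\ph(A_n)=\int_{|x|\ge\e_n}Q_\ph\bigl(J(dx)\bigr)$ is an increasing sequence of completely positive maps. By monotone convergence, finiteness of the CP-valued integral $\intRn Q_\ph\bigl(J(dx)\bigr)$ is equivalent to boundedness of the increasing sequence $Q_\ph(A_n)$. Boundedness of an increasing sequence of CP maps is in turn equivalent to boundedness of $\norm{Q_\ph(A_n)[\one]}$, since by Lemma \ref{ThmTotVarM} the trace norm of a CP map is $\frac1d\tr$ of its value at $\one$. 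The limit in (b) of the Proposition is monotone, so it exists.

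For (a): for fixed $\psi_i,z_i$ with $\sum_i z_i\psi_i=0$, the quantity $\sum_{i,j}\inp{\psi_i}{A[z_i^*z_j]\psi_j}$ is nonnegative on CP maps. Indeed, (\ref{ContQuantTraQcalc}) shows it equals $\sum_k\norms{\sum_j z_j a_k\psi_j}$ for $A=\sum_k a_k^*\bullet a_k$. Hence, applied to $A_n$, it is increasing in $n$, and by monotone convergence its limit equals the integral over $\RR\setminus\{0\}$ appearing in Holevo's condition (a). Thus (a) of the Corollary is exactly (a) of the Proposition, reading the index typo $\psi_k$ as $\psi_j$.

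With both identifications in hand, the equivalence follows immediately from Proposition \ref{ContQuantTraThmEqCond}.

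The main obstacle is the measure-theoretic bookkeeping for the possibly infinite CP-valued measure $J$. One must make precise what ``$\intRn Q_\ph(J(dx))<\infty$'' means: I take it as the monotone limit of the integrals over $|x|\ge\e$. One must also justify that $Q_\ph$ commutes with integration over the finite pieces, which holds because $Q_\ph$ is linear and continuous on the finite-dimensional space $\L(\A)$. Finally, the limit must not depend on the choice of the sequence $\e_n$, which follows from monotonicity.
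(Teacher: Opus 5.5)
Your proposal is correct and is essentially the paper's proof. You exhaust $\RR\setminus\{0\}$ by the sets $\RR\setminus(-\e_n,\e_n)$ (the paper takes $\e_n=\frac1n$), use the jump condition to see that $A_n:=J\bigl(\RR\setminus(-\e_n,\e_n)\bigr)$ is an increasing sequence of finite CP maps, and invoke Proposition~\ref{ContQuantTraThmEqCond}; the paper leaves implicit the identification of (a) and (b) with the Proposition's conditions, which you correctly supply via the monotonicity of $A\mapsto\sum_{i,j}\inp{\psi_i}{A[z_i^*z_j]\psi_j}$ on CP maps, the fact that $Q_\ph$ preserves complete positivity, and the formula $\norm{A}=\frac1d\tr A[\one]$ for CP $A$.
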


\begin{proof}
By (\ref{ContQuantTraEqJumpCond}), for all $n\ge1$ the CP map
   $$A_n:=J\left(\RR\setminus\left(-\frac1n,\frac1n\right)\right)$$
is well-defined. Clearly $A_1\le A_2\le\ldots$ in the completely positive ordering.
Then the statement follows from Proposition \ref{ContQuantTraThmEqCond}.
\end{proof}


\section*{Acknowledgments}
This paper grew from a chapter in the book\cite{Buch} on Quantum Markov Processes
that Burkhard K\"ummerer and the author are preparing.
I thank Burkhard and his wife Andrea for their kind hospitality, streching over many years, in their home in Seeheim, Germany.

\noindent
I also wish to thank my cousin, the concert pianist Ellen Corver, who expressed her fondness of science by lending me her holiday house in the Marche, Italy, for the preparation of this work.


\end{document}